\newclass{\MSO}{MSO}
\newclass{\WMSO}{WMSO}
\newclass{\SO}{SO}
\newcommand{\ignore}[1]{}
\let\originalleft\left
\let\originalright\right
\renewcommand{\left}{\mathopen{}\mathclose\bgroup\originalleft}
\renewcommand{\right}{\aftergroup\egroup\originalright}
\newcommand{\mb}{\mathbb}
\newcommand{\mc}{\mathcal}
\newcommand{\fo}[1]{\FO\left[#1\right]}
\newcommand{\bigO}[1]{O\left(#1\right)}
\DeclareDocumentCommand{\wordToReal}{ O{b}  m}{\wordToNumber[#1]{#2}^{\R}}
\DeclareDocumentCommand{\wordToFractional}{ O{b}  m}{\wordToNumber[#1]{#2}^{\fract}}
\DeclareDocumentCommand{\wordToNatural}{ O{b}  m}{\wordToNumber[#1]{#2}^{\nat}}
\DeclareDocumentCommand{\wordToNumber}{ O{b}  m}{
  \left[{#2}\right]_{#1}
}
\DeclareDocumentCommand{\wordToTupleReal}{ O{b} O{d} m}{\wordToNumber[#1][#2]{#3}^{\R}}
\DeclareDocumentCommand{\wordToTupleFractional}{ O{b} O{d} m}{\wordToNumber[#1][#2]{#3}^{\fract}}
\DeclareDocumentCommand{\wordToTupleNatural}{ O{b} O{d} m}{\wordToNumber[#1][#2]{#3}^{\nat}}
\DeclareDocumentCommand{\wordToTupleNumber}{ O{b} O{d} m}{
  \def\temp{#2}\ifx\temp\empty
  \left[#3\right]_{#1}
  \else
  \left[#3\right]_{#1,#2}
  \fi
}
\DeclareDocumentCommand{\logicToTupleNumber}{ O{d} m}{
  \left[#2\right]_{#1}
}
\DeclareDocumentCommand{\autToTupleNumber}{ O{b} O{d} m}{
  \def\temp{#2}\ifx\temp\empty
  \left[#3\right]_{#1}
  \else
  \left[#3\right]_{#1,#2}
  \fi
}
\newcommand{\prefix}[2]{#1\left[<#2\right]}
\newcommand{\suffix}[2]{#1\left[\ge{}#2\right]}
\newcommand{\realDot}{\star}
\DeclareMathOperator{\paralOp}{par}
\newcommand{\paral}[2][d]{\paralOp_{#1}\left(#2\right)}
\DeclareMathOperator{\seqenOp}{seq}
\newcommand{\seqen}[2][d]{\seqenOp_{#1}\left(#2\right)}
\newcommand{\set}[1]{\left\{ #1\right\}}
\newcommand{\tuple}[1]{\left(#1\right)}
\newcommand{\N}{\mathbb N{}}
\newcommand{\nat}{I}
\newcommand{\fract}{F}
\newcommand{\natPart}[1]{#1_{\nat}}
\newcommand{\fraPart}[1]{#1_{\fract}}
\newcommand{\alphabet}{A}
\DeclareDocumentCommand{\digitSet}{ O{b}}{\Sigma_{#1}}
\DeclareDocumentCommand{\digitSetDim}{ O{b} O{d}}{\Sigma_{#1,#2}}
\DeclareDocumentCommand{\digitDotSetDim}{ O{b} O{d}}{\Sigma_{#1,#2}\cup\set{\realDot}}
\newcommand{\digitDotSet}[1][b]{\Sigma_{#1}\cup\set\realDot}
\newcommand{\Z}{\mathbb Z{}}
\renewcommand{\R}{\ensuremath{\mb{R}}}
\newcommand{\tu}[1]{\boldsymbol{#1}}
\newcommand{\ceil}[1]{\left\lceil{#1}\right\rceil}
\newcommand{\length}[1]{\left|{#1}\right|}
\newcommand{\card}[1]{|#1|}
\newcommand{\del}[2]{\delta(#1,#2)}
\newcommand{\Aut}{\mc{A}}
\newcommand{\autPar}[5]{\left(#1,#2,#3,#4,#5\right)}
\newcommand{\aut}{\autPar Q\alphabet\delta{q_{0}}F}
\newcommand{\changeIniState}[2]{#1_{#2}}
\newcommand{\emptyStates}[1][\Aut]{Q_{\emptyset}}
\newcommand{\modStates}[2][\Aut]{Q_{#2}}
\newcommand{\fraStates}[1][\Aut]{Q_{\fract}}
\newcommand{\emptyState}[1][\Aut]{q_{\emptyset,#1}}
\newcommand{\inftyState}[1][\Aut]{q_{\infty,#1}}
\newcommand{\zuState}[1][\Aut]{q_{[0,1],#1}}
\newcommand{\iniState}{q_{0}}
\newcommand{\appendExp}[1]{
  \hspace{.5cm}
  \left(#1\right)
}
\newcommand{\anot}[3]{
  &#3&#2&
  \def\temp{#1}\ifx\temp\empty
  \else\appendExp{#1}
  \fi
}
\newtheorem{theorem}{Theorem}[section]
\newaliascnt{lemma}{theorem}  
\newtheorem{lemma}[lemma]{Lemma}  
\newaliascnt{corollary}{theorem}  
\newtheorem{corollary}[corollary]{Corollary}  
\theoremstyle{definition}
\newaliascnt{example}{theorem}  
\newtheorem{example}[example]{Example}  
\newaliascnt{definition}{theorem}  
\newtheorem{definition}[definition]{Definition}  
\newaliascnt{notation}{theorem}  
\newtheorem{notation}[notation]{Notation}  
\newaliascnt{proposition}{theorem}  
\newtheorem{proposition}[proposition]{Proposition}  
\newaliascnt{property}{theorem}  
\theoremstyle{remark}
\newaliascnt{remark}{theorem}  
\newcommand{\quo}[1]{\text{``}#1\text{''}}
\newcommand{\AR}{\ARPar{R}}
\newcommand{\ARPar}[1]{\Aut_{#1}}
\newcommand{\toInfWord}[1]{L_{\omega}\left({#1}\right)}
\newcommand{\AutFix}[3][\Aut]{#1^{#3\MVAt{} #2}}
\newcommand{\AutFixSeq}[2][\Aut]{#1^{#2}}
\newcommand{\FixTran}[3][\delta]{#1^{#3\MVAt{} #2}}
\newcommand{\FixTranSeq}[2][\delta]{#1^{#2}}
\newcommand{\removePos}[2]{\changePos{#1}{#2}\square}
\newcommand{\addPos}[3]{\changePos{#1}{#2}{#3}}
\DeclareMathOperator{\fix}{fix}
\newcommand{\changePos}[3]{\fix^{#3\MVAt{} #1}\left(#2\right) }
\newcommand{\distWord}[2]{\card{\set{j\mid #1_j\ne{}#2_j}}}
\newcommand{\paralDel}[2]{\paral{\delta}(#1,#2)}
\newcommand{\pair}[2]{\left\langle #1,#2\right\rangle}
\DeclareDocumentCommand{\squarphabetDim}{ O{b} O{d}
  O{f}}{\digitSetDim[#1][#2]^{\square\MVAt{}#3}}
\DeclareDocumentCommand{\squarphabetDotDim}{ O{b} O{d}
  O{f}}{\digitSetDim[#1][#2]^{\square\MVAt{}#3}\cup\set{\realDot}}
\DeclareDocumentCommand{\squarphabet}{ O{b} }{\digitSet[#1]\cup\set\square}
\DeclareDocumentCommand{\squarphabetDot}{ O{b}  }{\digitSet[#1]\cup\set{\realDot,\square}}
\date{} \title{A Quasi-Linear Time Algorithm Deciding Whether Weak
  Büchi Automata Reading Vectors of Reals Recognize Saturated
  Languages} \author{Arthur Milchior}
\begin{document}

\maketitle
\begin{abstract}
  This work considers weak deterministic Büchi automata reading
  encodings of non-negative $d$-vectors of reals in a fixed base. A
  saturated language is a language which contains all encoding of
  elements belonging to a set of $d$-vectors of reals.  A Real Vector
  Automaton is an automaton which recognizes a saturated language. It
  is explained how to decide in quasi-linear time whether a minimal
  weak deterministic Büchi automaton is a Real Vector Automaton. The
  problem is solved both for the two standard encodings of vectors of
  numbers: the sequential encoding and the parallel encoding. This
  algorithm runs in linear time for minimal weak Büchi automata
  accepting set of reals. Finally, the same problem is also solved for
  parallel encoding of automata reading vectors of relative reals.
\end{abstract}
\section{Introduction}

This paper deals with logically defined sets of vector of numbers
encoded by Büchi deterministic automata.  The sets of vectors of
integers whose encodings in base $b$ are recognized by a finite
automaton are called the $b$-recognizable sets.  By \cite{bruyere},
the $b$-recognizable sets are exactly the sets which are
$\fo{\Z;+,<,V_{b}}$-definable, where $V_{b}(n)$ is the greatest power
of $b$ dividing $n$. It was proven in \cite{semenov-theorem,Cobham}
that the $\fo{\N;+}$-definable sets are exactly the sets which are
$b$- and $b'$-recognizable for every $b\ge 2$.

Those results have then been extended to results about sets of vectors
of reals recognized by a Büchi automata.  The notion of Büchi automata
is a formalism which describes languages of infinite words, also
called $\omega$-words. The Büchi automata are similar to the finite
automata. The main difference between the two kinds of automata is
that a finite automaton accepts a finite word if it admits a run
ending on accepting states, while a Büchi automaton accepts an
infinite word it it admits a run in which an accepting state appears
infinitely often.

One of the main differences between finite and Büchi automata is that
finite automata can be determinized while deterministic Büchi automata
are less expressive than Büchi automata. For example, the language
$L_{\text{fin }a}$ of words containing a finite number of times the
letter $a$ is recognized by a Büchi automaton, but is not recognized
by any deterministic Büchi automaton. This statement implies, for
example, that no deterministic Büchi automaton recognizes the set of
reals of the form $nb^{p}$ with $n\in\N$ and $p\in\Z$, that is, the
reals which admits no encoding in base $b$ with a finite number of
non-0 digits.

Another main difference between the two classes of automata is that
the class of languages recognized by finite automata is closed under
complement while the class of languages recognized by deterministic
Büchi automata is not closed under complement. For example,
$L_{\text{inf }a}$, the complement of $L_{\text{fin }a}$, is recognized
by a deterministic Büchi automaton.

\paragraph{}
Real numbers are naturally encoded, in a base $b>1$, as a sequence of
digits in $\set{0,\dots,b-1}$ and a separator symbol $\realDot$.  That
is, as a word over the alphabet
$\set{0,\dots,b-1,\realDot}$. Similarly, a $d$-vector of real numbers
can be encoded as a word over alphabet
$\set{0,\dots,b-1}^{d}\cup\set\realDot$, where $d$ digits are read
simultaneously, one for each element of the vector.  This is call the
$d$-parallel encoding of the vector. A $d$-vector can also be encoded
as a word over alphabet $\set{0,\dots,b-1,\realDot}$, assuming that
the digits in position $i$ modulo $d$ corresponds to the digits of the
$i$-th element of the vector. This is call the sequential encoding of
the vector of digits. The cardinality of the alphabet of parallel
encoding is exponentially bigger than the cardinality of the alphabet
of sequential encodings, thus, sequential encodings may be preferred
for practical purposes. Parallel encoding leads to simpler notation,
hence, most of the litterature consider parallel encodings. We
consider both encodings in this paper.

A language $L$ is said to be \emph{saturated} if, given a vector
$\mathbf r\in\mathbb R^{d}$, the set of its encoding in base $b$ is
either included in $L$ or disjoint from $L$. A Real Vector Automaton
(RVA, See e.g. \cite{weak-R-+-vector}) is an automaton of alphabet
$\set{0,\dots,b-1}^{d}\cup\set\realDot$ which recognizes a saturated
language. Here $d$ is the dimension of the vector that the automata
read. In the case where the dimension $d$ is 1, those automata are
called Real Number Automata (RNA, See e.g. \cite{real}).

The sets of vectors of reals whose encodings in base $b$ is recognized
by a RVA are called the $b$-recognizable sets. By \cite{LinArCon},
they are exactly the $\fo{\R,\Z;+,<,X_{b},1}$-definable sets. The
logic $\fo{\R,\Z;+,<,X_{b},1}$ is the first-order logic over reals
with a unary predicate which holds over integers, addition, order, the
constant one, and the function $X_{b}(x,u,k)$. The function
$X_{b}(x,u,k)$ holds if and only if $u$ is equal to some $b^{n}$ with
$n\in\Z$ and there exists a encoding in base $b$ of $x$ whose digit in
position $n$ is $k$. That is, $u$ and $x$ are of the form:
\begin{eqnarray*}
  \begin{array}{lllllllllll c  lllllllllllll}
    u=&0&\dots&0&\realDot&0&\dots&0&1&0&\dots
    &\text{ or }&
    u=&0&\dots&0&1&0&\dots&0&\realDot&0&\dots \\
    x=& &\dots& &\realDot& &\dots& &k& &\dots
    &&
    x=& &\dots& &k& &\dots& &\realDot& &\dots& 
  \end{array}
\end{eqnarray*}
A weak deterministic Büchi automaton is a deterministic Büchi
automaton whose set of accepting states is a union of strongly
connected components. A set is said to be weakly $b$-recognizable if
it is recognized by a weak automaton in base $b$.  By
\cite{weak-R-+-vector}, a set is $\fo{\R,\Z;+,<}$-definable if and
only if its set of encodings is weakly $b$-recognizable for all
$b\ge2$. The weak deterministic Büchi automata are less expressive
than the deterministic Büchi automata. For example, the language
$L_{\text{inf }a}$ of words containing an infinite number of $a$ is
recognized by a deterministic Büchi automaton but is not recognized by
any weak deterministic Büchi automaton. This implies that, for
example, no weak deterministic Büchi automaton recognizes the set of
reals which are not of the form $nb^{p}$ with $n\in\N$ and $p\in\Z$,
since those reals are the ones whose encodings in base $b$ contains an
infinite number of non-$0$ digits.  Furthermore, by
\cite{minimal-buchi}, weak deterministic Büchi automata can be
efficiently minimized.

\paragraph{}
In this paper, we show that we can efficiently decide whether a weak
Büchi automaton accept a saturated set of vectors of
integers. Furthermore, we give an algorithm for automata reading
parallel encoding and for automata reading sequential encoding.

We recall standard definition in \autoref{sec:def}. We introduce
encoding of sets of vectors of numbers in \autoref{sec:encoding}. We
introduce Büchi automata in \autoref{sec:automata}. We formalize how
we compute the complexity of an algorithm in
\autoref{sec:time-space}. We study automata reading vectors of numbers
in \autoref{sec:d-seq-par}. We study how to transform words and
automata in \autoref{sec:fix-01}. We characterize the parallel RVA in
\autoref{sec:char-RVA-par} and the sequential RVA in
\autoref{sec:char-RVA-seq}. We explain how to decide whether an
automaton is a RVA in \autoref{sec:algo-RVA}. The case of sets
containing negative reals is discussed in \autoref{sec:neg}.

\section{Standard definitions}\label{sec:def}
We now give the standard definitions used in this paper.
\paragraph{Numbers.}
Let $\N$\index{N@$\N$} and $\R$ \index{R@$\R$} denote the set of
non-negative integers and the set of reals, respectively. For
$R\subseteq \R$, let $R^{\ge 0}$ denote the set of non-negative
elements of $R$.  For $n\in\N$, \index{n@$\left[n\right]$ for
  $n\in\N$.} let $[n]$ represent $\set{0,\dots,n}$. For $m\in\N^{>0}$,
let $(n\mod m)$ represents the only integer $i\in[m-1]$ such that
$n\equiv i \mod m$.
\paragraph{Sets.}
For $S$ and $T$ two sets, let
$S\otimes T=\set{(s,t)\mid s\in S,t\in T}$ be the set of ordered pair
containing an element of $S$ and an element of $T$. Let $\card{S}$ be
the cardinality of $S$.  For $d\in\mathbb N$, let $S^{d}$ be the set
of $d$-vectors of elements of $S$ for $d\in\mathbb N$. The $d$-vectors
are denoted $\tu s=(s_{0},\dots,s_{d-1})$ with each $s_{i}\in S$. The
$d$-vector $(0,\dots,0)$ is denoted \index{0@$\tu 0$}$\tu 0$. 
\paragraph{Finite and infinite words.}
An \emph{alphabet}\index{Alphabet} is a finite set, its elements are
called \emph{letters}\index{Letter}. A finite word\index{Finite word}
over the alphabet $\alphabet$ is a finite sequence of letters of
$\alphabet$.  An $\omega$-word\index{$\omega$-word} over the alphabet
$\alphabet$ is an infinite sequence of letters of $\alphabet$. The
empty word is denoted $\epsilon$\index{e@$\epsilon$}.  A set of finite
(respectively $\omega-$) words of alphabet $\alphabet$ is called a
language\index{Language}\index{Language@$\omega$-language}
(respectively, an $\omega$-language) over alphabet $\alphabet$.

Let $w$ be a word, its length is denoted $\length{w}$, it is either a
non-negative integer or the cardinality of $\N$.  For
$n\in[\length{w}-1]$, let $w[n]$ denote the $n$-th letter of $w$. For
$v$ a finite word, let $u=vw$ be the
\emph{concatenation}\index{Concatenation of words} of $v$ and of $w$, that is,
the word of length $\length{v}+\length{w}$ such that $u[i]=v[i]$ for
$i\in[\length{v}-1]$ and $u[\length{v}+i]=w[i]$ for
$i\in[\length{w}-1]$. Let $\prefix w{n}$\index{prefix@$\prefix w{n}$ -
  Prefix of $w$ of length $n$} denote the \emph{prefix} of $w$ of
length $n$, that is, the word $u$ of length $n$ such that $w[i]=u[i]$
for all $i\in[n-1]$. Similarly, let
$\suffix w{n}$\index{suffix@$\suffix w{n}$ - Suffix of $w$ without its
  $n$ first letters.} denote the \emph{suffix} of $w$ without its
$n$-th first letters, that is, the word $u$ such that $u[i]=w[i+n]$
for all $i\in[\length{w}-n]$.  Note that $w=\prefix wi\suffix wi$ for
all $i\in[\length{w}-1]$.
\paragraph{Languages}
A language is a set of words. Let $L$ be a language of finite words
and let $L'$ be either an $\omega$-languages or a language of finite
words.  Let $LL'$ be the set of concatenations of the words of $L$ and
of $L$. For $i\in\N$, let $L^{i}$ be the concatenations of $i$ words
of $L$.  Let $L^{*}=\bigcup_{i\in\N} L^{i}$\index{*@$L^{*}$}, more
generally, for $d,j\in\N$, let
$L^{d\N+j}=\bigcup_{i\in\N} L^{di+j}$\index{*@$L^{d\N+j}$} and
\index{+@$L^{+}$}$L^{+}=\bigcup_{i>0} L^{i}$. If $L$ is a set of
non-empty word, let $L^{\omega}$ be the set of infinite sequences of
elements of $L$. Finally, let $L^{\infty}=L^{*}\cup L^{\omega}$.
\section{Encoding of set of vectors of numbers}\label{sec:encoding}
In this section we explain how to encode sets of vectors of numbers
using languages. We consider natural and real numbers in
\autoref{susce:enc-reals}. We consider the special case of rationals
in \autoref{sec:rati-01}. We then consider vectors of reals in
\autoref{sec:enc-vector}. Finally, we consider sets of vectors of
reals in \autoref{subsec:enc-set-reals}.
\subsection{Encoding of numbers}\label{susce:enc-reals}
Let us now consider the encoding of numbers in an integer base
$b\ge2$. Let $\digitSet$ be equal to $[b-1]$, it is the set of
digits. The base $b>1$ is fixed for the remaining of this paper.
Formally, for $v\in\digitSet^{*}$ and $w\in\digitSet^{\omega}$:
\begin{eqnarray*}
  \wordToNatural{v}=\sum_{i=0}^{\length{v}-1}b^{\length{v}-1-i}v[i]\text{
  and }
  \wordToFractional{w}=\sum_{i=0}^{\infty}b^{-i-1}w[i].
\end{eqnarray*}
\index{.@$\wordToNatural{.}$}
\index{.@$\wordToReal{.}$}\index{.@$\wordToFractional{.}$}\index{Natural
  part of a word of
  $\digitSet^{*}\realDot\digitSet^{\omega}$}\index{Fractional part of
  a word of $\digitSet^{*}\realDot\digitSet^{\omega}$} Let $w$ be an
$\omega$-word with exactly one $\realDot$. It is of the form
$w=\natPart{w}\realDot\fraPart{w}$, with $\natPart{w}\in\digitSet^{*}$
and $\fraPart{w}\in\digitSet^{\omega}$. The word $\natPart{w}$ is
called the natural part of $w$ and the $\omega$-word $\fraPart{w}$ is
called its fractional part. We then define:
\begin{equation*}
  \wordToReal{\natPart{w}\realDot \fraPart{w}}=\wordToNatural{\natPart{w}}+\wordToFractional{\fraPart{w}}.
\end{equation*}
Examples of representation of numbers are now given.
  \begin{equation*}
    \arraycolsep=0.5pt
    \begin{array}{rclp{2mm}rclp{2mm}rclp{2mm}rclp{2mm}rcl}
      \wordToFractional[2]{(10)^{\omega}}&=&\frac 23 &
      &\wordToFractional[2]{(01)^{\omega}}&=&\frac13&
      &\wordToFractional[2]{0(10)^{\omega}}&=&\frac13&
      &\wordToFractional[2]{0(1)^{\omega}}&=&\frac{1}{2}&
      &\wordToFractional[2]{1(0)^{\omega}}&=&\frac12
      \\
      \wordToNatural[2]{10}&=&2&
      &\wordToNatural[2]{1}&=&1&
      &\wordToNatural[2]{01}&=&1&
      &\wordToNatural[2]{\epsilon}&=&0&
      &\wordToNatural[2]{00000}&=&0
      \\
      \multicolumn{5}{r}{\wordToReal[2]{10\realDot(10)^{\omega}}}&=&\frac{8}{3}&
      &\wordToReal[2]{\realDot0(1)^{\omega}}&=&\frac{1}{2}&
      &\multicolumn{5}{r}{\wordToReal[2]{00000\realDot 1(0)^{\omega}}}&=&\frac{1}{2}.
    \end{array}
  \end{equation*}
\paragraph{Pair-encoding}
A word $w\in\digitDotSet^{\infty}$ can equivalently be encoded as a
pair $\pair{w}{S}$ where $w\in\digitSet^{\infty}$ and $S\subseteq\N$. This
pair represents the word of length $\length{w}+\card S$, such that
$\pair{w}{S}[i]$ is $\realDot$ if $i\in S$, otherwise it is
$w\left[i-\card{\set{s\in S\mid s<i}}\right]$. Intuitively, for
$i\not\in S$, the $i$-th letter of $\pair{w}{S}[i]$ is the letter of $w$ at
a position $j$ such that $j+k=i$, where $k$ is the number of
$\realDot$'s before $i$. For example
$\pair{(10)^{\omega}}{\set2}=10\realDot(10)^{\omega}$,
$\pair{01}{\set0}=\realDot01$ and $\pair{(01)^{\omega}}{\emptyset}=(01)^{\omega}$.
The representation $\pair{w}{S}$ is called a
pair-encoding. 


If $\pair{w}{S}$ is an encoding of a real, or an encoding of a factor
of a real, $S$'s cardinality is at most 1. In particular, the pair
encoding of a word of $\digitSet^{*}\realDot\digitSet^{\omega}$ is of
the form $\pair{w}{\set s}$ with $w\in\digitSet^{\omega}$ and
$s\in\N$.  Note however that in order to check whether an automaton is
a RVA, it must be checked that it rejects every words whose number of
$\realDot$'s is not 1. Therefore, the cases where $S$ is not a
singleton must be considered.
\subsection{Encoding of rationals}\label{sec:rati-01}
We now recall a basic fact about encoding of rationals.
\begin{theorem}[\cite{number-theory-hardy}]\label{theo:rat}
  Let $q\ge 0$ a real.  Let $l=\ceil{\log_{b}(q+1)}$ and $l'\in\N$.

  The number of encoding of $q$ with a natural part of length $l'$ is:
  \begin{itemize}
  \item $0$ if $l'<l$,
  \item $2$ if $l'\ge l$ and if $q$ admits a decomposotion of the form
    $nb^{p}$ with $p\in\mathbb Z$ and $n\in\mathbb N$ such that
    $n\not\equiv 0\mod b$, and
  \item $1$ otherwise.
  \end{itemize}
  In the seconde case, the two encodings are of the form:
  \begin{equation}\label{eq:theo:rat}
    (va(b-1)^{\omega},\set i)\text{ and } (v(a+1)0^{\omega},\set i),
  \end{equation}
  with $v\in\digitSet^{*}$ and $a\in\digitSet\setminus\set{b-1}$.
\end{theorem}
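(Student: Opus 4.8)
The plan is to reduce the statement to the classical description of the base-$b$ $\omega$-expansions of a real in $[0,1]$ — which is precisely the fact quoted from \cite{number-theory-hardy} — and then translate it back. For the reduction, let $w=\natPart w\realDot\fraPart w$ be an encoding of $q$ with $\length{\natPart w}=l'$, and let $u=\natPart w\cdot\fraPart w\in\digitSet^\omega$ be the $\omega$-word obtained by erasing the radix point. Splitting the series defining the fractional value of $u$ at position $l'$ and comparing with the natural- and fractional-part values yields
\[\wordToFractional u=b^{-l'}\bigl(\wordToNatural{\natPart w}+\wordToFractional{\fraPart w}\bigr)=b^{-l'}\,q,\]
so $u$ is a base-$b$ $\omega$-expansion of $x:=qb^{-l'}$. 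Conversely, from any $u\in\digitSet^\omega$ with $\wordToFractional u=x$ one recovers an encoding of $q$ with natural part of length $l'$ by setting $\natPart w=\prefix u{l'}$ and $\fraPart w=\suffix u{l'}$. These maps are mutually inverse, so the quantity to be computed equals the number of $u\in\digitSet^\omega$ with $\wordToFractional u=x$.

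Next I would invoke \cite{number-theory-hardy}: a real $x$ admits no base-$b$ $\omega$-expansion when $x\notin[0,1]$; exactly one when $x\in\set{0,1}$, and also when $x\in(0,1)$ is not a $b$-adic rational; and exactly two when $x\in(0,1)$ is a $b$-adic rational, i.e. $x=mb^{-k}$ with $k\ge1$ and $b\nmid m$. I then translate each condition through $x=qb^{-l'}$. We have $x\notin[0,1]$ iff $q>b^{l'}$, and $q\le b^{l'}$ is equivalent to $l'\ge l$ once $l$ is pinned down as the least admissible natural-part length; this gives the first item. We have $x$ a $b$-adic rational in $(0,1)$ iff $0<q<b^{l'}$ and $q=nb^{p}$ for some $p\in\Z$, $n\in\N$ with $b\nmid n$ (take $n=m$, $p=l'-k$); combined with $l'\ge l$ this is the second item. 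The remaining possibilities ($q=0$; $q$ an exact power of $b$ equal to $b^{l'}$; $q$ of no such form) fall under "exactly one", which is the third item.

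For the explicit form \eqref{eq:theo:rat}, I would work in the two-expansion case: write $x=qb^{-l'}$ as its terminating base-$b$ expansion $0.d_1\cdots d_k\,0^\omega$, where $d_k\ne0$ since $b\nmid m$. As $\omega$-words in $\digitSet^\omega$, the two expansions of $x$ are then $w'c\,0^\omega$ and $w'(c-1)(b-1)^\omega$, with $w':=d_1\cdots d_{k-1}\in\digitSet^*$ and $c:=d_k\in\digitSet\setminus\set0$. Reinserting the radix point after position $l'$, as in the reduction, turns these into the pair-encodings $(w'c\,0^\omega,\set{l'})$ and $(w'(c-1)(b-1)^\omega,\set{l'})$; putting $v:=w'$, $a:=c-1\in\digitSet\setminus\set{b-1}$ and $i:=l'$ gives exactly the two words of \eqref{eq:theo:rat}.

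Since the heart of the argument is the cited classical fact, the real work is the bookkeeping at the boundary: pinning down $l$ correctly, and making sure the subcases $x=0$ and $x=1$ — that is $q=0$ and $q$ an exact power of $b$ — are counted in the "one encoding" item rather than being absorbed into the others, and checking that reinsertion of the radix point at position $l'$ always produces words of the stated shape (with $v\in\digitSet^*$, $a\ne b-1$, $i=l'$), whether the point falls inside $w'$ or to its right. I expect this boundary analysis, not any conceptual difficulty, to be the main obstacle.
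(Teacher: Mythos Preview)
The paper does not prove this theorem: it is stated as a classical fact and attributed to \cite{number-theory-hardy}, with only a remark afterwards about why the pair-encoding formulation is convenient. There is therefore no ``paper's proof'' to compare your proposal against.

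Your reduction is the natural one and is essentially how one would unpack the cited classical statement: the bijection between encodings of $q$ with natural part of length $l'$ and $\omega$-expansions of $x=qb^{-l'}$ in $[0,1]$ is correct, and the case analysis you outline (no expansion outside $[0,1]$, exactly one at the endpoints and for non-$b$-adic points, exactly two for $b$-adic rationals in $(0,1)$) is precisely what Hardy's result provides. Your derivation of the explicit form \eqref{eq:theo:rat} is also right. The only place to be careful is, as you anticipated, the value of $l$: the formula $l=\ceil{\log_b(q+1)}$ in the statement is the familiar integer-part-length formula for $q\in\N$, and for arbitrary real $q$ the exact threshold for ``at least one encoding of natural-part length $l'$'' is $q\le b^{l'}$ rather than $q+1\le b^{l'}$; this is a boundary wrinkle in the statement rather than in your argument, and it does not affect any use of the theorem later in the paper.
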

This theorem illustrates that pair-encoding leads to shorter
statements. Indeed, with the standard-encoding, Equation
\eqref{eq:theo:rat} would require to consider three cases, depending
on whether $p<0$, $p=0$ or $p>0$. Note that the condition
$n\not\equiv 0\mod b$ ensures that $q\ne 0$.

\subsection{Encoding of vectors of reals.}\label{sec:enc-vector}
It is now explained how to encode $d$-vectors of real numbers.  In
this paper, we fix a positive integer constant $d$. In the remaining
of this paper, we only consider sets of dimension 1 or $d$.

There exists two standard encodings of vectors of numbers. The
parallel one and the sequential one. A parallel encoding consists in a
sequence of $d$-vector of digits. A sequential encoding consists in a
sequence of digits. This sequence contains alternatively a digit of
the zeroth number, a digit of the first number, up to a digit of the
$(d-1)$-th number.  In both cases, exactly one dot appear in the
sequence, to separate the natural part from the fractional part.

The alphabet of sequential encoding contains $(b+1)$ letters while the
alphabet of parallel encodings contains $(b^{d})+1$ letters. Thus,
sequential encoding allow to create smaller automata, as shown in
Example \ref{ex:exp-smaller}. However, parallel encoding leads to
notations which are more compact. Since parallel encoding are more
standards and lead to simpler proofs.
\paragraph{Parallel encodings}\label{sec:par-enc}
We now introduce the notion of parallel encoding of a $d$-vector of
numbers.  Let $\digitSetDim=\digitSet^{d}$, be the set of $d$-vectors
of digits. For $\tu w\in\digitSetDim^{\infty}$, and $0\le i<d$,
$w_{i}$ denote the unique word such that
$\length{w_{i}}=\length{\tu w}$ and such that for
$0\le k<\length{\tu w}$, $(w_{i})[k]=(\tu w[k])_{i}$.  Similarly, for
$\pair{\tu w}{S}\in\left(\digitDotSetDim\right)^{\infty}$,
$\pair{\tu w}{S}_{i}$ denote $\pair{w_{i}}{S}$.

For $\tu v\in\digitSetDim^{*}$ and $\tu w\in\digitSetDim^{\omega}$, we
define $\wordToNatural{\tu{v}}$ as
$\left(\wordToNatural{v_{0}},\dots,\wordToNatural{v_{d-1}}\right)$ and
$\wordToFractional{\tu{w}}$ as
$
\left(\wordToFractional{w_{0}},\dots,\wordToFractional{w_{d-1}}\right)$.
Similarly, we define $\wordToReal{\tu{v\realDot w}}$ as
$\left(\wordToReal{v_{0}\realDot
    w_{0}},\dots,\wordToReal{v_{d-1}\realDot w_{d-1}}\right)=
\wordToNatural{\tu{v}}+\wordToFractional{\tu{w}}$
where addition is defined component by component.  For example:
\begin{equation}\label{eq:ex-par}
  \wordToReal{\left({1\choose 0}{0\choose 0}{1\choose
        0}^{\omega},\set2\right)}=  \wordToNatural{{1\choose 0}{0\choose 0}}+\wordToFractional{{1\choose
      0}^{\omega}}=\left(2,0\right)+\left(\frac23,\frac23\right)=\left(\frac83,\frac23\right).
\end{equation}
\paragraph{Sequential encodings}\label{sec:seq-enc}
We now introduce the notion of sequential encodings of a $d$-vector of
numbers.  Let $w\in\digitSet^{\infty}$ whose length is either a
multiple of $d$ or infinite. Let $\paral{w}\in\digitSetDim$ be the
only word of length $\length{w}/d$, whose $i$-th letter is
$\left(w_{di+0},\dots,w_{di+(d-1)}\right)$ for $i<\length{w}/d$.  For
$\pair{w}{S}$, with $S$ a set of multiple of $d$, let
$\paral{\pair{w}{S}}=\pair{\paral{w}}{\set{s/d\mid s\in
    S}}$\index{par@$\paral{\pair{w}{S}}$}.
Let $\seqen{\tu w}$\index{seq@$\seqen{\tu w}$} be the inverse of the
function $\paral{w}$. Given a word $w$, $\paral{w}$ is called the
\emph{parallelization of $w$}\index{Parallelization of a word} and
$\seqen{w}$ is called its
\emph{sequentialization}\index{Sequentialization of a word}.  For
example, the parallelization of
$\left(100(01)^{\omega},\set{4}\right)$ is
$\left({1\choose 0}{0\choose 0}{1\choose 0}^{\omega},\set2\right)$.
As seen in Equation \eqref{eq:ex-par}, it encodes the pair of reals
$(8/3, 2/3)$.
\subsection{Encoding of sets of vectors of reals}\label{subsec:enc-set-reals}
We now explain how to encode sets of tuples of reals as a language. 
\paragraph{$d$-parallel languages}
The subsets of $\digitSetDim^{*}\realDot\digitSetDim^{\omega}$ are
called \emph{$d$-parallel language}\index{$d$-parallel language}.
Given a $d$-parallel language $L$, let \index{L@$\wordToReal{L}$ for
  $L$ a $d$-parallel language.}$\wordToReal{L}$ be the set of vectors
of reals admitting an encoding in $L$. Formally,
$\wordToReal L=\set{\wordToReal{\tu w}\mid \tu w\in L}$.  The language
$L$ is said to be a $d$-parallel encoding of the set of reals
$\wordToReal L$.  A $d$-parallel language
$L\subseteq\digitSetDim^{*}\realDot\digitSetDim^{\omega}$ is said to
be \emph{saturated}\index{Saturated language} if, for any $d$-vector
of numbers $\mathbf{r}\in\wordToReal{L}$, all encodings in base $b$ of
$\mathbf{r}$ belongs to $L$.



In general, a set of reals may have infinitely many encodings in base
$b$.  For example, for $I\subseteq\mathbb N$ an arbitrary set, the
languagse
$\set{0,1}^{*}\realDot\left(\set{0,1}^{\omega}\setminus\set{0^{i}1^{\omega}\mid
    i\in I}\right)$ is an encoding in base 2 of $\mathbb R^{\ge0}$. It
is saturated only for $I=\emptyset$.
\paragraph{$d$-sequential languages} The case of $d$-sequential
encodings of vectors of reals is now considered. The subsets of
$\digitSet^{d\N}\realDot\digitSet^{\omega}$ are called
\emph{$d$-sequential languages}\index{$d$-sequential language}. The
parallelization of a $d$-sequential language $L$, denoted
$\paral{L}$\index{parL@$\paral{L}$ for $L$ a $d$-sequential language}
is $\set{\paral{w}\mid w\in L}$.  A $d$-sequential language $L$ is
said to be a $d$-sequential encoding of the set
$\wordToReal{\paral{L}}$. This language is said to be saturated if
$\paral{L}$ is saturated.
\section{Deterministic Büchi automata}\label{sec:automata}
This paper deals with deterministic Büchi automata. We define this
notion in \autoref{sec:def-Büchi}. We consider  the  notion of quotient and
morphism of Büchi automata in \autoref{sec:quotient}.
\subsection{Definition}\label{sec:def-Büchi}
A \index{Deterministic Büchi automaton}\emph{deterministic Büchi
  automaton} is a 5-tuple
$\autPar{Q}{\alphabet}{\delta}{\iniState}{F}$, with $Q$ a finite set
of states, $\alphabet$ an alphabet, $\delta: Q\otimes\alphabet\to{}Q$
is the \emph{transition function}, $q_0\in Q$ is the \emph{initial
  state} and $F\subseteq Q$ is the set of \emph{accepting states}. For
each $q\in Q$ and $a\in\alphabet$, $q$ is said to be a
\emph{predecessor} \index{Predecessor state} of $\del{q}{a}$. 
For $q\in Q$, let $\changeIniState{\Aut}{q}$ \index{A@$\changeIniState{\Aut}{q}$ for $\Aut$ an
  automaton and $q$ a state.} be the automaton
$\autPar{Q}{\alphabet}{\delta}{q}{F}$, that is $\Aut$ with $q$ as
initial state. A state $q\in Q$ is said to be
\emph{accessible}\index{Accessible} from a state $q'\in Q$ if there
exists a finite non-empty word $w\in\alphabet^{+}$ such that
$\del{q'}{w}={q}$. The \index{Strongly connected
  components}\emph{strongly connected component} of a state $q$ is the
set of states $q'$ such that $q'$ is accessible from $q$ and $q$ is
accessible from $q'$.

From now on in this paper, all Büchi automata are assumed to be
deterministic. The function $\delta$ is implicitly extended on
$Q\otimes\alphabet^{*}$ by $\del{q}{\epsilon}=q$ and
$\del{q}{aw}=\del{\del{q}{a}}{w}$ for $a\in\alphabet$ and
$w\in\alphabet^{*}$. An example of Büchi automaton is now given.
\begin{example}\label{ex:unbounded}
  Let $\Aut$ be the automaton pictured in \autoref{fig:false-neg}. Its
  alphabet is $\digitDotSet[3]$.
  \begin{figure}[h]
  \centering
    \begin{tikzpicture}[->, >=stealth', shorten >=1pt, auto, node
      distance=2cm, thick, main node/.style={circle, fill=!20, draw,
        font=\sffamily\Large\bfseries, inner sep=0.1pt, minimum
        size=1cm}] \tikzset{every state/.style={minimum size=1.2cm}}
      \node[state, initial, accepting, initial text={}] (init) {$\iniState$};
      \node[state, right of=init] (even) {$q_{1}$};
      \node[state, right of=even] (odd) {$q_{2}$};
      \node[state, right of=odd] (acc) {$q_{3}$};
      \node[state, right of=acc] (acc') {$q_{4}$};
      \node[state, right of= acc', accepting] (01) {$q_{5}$};
      \node[state, right of= 01] (empty) {$q_{6}$};

      \path[every node/.style={font=\sffamily\small}]
      (init) edge node {0} (even)
             edge [bend left] node {2} (acc)
             edge [bend right] node [below, pos=.2 ] {1, $\realDot$} (empty)
      (even) edge [bend right] node  {0} (odd)
             edge [bend left, out=45] node [pos=.1,above]{1} (acc)
             edge [bend right] node [below,pos=.3] {2, $\realDot$} (empty)
      (odd)  edge [bend right] node {0} (even)
             edge [bend right] node [below,pos=.3] {1, $\realDot$} (empty)
             edge node [above] {2} (acc)
      (acc) edge [loop above] node {1} (acc)
            edge [bend right] node [below] {$\realDot$} (01)
            edge [bend right] node {0,2} (acc')
      (acc') edge [loop above] node {1} (acc)
            edge node {$\realDot$} (01)
            edge [bend right] node [above] {0,2} (acc)
      (01) edge [loop above] node {$0,1,2$} (01)
           edge node{$\realDot$} (empty)
      (empty) edge [loop above] node{$0,1,2,\realDot$} (empty)
           ;
    \end{tikzpicture}
    \caption{An automaton which recognizes $(00)(2+01)\digitSet[3]^{*}\realDot\digitSet[3]^{\omega}$.
    }
    \label{fig:false-neg}
  \end{figure}
\end{example}
\index{Run of an automaton} Let $\mathcal{A}$ be an automaton and $w$
be an infinite word. A \emph{run}\index{Run} $\pi$ of $\mathcal{A}$ on
$w$ is a mapping $\pi:\N\mapsto Q$ such that $\pi(0)=\iniState$ and
$\del{\pi(i)}{w[i]}={\pi(i+1)}$ for all $i<\length{w}$. The run is
accepting if there exists a state $q\in F$ such that there is an
infinite number of $i\in\N$ such that $\pi(i)=q$.
\autoref{ex:unbounded} is now resumed. Note that, if $\Aut$ is an
automaton, for all $w\in\alphabet^{*}$ and $w'\in\alphabet^{\omega}$,
the word $w'$ is accepted by $\Aut_{\del{q_{0}}{w}}$ if and only if
$ww'$ is accepted by $\Aut$.  It is said that $\Aut$ recognize the
language of words $w$ such that $\Aut$ accepts $w$. This language is
denoted \index{A@$\toInfWord{\Aut}$}$\toInfWord{\Aut}$.
\begin{example}
  Let $\Aut$ be the automaton pictured in \autoref{fig:false-neg}.
  The run of $\Aut$ on $01^{\omega}$ is
  $\tuple{q_{0},q_{1},q_{3},\dots}$, with the last state repeated
  infinitely often.  The Büchi automaton $\Aut$ does not accept
  $01^{\omega}$ since this run contains exactly one accepting state.

  The run of $\Aut$ on $2\realDot1^{\omega}$ is
  $\tuple{q_{0},q_{3},q_{5},\dots}$ with the last state repeated
  infinitely often. The Büchi automaton $\Aut$ accepts
  $2\realDot1^{\omega}$ since the accepting state $q_{5}$ appears
  infinitely often in the run.

  This automaton recognizes the language
  $(00)^{*}(01,2)\digitSet[3]^{*}\realDot\digitSet[3]^{\omega}$.
\end{example}
\subsection{Quotients, Morphisms and Weak Büchi Automata}\label{sec:quotient}
\index{Morphism of automata} \index{Quotient of automata} Let
$\Aut=\autPar{Q}{\alphabet}{\delta}{\iniState}{F}$ and
$\Aut'=\autPar{Q'}{\alphabet}{\delta'}{\iniState'}{F'}$ be two Büchi
automata.  The Büchi automaton $\Aut$ is said to be minimal if, for
each distinct states $q$ and $q'$ of $\Aut$,
$\toInfWord{\changeIniState{\Aut}{q}}\ne\toInfWord{\changeIniState{\Aut}{q'}}$. If $\Aut'$ is minimal,
a surjective function $\mu:Q\to Q'$ is a
\emph{morphism}\index{Morphism of automata} of Büchi automata if
$\mu(\iniState)=\iniState'$ and if, for each $q\in Q$,
$\toInfWord{\changeIniState{\Aut}{q}}=\toInfWord{\changeIniState{Aut'}{\mu(q)}}$. Note that if $\mu$
is a morphism of Büchi automaton from $\Aut$ to $\Aut'$, if $q$ is a
state of $\Aut$, then $\mu$ is a morphism of Büchi automaton from
$\changeIniState{\Aut}{q}$ to $\changeIniState{Aut'}{\mu(q)}$.

The Büchi automaton $\Aut$ is said to be \emph{weak} if $F$ is a union
of strongly connected components.  The main theorem concerning
quotient of weak Büchi automata is now recalled.
\begin{theorem}[\cite{minimal-buchi}]\label{theo:minimal}
  Let $\Aut$ be a weak Büchi automaton with $n$ states such
  that all states of $\Aut$ are accessible from its initial
  state. Let $c$ be the cardinality of $\alphabet$.  There exists a
  minimal weak Büchi automaton $\Aut'$ such that there exists a
  morphism of automaton $\mu$ from $\Aut$ to
  $\Aut'$. The automaton $\Aut'$ and the morphism $\mu$
  are computable in time $\bigO{n\log(n)c}$ and space $\bigO{nc}$.
\end{theorem}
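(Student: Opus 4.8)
The plan is to reduce the construction to a Hopcroft-style partition refinement on the underlying deterministic transition structure, exploiting the special shape of weak acceptance. The key structural fact is that a run $\pi$ of a weak automaton on an $\omega$-word is ultimately confined to a single strongly connected component: the component-DAG being finite, there is an index beyond which $\pi$ never leaves one component $S$. Since $F$ is a union of strongly connected components, every component is either \emph{accepting} (contained in $F$) or \emph{rejecting} (disjoint from $F$), and therefore $\pi$ is accepting if and only if its limit component $S$ is accepting and looping (has an internal edge). In particular $\toInfWord{\changeIniState{\Aut}{q}}$ is entirely determined by which looping components are accessible from $q$ and by their accepting status.

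First I would compute the strongly connected components of $\Aut$ and their accepting/rejecting status in time $\bigO{n\card{\alphabet}}$ (Tarjan's algorithm), and from this produce a canonical two-coloring $\chi\colon Q\to\set{0,1}$ by a fixed-point pass over the component-DAG in reverse topological order: on looping components $\chi$ records the accepting status, while on trivial (transient) components, whose own membership in $F$ is irrelevant to the recognized language, $\chi$ is adjusted using the colors already assigned to the successors together with the exiting transitions, so that $\chi$ is coarser than $\omega$-language equivalence. This is the \emph{normalization} step. I would then run Hopcroft's algorithm on $(Q,\alphabet,\delta)$ starting from the partition $\set{\chi^{-1}(0),\chi^{-1}(1)}$, obtaining the coarsest $\delta$-stable partition $P$ refining it. Finally I would take $\Aut'$ to be the quotient: its states are the blocks of $P$ that meet the set of states accessible from $\iniState$, its initial state is the block of $\iniState$, transitions are inherited (well defined by $\delta$-stability of $P$), and a block is accepting iff its states have color $1$; and I would let $\mu\colon Q\to Q'$ send each accessible state to its block. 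By construction $\mu$ is surjective, $\mu(\iniState)=\iniState'$, $\Aut'$ is \emph{weak} since $\chi$ is constant on blocks and compatible with the component structure of the quotient, and $\Aut'$ is \emph{minimal} once the correctness lemma below is established.

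The technical heart of the proof is the lemma: for all states $q,q'$ of $\Aut$, $\toInfWord{\changeIniState{\Aut}{q}}=\toInfWord{\changeIniState{\Aut}{q'}}$ if and only if $q$ and $q'$ lie in the same block of $P$; this immediately yields $\toInfWord{\changeIniState{\Aut}{q}}=\toInfWord{\changeIniState{\Aut'}{\mu(q)}}$ for every $q$ and the minimality of $\Aut'$. The direction ``equivalent states are never separated'' (i.e. $P$ is coarser than $\omega$-language equivalence) is the delicate one, and is exactly what the design of $\chi$ secures: by induction on the refinement steps, whenever two states get separated there is a finite word after which they reach components of different effective colors, and then the structural fact above produces an $\omega$-word looping in one of those components that distinguishes the two languages --- the care taken with transient components in the normalization is what rules out spurious separations. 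The converse direction is a Myhill--Nerode argument for the weak condition: if $q,q'$ are in the same block, then on every $\omega$-word the two runs pass, at each prefix, through states of equal color, and since each run stabilizes in one component this forces the two limit components to share the accepting status, so the word is accepted from $q$ iff from $q'$. I expect the design of this canonical coloring --- coarse enough that $\Aut'$ comes out minimal, fine enough that no information is lost, and computable by a single linear pass --- together with this correctness argument to be the main obstacle.

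It remains to account for the resources. Computing components, statuses and the coloring $\chi$ costs $\bigO{n\card{\alphabet}}$ time and $\bigO{n\card{\alphabet}}$ space; Hopcroft's algorithm on $n$ states over an alphabet of cardinality $c=\card{\alphabet}$ runs in $\bigO{n\log(n)c}$ time and $\bigO{nc}$ space; isolating the accessible part, assembling $\Aut'$ and tabulating $\mu$ adds a further $\bigO{nc}$. Summing the three phases gives the announced $\bigO{n\log(n)c}$ time and $\bigO{nc}$ space.
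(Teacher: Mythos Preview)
The paper does not prove this theorem at all: it is quoted verbatim from \cite{minimal-buchi} (L\"oding's result on efficient minimization of deterministic weak B\"uchi automata) and used as a black box. There is therefore no ``paper's own proof'' to compare against.

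Your sketch is essentially the argument of the cited reference: exploit that in a weak automaton acceptance depends only on the accepting status of the limit SCC, normalize the acceptance marking so that language-equivalent states receive the same color, and then run Hopcroft's partition refinement on the resulting DFA-like structure. The overall shape and the complexity accounting are correct. The one place where your write-up is hand-wavy is the design of the normalization coloring $\chi$ on transient (non-looping) components: you say $\chi$ is ``adjusted using the colors already assigned to the successors together with the exiting transitions'' without specifying the rule, yet this is exactly the point where a wrong choice would make $P$ strictly finer than language equivalence and $\Aut'$ non-minimal. In L\"oding's proof the fix is concrete: a transient state is recolored accepting iff \emph{all} its successors are (equivalently, one propagates the color backwards through trivial SCCs so that no spurious color change occurs along a deterministic path), and one then checks that two states are language-equivalent iff they are Hopcroft-equivalent for this normalized coloring. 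If you spell out that rule and verify the two directions of your ``technical heart'' lemma against it, the argument goes through.
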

Example \ref{ex:unbounded} is now resumed.
\begin{example}\label{ex:unbounded-min}
  Let $\AR$ be the Büchi automaton pictured in
  \autoref{fig:false-neg}. The automaton $\AR$ is weak. Its minimal
  quotient is pictured in \autoref{fig:false-neg-min}. Note that this
  quotient is not a quotient of finite automata since the accepting
  state $q_{0}$ is sent to a non-accepting state.
\begin{figure}[h]
  \centering
    \begin{tikzpicture}[->, >=stealth', shorten >=1pt, auto, node
      distance=2cm, thick, main node/.style={circle, fill=!20, draw,
        font=\sffamily\Large\bfseries, inner sep=0.1pt, minimum
        size=1cm}] \tikzset{every state/.style={minimum size=1.2cm}}
      \node[state, initial, initial text={}] (even) {$q_{0},q_{2}$};
      \node[state, right of=even] (odd) {$q_{1}$};
      \node[state, right of=odd] (acc) {$q_{3},q_{4}$};
      \node[state, right of=acc, accepting] (01) {$q_{5}$};
      \node[state, right of=01] (empty) {$q_{6}$};

      \path[every node/.style={font=\sffamily\small}]
      (even)  edge [bend right] node {0} (odd)
             edge [bend left, out=45] node {2} (acc)
             edge [bend right] node [below] {1, $\realDot$} (empty)
      (odd)  edge [bend right] node {0} (even)
             edge [bend right] node [below,pos=.3] {2, $\realDot$} (empty)
            edge  node {1} (acc)
      (acc) edge [loop above] node {0,1,2} (acc)
            edge node {$\realDot$} (01)
      (01) edge [loop above] node {$0,1,2$} (01)
           edge  node {$\realDot$} (empty)
      (empty) edge [loop above] node{$0,1,2,\realDot$} (empty)
           ;
    \end{tikzpicture}
    \caption{The minimal automaton which recognizes $(00)(2+01)\digitSet[3]^{*}\realDot\digitSet[3]^{\omega}$.
    }
    \label{fig:false-neg-min}
  \end{figure}
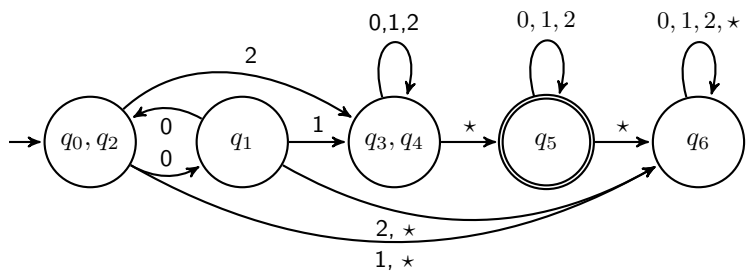
\end{example}
We now explain how to decide efficiently whether two states of two
automata recognize the same language.
\begin{corollary}\label{theo:min-quasi}
  Let $\alphabet$ an alphabet with $c>1$ letters.  Let
  $\Aut^{0}=\autPar{Q^{0}}{\alphabet}{\delta^{0}}{q^{0}_{0}}{F^{0}}$,
  $\Aut^{1}=\autPar{Q^{1}}{\alphabet}{\delta^{1}}{q^{1}_{0}}{F^{1}}$
  be weak Büchi automata.  Let $n=\card{Q^{0}}+\card{Q^{1}}$.

  We can compute in time $\bigO{n\log(n)c}$ and space $\bigO{nc}$ a
  data-structure of size $\bigO{nc}$ such that, for each pair
  $(q^{0},q^{1})\in Q^{0}\otimes Q^{1}$, we can check in constant time
  and space whether $\Aut^{0}_{q^{0}}$ and $\Aut^{1}_{q^{1}}$ accepts
  the same language.
\end{corollary}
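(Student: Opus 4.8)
The plan is to reduce this to a single application of \autoref{theo:minimal}. First, I would form the disjoint union automaton $\Aut = \autPar{Q^{0}\sqcup Q^{1}}{\alphabet}{\delta^{0}\sqcup\delta^{1}}{q^{0}_{0}}{F^{0}\cup F^{1}}$ (the choice of initial state is irrelevant here, since we care about all states). Restricting to states accessible from some chosen root is not needed; instead, I would observe that \autoref{theo:minimal} applies to each accessible component, or — more cleanly — add a fresh initial state with transitions that make every state of $\Aut$ accessible, which adds only $O(1)$ states and preserves weakness provided the fresh state is non-accepting and lies in its own trivial strongly connected component. The automaton $\Aut$ is weak because $F^{0}$ is a union of SCCs of $\Aut^{0}$ and $F^{1}$ is a union of SCCs of $\Aut^{1}$, and SCCs of the disjoint union are exactly the SCCs of the two parts (the fresh state forms a singleton SCC, which is non-accepting, hence trivially either contained in or disjoint from $F^{0}\cup F^{1}$). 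Now apply \autoref{theo:minimal} to $\Aut$: in time $\bigO{n\log(n)c}$ and space $\bigO{nc}$ we obtain a minimal weak Büchi automaton $\Aut'$ together with a morphism $\mu$ from $\Aut$ to $\Aut'$.

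The key point is the defining property of a morphism: for every state $q$ of $\Aut$ we have $\toInfWord{\changeIniState{\Aut}{q}} = \toInfWord{\changeIniState{\Aut'}{\mu(q)}}$, and since $\Aut'$ is minimal, two of its states are equal iff they recognize the same language. Therefore, for $q^{0}\in Q^{0}$ and $q^{1}\in Q^{1}$, the automata $\Aut^{0}_{q^{0}}$ and $\Aut^{1}_{q^{1}}$ recognize the same language if and only if $\mu(q^{0}) = \mu(q^{1})$ (here I identify $q^{0}, q^{1}$ with their images in the disjoint union, and use that $\toInfWord{\Aut^{0}_{q^{0}}} = \toInfWord{\changeIniState{\Aut}{q^{0}}}$ since changing the initial state of $\Aut$ to a state coming from the $\Aut^{0}$-part gives back $\Aut^{0}_{q^{0}}$). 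The data structure is simply the table of values of $\mu$, stored as an array indexed by $Q^{0}\sqcup Q^{1}$ giving for each state the identifier of its image state in $\Aut'$; this has size $\bigO{n}$, hence $\bigO{nc}$, and is computed within the stated bounds by \autoref{theo:minimal}. Given a pair $(q^{0},q^{1})$, the query ``do $\Aut^{0}_{q^{0}}$ and $\Aut^{1}_{q^{1}}$ accept the same language?'' is answered by two array lookups and one comparison, i.e.\ in constant time and space.

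I expect no serious obstacle; the only things needing care are bookkeeping points rather than mathematical difficulties. One must check that the disjoint union (possibly augmented with a fresh accessibility-providing state) is genuinely weak, which is immediate from the characterization of its SCCs. One must also make sure the hypothesis of \autoref{theo:minimal} that all states are accessible from the initial state is met — handled by the fresh state, or alternatively by noting that inaccessible states may be dropped without affecting any $\toInfWord{\changeIniState{\Aut}{q}}$ for accessible $q$ and treating each weakly connected region separately. Finally one verifies that $c > 1$ is exactly the hypothesis under which $\bigO{n\log(n)c}$ time suffices in \autoref{theo:minimal}, which is assumed. Everything else is a direct transcription of the morphism property of $\mu$.
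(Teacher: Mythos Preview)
Your proposal is correct and follows essentially the same route as the paper: form the disjoint union of the two automata, add a fresh non-accepting initial state so that the accessibility hypothesis of \autoref{theo:minimal} is met, observe weakness is preserved, apply \autoref{theo:minimal} to obtain the morphism $\mu$, and store $\mu$ as the data structure so that each query reduces to testing $\mu(q^{0})=\mu(q^{1})$. The paper's construction of the fresh state is concrete (it sends one distinguished letter $\alpha$ to $q^{0}_{0}$ and all other letters to $q^{1}_{0}$, which is precisely where $c>1$ is used), whereas you leave that detail slightly abstract, but the idea and the complexity analysis are the same.
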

\begin{proof}
  Up to changing the name of the states, we can assume that $Q^{0}$
  and $Q^{1}$ are disjoint. Let $\alpha\in\alphabet$. Let
  $\Aut'=\autPar{Q^{0}\cup
    Q^{1}\cup\set\iniState}{\alphabet}{\delta'}{\iniState'}{F^{0}\cup
    F^{1}}$, where $\delta'(\iniState,\alpha)=\iniState^{0}$,
  $\delta'(\iniState,a)=\iniState^{1}$ for each
  $a\in\alphabet\setminus\set a$, and $\delta'(q,a)=\delta^{i}(q,a)$
  for $i\in\set{0,1}$, $a\in\alphabet$ and $q\in Q^{i}$.  Clearly
  $\Aut^{i}_{q}$ accepts the same language than $\Aut'_{q}$, for each
  $q\in Q^{i}$. The automaton $\Aut'$ is clearly weak, thus it admits
  a minimal quotient and a morphism $\mu$ to this minimal quotient. By
  \autoref{theo:minimal}, this morphism is computable in time
  $\bigO{n\log(n)c}$ and takes space $\bigO{nc}$. This morphism is the
  data structure mentionned above.

  Let $(q^{0},q^{1})\in Q^{0}\otimes Q^{1}$. Remark that
  $\Aut^{0}_{q^{0}}$ and $\Aut^{1}_{q^{1}}$ accepts the same language
  if and only if $\mu(q^{0})=\mu(q^{1})$. Given $\mu,$ this equality
  can be checked in constant time and space.
\end{proof}
In practice, the algorithm of \cite{minimal-buchi} could be directly
applied to multiple Büchi automata simultaneously.  Indeed, the
initial state is not considered differently than any other state in
this algorithm. Furtherrmore, this algorithm does not require all
states of the automata to be accessible from the initial state. 
\section{Time and space analysis}\label{sec:time-space}
We now state our assumption above the time and space complexity of in
this paper. We first consider the size of the object we use.

All integers and Booleans takes constant space. The size of an automata is the
product of the cardinalities of its alphabet and of its set of
states. An array of $n$ elements takes size $n$, plus the size of its
elements. For a set $S$ of cardinality $n$, a subset of $S$ is an
array of $n$ Boolean values.

For the sake of simplicity, it is assumed that all basic arithmetic
operations over integers, such as addition, multiplication,
subtraction, comparison of integers, can be computed in constant time
and space. The transition functions of automata return in constant
time and space.  Creating an array and editing one of its position
takes constant time.
\section{Automata reading set of vectors of reals}\label{sec:d-seq-par}
We consider automata reading set of vectors of reals in this
section. Those automata are formally introduced in
\autoref{sec:d-aut}. We explain in \autoref{sec:paral-aut} how to
decide whether an automaton accept a $d$-parallel or a $d$-sequential
language.
\subsection{Definition}\label{sec:d-aut}
The notion of Büchi automata recognizing a set of vector of reals is
now introduced.

A Büchi automaton accepting a $d$-parallel or a $d$-sequential
language is said to be a $d$-parallel or a $d$-sequential automaton respectively.
The set of weak $d$-parallel and of weak $d$-sequential Büchi automata
are closed under taking quotient.  The set of $d$-vectors of automata
associated to an automaton is now introduced.
\begin{notation}[$\wordToReal{\Aut}$]
  \index{A-real@$\wordToReal{\Aut}$} For $\Aut$ a $d$-parallel or a
  $d$-sequential automaton, let $\wordToReal{\Aut}$ be
  $\wordToReal{\toInfWord{\Aut}}$.
\end{notation}
The following example show that the minimal $d$-sequential automaton
accepting a set $R\subseteq\left(\R^{\ge0}\right)^{d}$ can be
exponentially smaller than the minimal $d$-parallel automaton
accepting it.
\begin{example}
  \label{ex:exp-smaller}
  The minimal $d$-parallel automaton accepting $(\R^{\ge0})^{d}$ is:
  \begin{equation*}
    \Aut^{\text{par}}=\autPar{\set{\inftyState,\zuState,\emptyState}}{\digitDotSetDim[2][d]}{\delta}{\inftyState}{\set{\zuState}},
  \end{equation*}
  where $\del{q}{\tu a}=q$ for each state $q$, and each letter
  $\tu a\in \digitSetDim[2][d]$ and where
  $\del{\inftyState}{\realDot}=\zuState$. If $\del{q}{a}$ is not
  defined above, it is equal to $\emptyState$.
  This Büchi automaton has $3$ states and its alphabet has $2^{d}+1$
  letters, hence its size is $\bigO{2^{d}}$.  The automaton
  $\Aut^{\text{par}}$ is pictured in \autoref{ex:min-eq-par}, without
  its state $\emptyState$.

  The minimal $d$-sequential Büchi automaton accepting
  $\left(\R^{\ge0}\right)^{d}$ is:
  \begin{equation*}
    \Aut^{\text{seq}}=\autPar{\set{\emptyState,\zuState}\cup\set{q_{i}\mid{i\in[d-1]}}}{\digitDotSet[2]}{\delta}{q_{0}}{\set{\zuState}},
  \end{equation*}
  where, $\del{q_{i}}{a}=q_{i+1}$ for each $a\in\digitSet[2]$, where
  $\del{\zuState}{a}=\zuState$ for each $a\in\digitSet[2]$ and where
  $\del{\iniState}{\realDot}=\zuState$. If $\del{q}{a}$ is not defined
  above, it is equal to $\emptyState$.
  This Büchi automaton has $d+2$ states and its alphabet has $2$
  letters, hence its size is $\bigO{d}$.  The automaton
  $\Aut^{\text{par}}$ is pictured in \autoref{ex:min-eq-seq} without
  its state $\emptyState$.
  \begin{figure}[h]
    \begin{subfigure}[b]{0.25\textwidth}
      \centering
      \begin{tikzpicture}[->, >=stealth', shorten >=1pt, auto, node
        distance=2cm, thick, main node/.style={circle, fill=!20, draw,
          font=\sffamily\Large\bfseries, inner sep=0.1pt, minimum
          size=1cm}] \tikzset{every state/.style={minimum size=1.2cm}}
        \node[state, accepting, initial,initial text={}] (init) {$\iniState$};
        \node[state, accepting,right of=init] (zu) {$\zuState$};
        
        \path[every node/.style={font=\sffamily\small}]
        (init) edge [loop above] node {$\digitSet^{d}$} (init)
               edge node {$\realDot$} (zu)
        (zu) edge [loop above] node {$\digitSet^{d}$} (zu)
        ;
      \end{tikzpicture}
      \caption{The minimal parallel Büchi automaton.}
      \label{ex:min-eq-par}
    \end{subfigure}
    \begin{subfigure}[b]{0.7\textwidth}
      \centering
      \begin{tikzpicture}[->, >=stealth', shorten >=1pt, auto, node
        distance=2cm, thick, main node/.style={circle, fill=!20, draw,
          font=\sffamily\Large\bfseries, inner sep=0.1pt, minimum
          size=1cm}] \tikzset{every state/.style={minimum size=1.2cm}}
        \node[state, initial, accepting, initial text={}] (init) {$\iniState$};
        \node[state, accepting, right of=init] (q1) {$q_{1}$};
        \node[right of=q1] (q2) {\dots};
        \node[state,accepting,right of=q2] (q3) {$q_{d-1}$};
        \node[state, accepting,right of=q3] (zu) {$\zuState$};

        \path[every node/.style={font=\sffamily\small}]
        (init) edge node {$\digitSet$} (q1)
               edge [bend left] node [below] {$\realDot$} (zu)
        (q1) edge node {$\digitSet$} (q2)
        (q2) edge node {$\digitSet$} (q3)
        (q3) edge [bend left] node {$\digitSet$} (init)
        (zu) edge [loop above] node {$\digitSet$} (zu)
        ;
      \end{tikzpicture}
      \caption{The minimal sequential Büchi automaton.}
      \label{ex:min-eq-seq}
    \end{subfigure}
    \caption{Minimal parallel and sequential RVA accepting
      $\left(\R^{\ge0}\right)^{d}$.}
  \end{figure}
  Note that the size of the minimal $d$-parallel automaton is
  exponential in the size of the minimal $d$-sequential automaton.
\end{example}
We now explain how to transform a sequential automaton into a parallel
one.
\begin{definition}[$\paral{\Aut}$]
  Let $\Aut=\autPar{Q}{\digitDotSet}{\delta}{q_{0}}{F}$ be a
  $d$-sequential automaton. Let
  $\paral{\delta}:\left(Q\otimes\left(\digitDotSetDim\right)\right)\to
  Q$ such that
  $\paral{\delta}\left(q,\realDot\right)=\del{q}{\realDot}$ and such
  that,
  $\paral{\delta}\left(q,\tu a\right)=\delta(q,a_{0}\dots a_{d-1})$
  for each $\tu a\in\digitSetDim$. Then let $
  \paral{\Aut}=\autPar{Q}{\digitDotSetDim}{\paral{\delta}}{\iniState}{F}$.
\end{definition}
This operation is called the \emph{parallelization of
  $\Aut$}\index{Parallelization of an automaton}. 
The parallelization of the automaton pictured in
\autoref{ex:min-eq-seq} is pictured in \autoref{ex:min-eq-seq-par}.
  \begin{figure}
    \centering
    \begin{tikzpicture}[->, >=stealth', shorten >=1pt, auto, node
      distance=3cm, thick, main node/.style={circle, fill=!20, draw,
        font=\sffamily\Large\bfseries, inner sep=0.1pt, minimum
        size=1cm}] \tikzset{every state/.style={minimum size=1.2cm}}
      \node[state, initial, accepting, initial text={}] (init) {$\iniState$};
      \node[state, accepting, right of=init] (q1) {$q_{1}$};
      \node[right of= q1] (q2) {\dots};
      \node[state,accepting,right= 1cm of q2] (q3) {$q_{d-1}$};
      \node[state, accepting,right of=q3] (zu) {$\zuState$};
       
      \path[every node/.style={font=\sffamily\small}]
      (init) edge [loop right] node {$\digitSetDim$} (q1)
             edge [bend left,out=20, in=160] node [below] {$\realDot$} (zu)
      (q1) edge [loop right] node {$\digitSetDim$} (q1)
      (q3) edge [loop right] node {$\digitSetDim$} (q3)
      (zu) edge [loop right] node {$\digitSetDim$} (zu)
      ;
    \end{tikzpicture}
    \caption{The parallelization of the automaton of \autoref{ex:min-eq-seq}.}
    \label{ex:min-eq-seq-par}
  \end{figure}
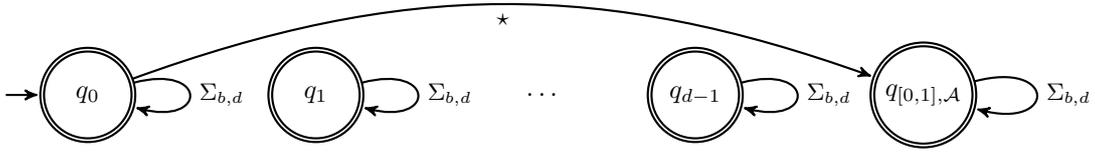
We now state two lemmas whose proofs are simple applications of the
definitions. Those lemmas show that this notion of parallelization is
coherent with the parallelization of words.
\begin{lemma}\label{lem:paral-aut}
  Let $w\in\digitSet^{d\N}\realDot\digitSet^{\omega}$ and $\Aut$
  a $d$-sequential automaton. The automaton $\Aut$ accepts $w$ if and
  only if $\paral \Aut$ accepts $\paral w$.
\end{lemma}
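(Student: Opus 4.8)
The statement to prove is: for $w\in\digitSet^{d\N}\realDot\digitSet^{\omega}$ and $\Aut$ a $d$-sequential automaton, $\Aut$ accepts $w$ if and only if $\paral\Aut$ accepts $\paral w$. The plan is to compare the run of $\Aut$ on $w$ with the run of $\paral\Aut$ on $\paral w$ and show that the second is essentially a ``sampling'' of the first: it visits exactly the states the run of $\Aut$ is in at positions that are multiples of $d$ (up to and including the position of the separator). Since $\Aut$ and $\paral\Aut$ share the same state set $Q$, the same accepting set $F$ which is a union of strongly connected components, and the same initial state, comparing acceptance reduces to comparing which states occur infinitely often along the two runs.

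\medskip

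The key steps, in order, are as follows. First, write $w=u\realDot w'$ with $u\in\digitSet^{d\N}$ and $w'\in\digitSet^{\omega}$, say $\length{u}=dn$ for some $n\in\N$. Unfolding the definition of $\paral{w}$ (Section~\ref{sec:seq-enc}), $\paral w=\paral{u\realDot w'}$ is the word $\paral{u'}\realDot\paral{w''}$ where $u'$ is $u$ grouped into $n$ blocks of $d$ letters and $w''$ is $w'$ grouped into blocks of $d$ letters; here I use that $\length{u}$ is a multiple of $d$ and that the separator sits at a position that is a multiple of $d$, so the block structure is unambiguous. Second, argue by a short induction on the block index $k$ that, if $\pi$ is the run of $\Aut$ on $w$ and $\rho$ is the run of $\paral\Aut$ on $\paral w$, then $\rho(k)=\pi(dk)$ for all $k\le n$, and $\rho(n+1+k)=\pi(dn+1+dk)$ for all $k\in\N$. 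The base case $\rho(0)=\pi(0)=\iniState$ holds since both automata have the same initial state. The inductive step for the natural part uses $\paral\delta(q,\tu a)=\del{q}{a_0\dots a_{d-1}}$ together with the extension of $\delta$ to words: reading one block letter $\tu a$ in $\paral\Aut$ from state $\pi(dk)$ lands in $\del{\pi(dk)}{a_0\dots a_{d-1}}=\pi(dk+d)=\pi(d(k+1))$. The step at the separator uses $\paral\delta(q,\realDot)=\del q\realDot$, and the steps in the fractional part are identical to those in the natural part.

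\medskip

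Third, from this correspondence deduce the biconditional on acceptance. The tail of $\rho$ (the part in the fractional region) visits exactly the set of states $\set{\pi(dn+1+dk)\mid k\in\N}$, which is a subset of the set of states visited by the tail of $\pi$. For the forward direction, suppose $\paral\Aut$ accepts $\paral w$, so some $q\in F$ occurs infinitely often in $\rho$, hence $q=\pi(dn+1+dk)$ for infinitely many $k$, so $q$ occurs infinitely often in $\pi$ and $\Aut$ accepts $w$. The converse is the only point needing a genuine (if small) argument: if $\Aut$ accepts $w$, some $q\in F$ occurs infinitely often in $\pi$, but a priori only at positions not divisible by $d$ (shifted by the separator). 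Here I invoke that $\Aut$ is \emph{weak}: the tail of $\pi$ (everything from position $dn+1$ onward) eventually stays inside a single strongly connected component $C$, and $C\cap F\ne\emptyset$ implies $C\subseteq F$ since $F$ is a union of strongly connected components; therefore every state visited infinitely often in $\pi$'s tail lies in $F$, in particular $\pi(dn+1+dk)\in F$ for infinitely many $k$ (indeed for all large $k$), so some element of $F$ occurs infinitely often in $\rho$ and $\paral\Aut$ accepts $\paral w$. One should also check that $\paral\Aut$ is weak, which is immediate since it has the same state set, same $F$, and the same reachability relation between states (every $\paral\delta$-edge corresponds to a path of $\delta$-edges and conversely, so the strongly connected components coincide).

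\medskip

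The main obstacle is the converse acceptance direction: naively one only gets that the states visited infinitely often by $\rho$ form a subset of those visited infinitely often by $\pi$, which does not by itself keep us inside $F$. The fix is exactly the weakness hypothesis, which forces the entire limit SCC of $\pi$ into $F$ as soon as it meets $F$; this is also why the lemma is stated for deterministic \emph{weak} Büchi automata rather than arbitrary deterministic Büchi automata. Everything else is routine unfolding of the definitions of $\paral\delta$, $\paral w$, and the extension of $\delta$ to finite words.
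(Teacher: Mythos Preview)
The paper does not actually prove this lemma: it simply states that it (together with the companion \autoref{lem:seq-aut}) is ``a simple application of the definitions''. Your write-up supplies the argument the paper omits, and the run-correspondence you set up is exactly the right one.

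You are also right to flag the converse direction as the nontrivial half and to invoke weakness there. In fact the lemma as stated in the paper, with no weakness hypothesis on $\Aut$, is false: take $d=2$, let the fractional part of $\Aut$ consist of two states $p\notin F$ and $q\in F$ with $\del p a=q$ and $\del q a=p$ for every $a\in\digitSet$; then the run of $\Aut$ on any accepted word alternates $p,q,p,q,\dots$ after the separator and is accepting, while the run of $\paral\Aut$ samples only every other state and may stay at $p$ forever. So your addition of the weakness hypothesis is not just convenient but necessary, and it is satisfied at the one place the paper invokes the lemma (the proof of \autoref{theo:seq-char}, where $\Aut$ is explicitly weak).

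One small correction to your final paragraph: the claim that $\Aut$ and $\paral\Aut$ have ``the same reachability relation between states'' and hence the same strongly connected components is not correct. Every $\paral\delta$-transition corresponds to a $\delta$-path, but a single $\delta$-transition on a digit need not be realizable as a $\paral\delta$-transition. What does follow from the one-sided implication is that each strongly connected component of $\paral\Aut$ is contained in a strongly connected component of $\Aut$; since $F$ is a union of the latter, it is automatically a union of the former, and $\paral\Aut$ is weak. That weaker containment is all your argument needs.
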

\begin{lemma}\label{lem:seq-aut}
  Let $\tu w\in\digitSetDim^{*}\realDot\digitSetDim^{\omega}$ and
  $\Aut$ a $d$-sequential automaton. The automaton $\paral{\Aut}$
  accepts $\tu w$ if and only if $\Aut$ accepts $\seqen w$.
\end{lemma}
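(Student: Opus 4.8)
The plan is to deduce this lemma directly from \autoref{lem:paral-aut} by a change of variable, rather than repeating an induction on runs. Given $\tu w\in\digitSetDim^{*}\realDot\digitSetDim^{\omega}$, I would first check that its sequentialization is a legal $d$-sequential word: writing $\tu w=\tu v\realDot\tu u$ with $\tu v\in\digitSetDim^{*}$ and $\tu u\in\digitSetDim^{\omega}$, each of the $\length{\tu v}$ letters of $\tu v$ expands under $\seqenOp_{d}$ into a block of $d$ digits, so $\seqen{\tu w}$ has natural part $\seqen{\tu v}$ of length $d\length{\tu v}\in d\N$ and fractional part $\seqen{\tu u}\in\digitSet^{\omega}$; hence $\seqen{\tu w}\in\digitSet^{d\N}\realDot\digitSet^{\omega}$, which is exactly the hypothesis needed to invoke \autoref{lem:paral-aut}. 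Moreover, since $\seqenOp_{d}$ is by definition the inverse of $\paralOp_{d}$, one has $\paral{\seqen{\tu w}}=\tu w$. Applying \autoref{lem:paral-aut} to the word $w\assign\seqen{\tu w}$ and the automaton $\Aut$ then yields that $\Aut$ accepts $\seqen{\tu w}$ if and only if $\paral{\Aut}$ accepts $\paral{\seqen{\tu w}}=\tu w$, which is the claimed equivalence read from right to left.

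If one prefers not to cite \autoref{lem:paral-aut}, the same statement can be proved directly: by definition of $\paral{\delta}$, a run of $\paral{\Aut}$ on $\tu w$ is precisely the subsequence, at the positions that are multiples of $d$ together with the position of the $\realDot$, of a run of $\Aut$ on $\seqen{\tu w}$, and conversely; since $F$ is a set of states and an accepting state is visited infinitely often in a run exactly when it is visited infinitely often among those sampled positions, the two Büchi acceptance conditions coincide. This, however, merely re-proves \autoref{lem:paral-aut} in disguise, so the first route is preferable. I expect no genuine obstacle here; the only point requiring a moment's care is the bookkeeping around the separator — namely that the unique $\realDot$ of $\tu w$ maps to a position of $\seqen{\tu w}$ lying strictly between two consecutive $d$-blocks of digits, so that $\paralOp_{d}$ can legitimately be applied back to $\seqen{\tu w}$ — and this is immediate from the definitions of $\paralOp_{d}$ and $\seqenOp_{d}$.
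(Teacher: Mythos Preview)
Your proposal is correct. The paper does not actually give a proof of this lemma: it merely states that both \autoref{lem:paral-aut} and \autoref{lem:seq-aut} have ``proofs [that] are simple applications of the definitions'' and moves on. Your first route---checking that $\seqen{\tu w}\in\digitSet^{d\N}\realDot\digitSet^{\omega}$ and then invoking \autoref{lem:paral-aut} together with the fact that $\paralOp_{d}$ and $\seqenOp_{d}$ are mutual inverses---is a perfectly clean way to discharge what the paper leaves implicit, and your second route is exactly the direct argument the paper alludes to.
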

Finally, we state that changing the initial state commute with
parallelization.
\begin{lemma}\label{lem:paral-state}
  Let $\Aut$ be a $d$-sequential automaton and $q$ a state of
  $\Aut$. The automaton $\paral{\Aut}_{q}$ is equal to the automaton
  $\paral{\changeIniState{\Aut}{q}}$.
\end{lemma}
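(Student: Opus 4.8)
The statement to prove is \autoref{lem:paral-state}: for a $d$-sequential automaton $\Aut = \autPar{Q}{\digitDotSet}{\delta}{q_0}{F}$ and a state $q \in Q$, the automata $\paral{\Aut}_{q}$ and $\paral{\changeIniState{\Aut}{q}}$ are literally equal. The plan is to just unfold both sides of the equation from the definitions and observe that they produce syntactically identical $5$-tuples. This is a ``chase the definitions'' argument with no real mathematical content; the only care needed is to be precise about what ``equal'' means for automata, namely componentwise equality of the underlying $5$-tuples.

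First I would write out $\paral{\changeIniState{\Aut}{q}}$. By the definition of $\changeIniState{\Aut}{q}$, the automaton $\changeIniState{\Aut}{q}$ is $\autPar{Q}{\digitDotSet}{\delta}{q}{F}$, i.e.\ $\Aut$ with $q$ substituted for the initial state and everything else untouched. Applying the definition of parallelization $\paral{\cdot}$ to this automaton yields $\autPar{Q}{\digitDotSetDim}{\paral{\delta}}{q}{F}$, where $\paral{\delta}$ is the transition function built from $\delta$ by the rules $\paral{\delta}(q',\realDot) = \del{q'}{\realDot}$ and $\paral{\delta}(q',\tu a) = \delta(q', a_0 \cdots a_{d-1})$ for $\tu a \in \digitSetDim$. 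Crucially, $\paral{\delta}$ depends only on $\delta$, not on the initial state, so this is the \emph{same} function $\paral{\delta}$ that appears in $\paral{\Aut}$.

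Next I would write out $\paral{\Aut}_{q}$. By the definition of parallelization, $\paral{\Aut} = \autPar{Q}{\digitDotSetDim}{\paral{\delta}}{\iniState}{F}$, and then applying the notation $\changeIniState{(\cdot)}{q}$ gives $\autPar{Q}{\digitDotSetDim}{\paral{\delta}}{q}{F}$. Comparing the two $5$-tuples componentwise: the state sets agree ($Q = Q$), the alphabets agree ($\digitDotSetDim = \digitDotSetDim$), the transition functions agree (both are the $\paral{\delta}$ induced by the common $\delta$), the initial states agree ($q = q$), and the accepting sets agree ($F = F$). Hence $\paral{\Aut}_{q} = \paral{\changeIniState{\Aut}{q}}$.

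There is essentially no obstacle here: the lemma is true by construction, and the proof is one line once the definitions are expanded. The only thing worth flagging is that one should make explicit the observation that the construction of $\paral{\delta}$ from $\delta$ is independent of the choice of initial state — this is exactly why ``changing the initial state commutes with parallelization'' — but this is immediate from inspecting the defining clauses for $\paral{\delta}$, which never mention $q_0$.
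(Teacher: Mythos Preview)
Your proposal is correct and matches the paper's treatment: the paper states this lemma without proof, regarding it as an immediate consequence of the definitions, and your unfolding of the two $5$-tuples is exactly the one-line verification that justifies this. The key observation you flag---that $\paral{\delta}$ depends only on $\delta$ and not on the initial state---is precisely the point.
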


\subsection{Algorithm}\label{sec:paral-aut}
We now consider the problem of deciding whether a Büchi automaton is
$d$-parallel or $d$-sequential. We now state the two main results of
this section.
\begin{theorem}\label{theo:d-par}
  Let $\Aut$ be an automaton over alphabet $\digitDotSetDim$ with $n$
  states. It is decidable in time $\bigO{nb^{d}}$ and space $\bigO{n}$
  whether $\Aut$ is a $d$-parallel automaton.
\end{theorem}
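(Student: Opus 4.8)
The plan is to reduce the property to two reachability problems in the transition graph of $\Aut$, each solved by a constant number of linear-time traversals that query $\delta$ on demand rather than storing the transitions. First I would note that, since the alphabet of $\Aut$ is $\digitDotSetDim$, an $\omega$-word over it lies in $\digitSetDim^{*}\realDot\digitSetDim^{\omega}$ if and only if it has exactly one occurrence of $\realDot$; hence $\Aut$ is a $d$-parallel automaton exactly when $\toInfWord{\Aut}\subseteq\digitSetDim^{*}\realDot\digitSetDim^{\omega}$, and this fails iff $\Aut$ accepts a word with no $\realDot$ (call this event (A)) or a word with at least two $\realDot$'s (event (B)). So it suffices to decide (A) and (B) in time $\bigO{nb^{d}}$ and space $\bigO n$; $\Aut$ is then a $d$-parallel automaton iff neither occurs.

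To decide (A), let $G_{\mathrm{dig}}$ be the digraph on the states of $\Aut$ keeping only the transitions $\del{q}{a}$ with $a\in\digitSetDim$. I claim (A) holds iff $G_{\mathrm{dig}}$ has an accepting state that lies on a cycle and is reachable from $\iniState$ in $G_{\mathrm{dig}}$: given such an $f$, read the digits along a path $\iniState\to f$ and then repeat forever the digits of a cycle through $f$, producing an accepted word of $\digitSetDim^{\omega}$; conversely, an accepted $w\in\digitSetDim^{\omega}$ has a run visiting some accepting $f$ at two positions $i<j$, so $w[i..j-1]$ labels a cycle through $f$ and $w[0..i-1]$ a path $\iniState\to f$, both inside $G_{\mathrm{dig}}$. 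This is tested by computing the strongly connected components of $G_{\mathrm{dig}}$, flagging those that contain both a cycle and an accepting state, and performing one forward search from $\iniState$ in $G_{\mathrm{dig}}$; since $G_{\mathrm{dig}}$ has at most $nb^{d}$ edges this takes $\bigO{nb^{d}}$ time and $\bigO n$ space.

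To decide (B), I would first compute the set $V$ of states $q$ with $\toInfWord{\changeIniState{\Aut}{q}}\neq\emptyset$; by the usual B\"uchi-nonemptiness criterion this is the set of states that can reach an accepting state lying on a cycle of the full transition graph, obtained from an SCC decomposition together with a propagation of that flag along Tarjan's reverse-topological order, so that the reverse graph is never built --- again $\bigO{nb^{d}}$ time and $\bigO n$ space. If $\iniState\notin V$ then $\toInfWord{\Aut}=\emptyset$ and $\Aut$ is (trivially) a $d$-parallel automaton; otherwise I claim (B) holds iff some state of $V$ is reachable from $\iniState$ by a path using at least two $\realDot$-transitions. Indeed, from such a $q\in V$ one appends any word of $\toInfWord{\changeIniState{\Aut}{q}}$ to get an accepted word with $\geq 2$ occurrences of $\realDot$; conversely, truncating an accepting run of such a word just after its second $\realDot$ gives exactly such a path, whose endpoint lies in $V$ because the remaining suffix is accepted. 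The set of endpoints of such paths is built in layers: $R_{0}$ the states reachable from $\iniState$ using only digit transitions; $T_{1}$ the states reachable, again using only digit transitions, from $\set{\del{u}{\realDot}\mid u\in R_{0}}$; and $P$ the states reachable by arbitrary transitions from $\set{\del{u}{\realDot}\mid u\in T_{1}}$. Then (B) holds iff $P\cap V\neq\emptyset$, and each layer is one traversal, keeping the cost at $\bigO{nb^{d}}$ time and $\bigO n$ space.

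Combining, $\Aut$ is a $d$-parallel automaton iff $\iniState\notin V$, or $\iniState\in V$ and neither (A) nor (B) holds, and the whole test runs in time $\bigO{nb^{d}}$ and space $\bigO n$. The only point I expect to require care is the space bound: the automaton has $\Theta(nb^{d})$ transitions, so the transition table and the reverse graph must never be materialised, and every traversal --- in particular the backward computation of $V$ --- has to run on the $\bigO n$-vertex graph using only per-state colours/cursors and an $\bigO n$ DFS stack, with $\delta$ consulted on demand; the correctness claims for (A) and (B) are otherwise routine pumping arguments.
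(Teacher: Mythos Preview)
Your argument is correct and structurally close to the paper's: both reduce the question to a handful of reachability computations on the transition graph. The paper introduces the sets $Q_\emptyset$ (states with empty language; the complement of your $V$), $Q_0$ (states reachable from $q_0$ by digit words; your $R_0$), and $Q_F$ (states reachable by digit--$\realDot$--digit words; your $T_1$), and then tests whether $\delta(q,\realDot)\in Q_\emptyset$ for every $q\in Q_F$. This is exactly your test for (B), since $P\cap V=\emptyset$ iff every seed $\delta(q,\realDot)$ with $q\in T_1$ already lies in $Q_\emptyset$, that set being closed under successors. Two differences deserve mention. First, you separate out event (A)---acceptance of a $\realDot$-free word---and treat it explicitly via SCCs of the digit-only graph; the paper's characterisation, specialised to $d_{\mathrm{seq}}=1$, only constrains the second $\realDot$ and does not by itself rule out (A), so your decomposition is the more complete one. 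Second, you compute $V$ forward (SCCs plus flag propagation in Tarjan's reverse topological order) precisely to avoid building predecessor lists, whereas the paper's computation of $Q_\emptyset$ does materialise them; your version is what actually meets the stated $O(n)$ space bound, since the reverse adjacency structure has size $\Theta(nb^d)$.
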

\begin{theorem}\label{theo:d-seq}
  Let $\Aut$ be an automaton over alphabet $\digitDotSet$ with $n$
  states. It is decidable in time $\bigO{nbd}$ and space $\bigO{nd}$
  whether $\Aut$ is a $d$-sequential automaton.
\end{theorem}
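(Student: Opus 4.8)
The plan is to reduce the $d$-sequential recognition problem to checking a structural condition on the automaton $\Aut$ over $\digitDotSet$, namely that $\toInfWord{\Aut}$ is contained in $\digitSet^{d\N}\realDot\digitSet^{\omega}$; that is, every accepted word has its unique $\realDot$ at a position that is a multiple of $d$, has at least one letter after the $\realDot$ contributing to an infinite fractional part, and has exactly one $\realDot$ overall. First I would observe that ``$\Aut$ is $d$-sequential'' is equivalent to the conjunction of three properties of the accepted language: (i) no accepted word contains zero $\realDot$'s; (ii) no accepted word contains two or more $\realDot$'s; (iii) in every accepted word the single $\realDot$ occurs at a position $\equiv 0 \bmod d$. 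Property (iii), together with the fact that after the $\realDot$ the remaining suffix is an arbitrary $\omega$-word over $\digitSet$, captures membership in $\digitSet^{d\N}\realDot\digitSet^{\omega}$.

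The key step is to track, for each state $q$ of $\Aut$, two pieces of finite information computed by a fixpoint/propagation over the transition graph: first, the set $P(q)\subseteq[d-1]$ of residues $r$ such that $q$ is reachable from $q_0$ by a word of length $\equiv r \bmod d$ (restricted to reading only $\digitSet$-letters, i.e. before any $\realDot$), and second, a Boolean recording whether $q$ can, by reading a $\realDot$ followed by some word, reach an accepting strongly connected component that loops on $\digitSet$-letters. Computing $P(q)$ amounts to a product construction of $\Aut$ with the cyclic group $\Z/d\Z$ restricted to $\digitSet$-edges, explored from $q_0$; since $\Aut$ has $n$ states, this product has $nd$ states, each with $b$ outgoing $\digitSet$-transitions, so a breadth-first exploration runs in time $\bigO{nbd}$ and space $\bigO{nd}$. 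Then $\Aut$ satisfies (iii) iff every state $q$ having an outgoing $\realDot$-edge that can lead to acceptance has $P(q)=\{0\}$ or more precisely $P(q)\subseteq\{0\}$; and (i) holds iff no accepting run avoids $\realDot$ forever, i.e. no accepting SCC is reachable from $q_0$ using only $\digitSet$-letters, which is read off the same product; and (ii) holds iff from no state reachable after one $\realDot$ (with acceptance still possible) can a second $\realDot$ be read on a path to an accepting SCC, again a reachability check in the $\digitDotSet$-graph, costing $\bigO{nb}$ time and $\bigO{n}$ space. Combining, the total is $\bigO{nbd}$ time and $\bigO{nd}$ space, matching the claim.

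The main obstacle I anticipate is handling the boundary cases correctly: a word like $v\realDot 0^{\omega}$ must be allowed even though its ``fractional part'' is all zeros, so the ``after-$\realDot$'' condition is simply that \emph{some} infinite $\digitSet$-word is accepted from $\del{q}{\realDot}$, with no further constraint; dually, one must not accidentally accept words with the $\realDot$ in the wrong residue class just because a parallel-looking loop exists before it. The cleanest way to manage this is to first intersect conceptually with the ``syntactic'' constraint and verify that the rejecting states form a trap, but since we may only inspect $\Aut$ rather than rebuild it, I would instead phrase everything as reachability queries in the explicit $nd$-state product graph and be careful that the definition of $d$-sequential language requires the natural part to have length a \emph{multiple} of $d$ (so the empty natural part, residue $0$, is permitted), which is exactly $P(q)\ni 0$. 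Verifying that these three reachability computations can share a single pass over the product, so that the constants do not blow up, is routine once the product is built.
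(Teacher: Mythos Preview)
Your approach is essentially the same as the paper's. The paper defines the sets $\modStates{i}=\{q: i\in P(q)\}$ for $i\in[d-1]$ (your residue sets), the set $\emptyStates$ of states from which no word is accepted (the complement of your ``can reach an accepting SCC''), and the set $\fraStates$ of states reachable after one $\realDot$; it then checks the single condition $\del{q}{\realDot}\in\emptyStates$ for every $q\in\fraStates\cup\bigcup_{i\ge 1}\modStates{i}$, which packages your conditions (ii) and (iii) together, and the complexity analysis via the $nd$-state product with $\Z/d\Z$ is identical.

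One organizational difference worth noting: you isolate condition (i) (no accepted word with zero $\realDot$'s) as a separate reachability check, whereas the paper's stated characterization in Proposition~\ref{lem:char-seq-par} does not obviously rule this case out; your explicit treatment is cleaner on that point. Conversely, the paper's formulation via $\emptyStates$ avoids your slightly informal ``$\realDot$-edge that can lead to acceptance'' and makes the single-pass implementation more transparent.
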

We prove both theorems  simultaneously. More precisely, we prove the
following proposition:
\begin{proposition}\label{prop:d-seq-par}
  Let $d_{\text{par}},d_{\text{seq}}>0$.  Let $\Aut$ be an automaton over alphabet
  $\digitDotSetDim[b][d_{\text{par}}]$ with $n$ states. It is decidable in time
  $\bigO{nb^{d_{\text{par}}}d_{\text{seq}}}$ and space $\bigO{nd_{\text{seq}}}$ whether $\Aut$
  recognize a subset of
  $\digitSetDim[b][d_{\text{par}}]^{d_{\text{seq}}\N}\realDot\digitSetDim[b][d_{\text{par}}]^{d-1}$.
\end{proposition}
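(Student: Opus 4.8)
The plan is to observe that the target language $L=\digitSetDim[b][d_{\text{par}}]^{d_{\text{seq}}\N}\realDot\digitSetDim[b][d_{\text{par}}]^{\omega}$ is recognised by a very small deterministic complete automaton $\mathcal T$, and to reduce the question ``$\toInfWord{\Aut}\subseteq L$'' to reachability and emptiness problems that can be solved within the stated budget. Concretely, I would take $\mathcal T$ with state set $\{0,\dots,d_{\text{seq}}-1,\star,\bot\}$, initial state $0$, single accepting state $\star$, where a digit letter $\tu a\in\digitSetDim[b][d_{\text{par}}]$ advances $i\mapsto(i+1)\bmod d_{\text{seq}}$ (for $i<d_{\text{seq}}$) and fixes $\star$ and $\bot$, while $\realDot$ maps $0$ to $\star$ and every other state to the absorbing sink $\bot$. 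This $\mathcal T$ has $d_{\text{seq}}+2$ states, satisfies $\toInfWord{\mathcal T}=L$, and — crucially — its two modes of rejection are disjoint and structurally different: a word outside $L$ either drives $\mathcal T$ into $\bot$ (exactly when it has a $\realDot$ at a position $\not\equiv0\bmod d_{\text{seq}}$, or two $\realDot$'s) or keeps it cycling forever through the ``pending'' block $P=\{0,\dots,d_{\text{seq}}-1\}$ (exactly when it is $\realDot$-free, $P$ being left only on a $\realDot$).

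With $\mathcal T$ in hand I would call a state $q$ of $\Aut$ \emph{useful} if $\changeIniState{\Aut}{q}$ accepts at least one $\omega$-word, and establish the characterisation: $\toInfWord{\Aut}\not\subseteq L$ if and only if either \textbf{(T1)} in the product $\Aut\times\mathcal T$ some state $(q,\bot)$ is reachable from $(q_0,0)$ with $q$ useful, or \textbf{(T2)} $\Aut$ accepts some $\realDot$-free $\omega$-word, i.e.\ a word over the sub-alphabet $\digitSetDim[b][d_{\text{par}}]$. The ``if'' direction is immediate: from a reachable useful $(q,\bot)$ one splices the path-label (already outside $L$, since its $\mathcal T$-run reached $\bot$) with a word accepted by $\changeIniState{\Aut}{q}$ to get a word in $\toInfWord{\Aut}\setminus L$; and a $\realDot$-free accepted word lies outside $L$ outright. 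For ``only if'', take $w\in\toInfWord{\Aut}\setminus L$ with an accepting run $\pi$, follow the unique run of $\mathcal T$ on $w$, and use the dichotomy above: if that run reaches $\bot$ at some position $j$ then $(\pi(j),\bot)$ is reachable from $(q_0,0)$ and $\pi(j)$ is useful (the tail of $\pi$ from $j$ is still an accepting run), giving (T1); otherwise $w$ contains no $\realDot$, giving (T2).

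The algorithm then has three parts. First, compute the useful states of $\Aut$ by the standard recipe — SCC decomposition, keep each SCC that contains both an accepting state and a cycle, then backward-reachability from those — in time $\bigO{nc}$ and space $\bigO n$, where $c=b^{d_{\text{par}}}+1$. Second, run a depth-first search of $\Aut\times\mathcal T$ from $(q_0,0)$, testing along the way whether a state $(q,\bot)$ with $q$ useful is ever reached, i.e.\ condition (T1). Third, decide (T2) by an ordinary Büchi-emptiness test on $\Aut$ with its $\realDot$-transitions deleted, again in time $\bigO{nc}$ and space $\bigO n$. The answer is ``yes'' precisely when neither (T1) nor (T2) fires. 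For the second part I would note that the product has at most $n(d_{\text{seq}}+2)$ states and hence at most $n(d_{\text{seq}}+2)c$ transitions, so the search runs in $\bigO{nd_{\text{seq}}b^{d_{\text{par}}}}$ time; summing over the three parts gives the claimed time $\bigO{nb^{d_{\text{par}}}d_{\text{seq}}}$.

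The point needing care — and the main obstacle — is the space bound $\bigO{nd_{\text{seq}}}$: it suffices for the vertex set of the product (a visited-mark array and a DFS stack, each of size $\bigO{nd_{\text{seq}}}$) but \emph{not} for its transition table of size $\bigO{nd_{\text{seq}}c}$, so the search must regenerate each product transition on demand in constant time from $\delta$ and from the hard-coded transitions of $\mathcal T$, never storing the product explicitly; the paper's cost model (constant-time transition functions and array operations) licenses exactly this. The other delicate spot is pinning down the characterisation correctly, in particular keeping the two rejection modes of $\mathcal T$ properly separated — the absorbing sink $\bot$ interacts with $\Aut$'s acceptance condition only through the ``useful'' predicate, since once in $\bot$ the $\Aut$-component is unconstrained, whereas the never-leaving-$P$ mode reduces to plain sub-alphabet emptiness.
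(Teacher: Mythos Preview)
Your argument is correct and your complexity accounting is careful, including the on-the-fly generation of product transitions to stay within the $\bigO{nd_{\text{seq}}}$ space bound. The route, however, differs from the paper's. The paper does not build an explicit target automaton or a product; instead it names three families of state sets directly: $\emptyStates$ (your non-useful states), $\modStates{i}$ for $i\in[d_{\text{seq}}-1]$ (states reachable by a digit-only word of length $\equiv i$), and $\fraStates$ (states reachable after one $\realDot$), computes them by fixpoint iterations in the same time/space, and then checks the single condition ``$\delta(q,\realDot)\in\emptyStates$ for every $q\in\fraStates\cup\bigcup_{i=1}^{d_{\text{seq}}-1}\modStates{i}$''. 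Morally the two approaches coincide: the paper's $\modStates{i}$ and $\fraStates$ are exactly the $\Aut$-projections of the reachable product states with $\mathcal T$-component $i$ and $\star$ respectively, and its condition is precisely your (T1). What your decomposition buys is a cleaner separation of failure modes: your clause (T2), the $\realDot$-free acceptance test, is an explicit second check that the paper's stated characterisation in \autoref{lem:char-seq-par} does not cover (an automaton accepting only $\digitSetDim[b][d_{\text{par}}]^{\omega}$ satisfies the paper's condition vacuously yet is not contained in the target language), so your version is in fact the more robust of the two. What the paper's version buys is a slightly lighter presentation with no auxiliary automaton and no product, at the cost of that missing case.
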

When $d_{\text{par}}=d$ and $d_{\text{seq}}=1$, this proposition implies
\autoref{theo:d-par}, and when $d_{\text{par}}=1$ and $d_{\text{seq}}=d$, this
proposition implies \autoref{theo:d-seq}.  In order to prove this
proposition, we introduce the following sets of states.
\begin{definition}\label{def:states}
  [$\emptyStates$, $\fraStates$ and
  $\modStates{i}$]\label{not:set-states} Let $\Aut$ be an automaton
  over alphabet $\digitDotSetDim[b][d_{\text{par}}]$.
  \begin{itemize}
  \item Let $\emptyStates$ \index{QO@$\emptyStates$} be the set of states
    $q$ such that $\mathcal{A}_{q}$ recognizes the empty language.
  \item For $i\in[d_{\text{seq}}-1]$, let $\modStates{i}$ be the set
    of states $\del{q}{\tu w}$ with
    $\tu w\in\digitSetDim[b][d_{\text{par}}]^{d_{\text{seq}}\N+i}$.
  \item Let $\fraStates$\index{Qfra@$\fraStates$} be the set of states
    $\del{q}{\tu w}$ with
    $\tu
    w\in\digitSetDim[b][d_{\text{par}}]^{*}\realDot\digitSetDim[b][d_{\text{par}}]^*$
  \end{itemize}
\end{definition}
Intuitively, while the automaton reads the natural part of a vector,
it visits successively a state of $\modStates 0$, a state of
$\modStates1$, \dots, a state of $\modStates d-1$, and then, again a
state of $\modStates 0$ and so on. Similarly, while the automaton read
the fractional part of the word, it visits states of $\fraStates$. We
could prove that, if $\Aut$ accepts a subset of
$\digitSetDim[b][d_{\text{par}}]^{d_{\text{seq}}\N}\realDot\digitSetDim[b][d_{\text{par}}]^{\omega}$,
then the intersection of two of those sets is included in
$\emptyStates$.  We now give example of those set of states.
\begin{example}
  Let $\Aut$ be the automaton pictured in \autoref{ex:min-eq-par},
  $d_{\text{par}}=d$ and $d_{\text{seq}}=1$. Then
  $\emptyStates=\set\emptyState$, $\modStates{0}=\set{\iniState}$ and
  $\fraStates=\set{\zuState}$.

  Let $\Aut$ be the automaton pictured in \autoref{ex:min-eq-seq},
  $d_{\text{par}}=1$ and $d_{\text{seq}}=2$. Then
  $\emptyStates=\set\emptyState$, $\modStates{i}=\set{q_{i}}$ and
  $\fraStates=\set{\zuState,\emptyState}$.
\end{example}
We now characterize the automata accepting a subset of
$\digitSetDim[b][d_{\text{par}}]^{d_{\text{seq}}\N}\realDot\digitSetDim[b][d_{\text{par}}]^\omega$
using sets introduced in \autoref{def:states}. We then characterize
those sets of states. All characterizations of those objects are
straightforward from their definitions.
\begin{proposition}\label{lem:char-seq-par}
  Let $\Aut$ be an automaton over alphabet
  $\digitDotSetDim[b][d_{\text{par}}]$. It accepts a subset of
  $\digitSetDim[b][d_{\text{par}}]^{d_{\text{seq}}\N}\realDot\digitSetDim[b][d_{\text{par}}]^{d-1}$
  if and only if, $\del{q}{\realDot}\in\emptyStates$ for each state
  $q\in\fraStates\cup\bigcup_{i=1}^{d-1}\modStates{i}$.
\end{proposition}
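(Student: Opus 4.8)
The plan is to prove the two implications of the equivalence separately, in each direction by translating between properties of the $\omega$-words accepted by $\Aut$ and properties of the run of $\Aut$ through the state families of \autoref{def:states}. The only facts about those families I will use are immediate from their definitions: if $\tu v\in\digitSetDim[b][d_{\text{par}}]^{*}$ then $\del{\iniState}{\tu v}\in\modStates{\ell}$ for $\ell=\length{\tu v}\bmod d_{\text{seq}}$; if $\tu v\in\digitSetDim[b][d_{\text{par}}]^{*}\realDot\digitSetDim[b][d_{\text{par}}]^{*}$ then $\del{\iniState}{\tu v}\in\fraStates$; and $q\in\emptyStates$ exactly when $\Aut_{q}$ accepts no $\omega$-word. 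Throughout, let $G$ denote the target language $\digitSetDim[b][d_{\text{par}}]^{d_{\text{seq}}\N}\realDot\digitSetDim[b][d_{\text{par}}]^{\omega}$, that is, the $\omega$-words having exactly one occurrence of $\realDot$ whose prefix up to that occurrence has length a multiple of $d_{\text{seq}}$.

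For the implication ``$\Leftarrow$'', I would assume $\del{q}{\realDot}\in\emptyStates$ for every $q\in\fraStates\cup\bigcup_{i=1}^{d_{\text{seq}}-1}\modStates{i}$, take $\tu w\in\toInfWord{\Aut}$, consider the unique (hence accepting) run $\pi$ of $\Aut$ on $\tu w$, and deduce $\tu w\in G$. Let $p$ be the position of the first occurrence of $\realDot$ in $\tu w$ (the case where there is none is discussed below). Since $\prefix{\tu w}{p}\in\digitSetDim[b][d_{\text{par}}]^{*}$ has length $p$, we have $\pi(p)=\del{\iniState}{\prefix{\tu w}{p}}\in\modStates{p\bmod d_{\text{seq}}}$; hence if $p\not\equiv 0\pmod{d_{\text{seq}}}$ the hypothesis gives $\pi(p+1)=\del{\pi(p)}{\realDot}\in\emptyStates$, which is impossible because the tail of $\pi$ from position $p+1$ witnesses that $\Aut_{\pi(p+1)}$ accepts $\suffix{\tu w}{p+1}$. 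So $d_{\text{seq}}$ divides $p$. For the same reason, a second occurrence of $\realDot$ in $\tu w$, at a position $p'>p$, would give $\prefix{\tu w}{p'}\in\digitSetDim[b][d_{\text{par}}]^{*}\realDot\digitSetDim[b][d_{\text{par}}]^{*}$, hence $\pi(p')\in\fraStates$ and $\pi(p'+1)\in\emptyStates$, again impossible; therefore $\realDot$ occurs in $\tu w$ exactly once, at a position divisible by $d_{\text{seq}}$, i.e. $\tu w\in G$.

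For the implication ``$\Rightarrow$'', I would argue by contraposition: suppose some $q\in\fraStates\cup\bigcup_{i=1}^{d_{\text{seq}}-1}\modStates{i}$ has $\del{q}{\realDot}\notin\emptyStates$, so that $\Aut_{\del{q}{\realDot}}$ accepts some $\omega$-word $\tu u$. If $q\in\modStates{i}$ for some $1\le i\le d_{\text{seq}}-1$, write $q=\del{\iniState}{\tu v}$ with $\tu v\in\digitSetDim[b][d_{\text{par}}]^{*}$ of length $\equiv i\pmod{d_{\text{seq}}}$; then $\Aut$ accepts $\tu v\realDot\tu u$, whose prefix before its first $\realDot$ has length $\length{\tu v}\not\equiv 0\pmod{d_{\text{seq}}}$, so $\tu v\realDot\tu u\notin G$. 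If instead $q\in\fraStates$, write $q=\del{\iniState}{\tu v}$ with $\tu v\in\digitSetDim[b][d_{\text{par}}]^{*}\realDot\digitSetDim[b][d_{\text{par}}]^{*}$; then $\Aut$ accepts $\tu v\realDot\tu u$, which has at least two occurrences of $\realDot$, so again $\tu v\realDot\tu u\notin G$. In either case $\toInfWord{\Aut}\not\subseteq G$, the contrapositive of the claim.

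The point I expect to require the most care is the case set aside above: an accepted $\omega$-word with no occurrence of $\realDot$, which is never in $G$, so one must verify that it is excluded --- either by a standing hypothesis on $\Aut$ that is in force in this part of the paper, or via a companion observation ruling out $\realDot$-free accepted words alongside the present proposition. Apart from that, everything reduces to the bookkeeping underlying the first two sentences of the ``$\Leftarrow$'' argument --- that a $\realDot$-free prefix of length $\ell$ lands in $\modStates{\ell\bmod d_{\text{seq}}}$, and that the run cannot leave $\fraStates$ after a $\realDot$ has been read without entering $\emptyStates$ --- which, as the paper notes, is straightforward from the definitions of $\modStates{i}$ and $\fraStates$; I would isolate these as two short observations before proving the two implications.
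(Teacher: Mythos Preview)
The paper offers no proof here, declaring the characterization ``straightforward from the definitions''; your two-direction argument is exactly the natural unpacking of that claim, and both implications you wrote are correct as far as they go.

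The point you flag about $\realDot$-free accepted words is not a mere detail requiring care but a genuine gap in the \emph{statement} itself. Take $d_{\text{seq}}=d_{\text{par}}=1$ and the two-state weak automaton whose accepting initial state $q_{0}$ loops on every digit and sends $\realDot$ to a rejecting sink. Then $\bigcup_{i=1}^{0}\modStates{i}=\emptyset$, $\fraStates=\{\text{sink}\}=\emptyStates$, and $\del{\text{sink}}{\realDot}=\text{sink}\in\emptyStates$, so the right-hand side of the proposition is satisfied; yet $0^{\omega}$ is accepted and lies outside $G$. Thus the ``$\Leftarrow$'' implication fails as written, and the defect propagates unchanged to the algorithm in the proof of \autoref{prop:d-seq-par}. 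There is no standing hypothesis in this part of the paper that excludes such automata; a correct formulation would add the clause that $\Aut$ accepts no word of $\digitSetDim[b][d_{\text{par}}]^{\omega}$ (equivalently, that no accepting recurrent state is reachable from $q_{0}$ by digit-only transitions). You were right to isolate this case rather than assume it away.
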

\begin{lemma}\label{lem:char-empty}
  The set $\emptyStates$ is the greatest set of states included in
  $Q$, which does not contain the accepting recurrent states and which
  is closed under taking predecessor.
\end{lemma}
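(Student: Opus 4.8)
The plan is to verify the three asserted properties — that $\emptyStates$ contains no accepting recurrent state, that it is closed under taking predecessors, and that it is the greatest such set — directly from the definition of $\emptyStates$ as the set of states $q$ with $\toInfWord{\Aut_q} = \emptyset$.

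First I would observe that if $q$ is an accepting recurrent state, say $q$ lies on a cycle $q \xrightarrow{w} q$ with $w \in \alphabet^{+}$ and $q \in F$, then $w^{\omega}$ is accepted by $\Aut_q$ (the run visits $q$ infinitely often), so $\toInfWord{\Aut_q} \ne \emptyset$ and $q \notin \emptyStates$. Hence $\emptyStates$ avoids all accepting recurrent states. Next, for closure under predecessors: if $q' = \del{q}{a}$ for some letter $a$ and $q' \in \emptyStates$, then for every $\omega$-word $w'$, since $\Aut_{q'}$ rejects $w'$, the word $aw'$ is rejected by $\Aut_q$; as every $\omega$-word over $\alphabet$ has this form, $\toInfWord{\Aut_q} = \emptyset$, so $q \in \emptyStates$. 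Thus $\emptyStates$ is closed under taking predecessors.

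It remains to show maximality: let $S \subseteq Q$ be any set that contains no accepting recurrent state and is closed under taking predecessors; I claim $S \subseteq \emptyStates$. Suppose toward a contradiction that some $q \in S$ has $\toInfWord{\Aut_q} \ne \emptyset$, and fix an accepting run $\pi$ of $\Aut_q$ on some word $w$. An easy induction on $i$, using closure of $S$ under predecessors (equivalently, that the successor of a state not in $S$ is not in $S$ — the contrapositive), shows $\pi(i) \in S$ for all $i$: indeed $\pi(0) = q \in S$, and if $\pi(i) \in S$ then $\pi(i+1) = \del{\pi(i)}{w[i]}$, whose predecessor $\pi(i)$ is in $S$, forces $\pi(i+1) \in S$ since $S$ is closed under predecessors. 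Now since $\pi$ is accepting, some accepting state $q_F \in F$ occurs infinitely often in $\pi$; pick two occurrences at positions $i < j$ with $\pi(i) = \pi(j) = q_F$. Then the factor $w[i]\cdots w[j-1]$ is a nonempty word labelling a cycle on $q_F$, so $q_F$ is an accepting recurrent state, and $q_F = \pi(i) \in S$ — contradicting the hypothesis on $S$. Hence no such $q$ exists, $S \subseteq \emptyStates$, and $\emptyStates$ is the greatest set with the stated properties.

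The only mildly delicate point is making precise that "closed under taking predecessor" is being used in the direction that propagates membership backwards along transitions, which is exactly what is needed both for the predecessor-closure of $\emptyStates$ itself and for the induction showing an accepting run starting in $S$ stays in $S$; once that is pinned down, every step is a one-line consequence of the definitions, so I expect no real obstacle.
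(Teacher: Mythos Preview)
Your predecessor-closure argument has a genuine gap. You fix a single letter $a$ with $\del{q}{a} = q' \in \emptyStates$ and then assert that ``every $\omega$-word over $\alphabet$ has the form $aw'$'', concluding $\toInfWord{\Aut_q} = \emptyset$. When $\card{\alphabet} > 1$ this is simply false: a word beginning with a different letter $b$ need not be rejected by $\Aut_q$, since $\del{q}{b}$ may lie outside $\emptyStates$. Concretely, if $\del{q}{a}$ is a rejecting sink and $\del{q}{b}$ an accepting sink, then $q$ is a predecessor of a state in $\emptyStates$ yet $q \notin \emptyStates$.

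This counterexample shows more: $\emptyStates$ is in general \emph{not} closed under taking predecessors, so the lemma as printed is misstated. The intended property is closure under taking \emph{successors} (equivalently, $Q \setminus \emptyStates$ is predecessor-closed, which is exactly what the algorithm in the proof of \autoref{lem:alg:sets} exploits when it propagates ``non-empty'' backwards from the accepting recurrent states). Your own argument is internally inconsistent on this point: the maximality step uses the forward implication $\pi(i) \in S \Rightarrow \pi(i+1) \in S$, which is successor-closure, not predecessor-closure. Under the corrected statement your plan goes through cleanly --- the closure step becomes the one-liner ``if $q \in \emptyStates$ and $\Aut_{\del{q}{a}}$ accepted some $w'$, then $\Aut_q$ would accept $aw'$'', and your maximality argument is then correct as written. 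The paper itself gives no proof, declaring the characterization straightforward.
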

\begin{lemma}\label{lem:char-mod}
  The sets $\modStates{0},\dots,\modStates{d_{\text{seq}}-1}$ is the
  smallest family of sets such that $q_{0}\in\modStates{0}$ and for
  each $i\in[d_{\text{seq}}-1]$, for each $q\in\modStates{i}$ and for
  each $\tu a\in\digitSetDim[b][d_{\text{par}}]$, the state
  $\del{q}{\tu a}$ belongs to $\modStates{i+1}$.
\end{lemma}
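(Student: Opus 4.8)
The plan is to prove the two facts that together characterize the least family meeting the conditions: first, that $\modStates{0},\dots,\modStates{d_{\text{seq}}-1}$ themselves satisfy the two stated closure properties; second, that they are contained componentwise in any family that does. Before either step I would fix the convention that the index on $\modStates{\bullet}$ (and on any competing family $X_0,\dots,X_{d_{\text{seq}}-1}$) is read modulo $d_{\text{seq}}$, so that $\modStates{i+1}$ denotes $\modStates{0}$ when $i=d_{\text{seq}}-1$; this wraparound is precisely what makes the recursion agree with the definition of $\modStates{i}$ as the set of states reached from $q_0$ by a word over $\digitSetDim[b][d_{\text{par}}]$ of length $\equiv i\pmod{d_{\text{seq}}}$, since reading one further letter turns a length $d_{\text{seq}}k+(d_{\text{seq}}-1)$ into $d_{\text{seq}}(k+1)$.

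For the first fact, I would observe that $q_0=\del{q_0}{\epsilon}$ and the empty word has length $0$, so $q_0\in\modStates{0}$; and that if $q=\del{q_0}{\tu w}\in\modStates{i}$ with $\length{\tu w}=d_{\text{seq}}k+i$, then for any $\tu a\in\digitSetDim[b][d_{\text{par}}]$ the state $\del{q}{\tu a}=\del{q_0}{\tu w\tu a}$ is reached by a word of length $d_{\text{seq}}k+i+1$, which belongs to $\digitSetDim[b][d_{\text{par}}]^{d_{\text{seq}}\N+((i+1)\bmod d_{\text{seq}})}$ (the carry into a fresh block being absorbed when $i+1=d_{\text{seq}}$); hence $\del{q}{\tu a}\in\modStates{(i+1)\bmod d_{\text{seq}}}$, which is exactly the required closure.

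For the second fact, let $X_0,\dots,X_{d_{\text{seq}}-1}$ be any family with $q_0\in X_0$ and with $\del{q}{\tu a}\in X_{(i+1)\bmod d_{\text{seq}}}$ whenever $q\in X_i$ and $\tu a\in\digitSetDim[b][d_{\text{par}}]$. Writing $n=\length{\tu w}$, I would prove by induction on $n$ that $\del{q_0}{\tu w}\in X_{n\bmod d_{\text{seq}}}$ for every $\tu w\in\digitSetDim[b][d_{\text{par}}]^{*}$: the base case $\tu w=\epsilon$ is the hypothesis $q_0\in X_0$, and for $n>0$ one writes $\tu w=\tu w'\tu a$, gets $\del{q_0}{\tu w'}\in X_{(n-1)\bmod d_{\text{seq}}}$ from the induction hypothesis, and applies the closure hypothesis to that state and the letter $\tu a$, landing in $X_{((n-1)+1)\bmod d_{\text{seq}}}=X_{n\bmod d_{\text{seq}}}$. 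Since every element of $\modStates{i}$ has the form $\del{q_0}{\tu w}$ with $\length{\tu w}\equiv i\pmod{d_{\text{seq}}}$, this gives $\modStates{i}\subseteq X_i$ for every $i$. Combining the two facts, $\modStates{0},\dots,\modStates{d_{\text{seq}}-1}$ satisfy the conditions and sit componentwise below every family that does, so they form the smallest such family. I do not anticipate any real obstacle here; the only care needed is in the cyclic index arithmetic, and the whole argument is the routine least-fixpoint unwinding of the monotone operator defined by the two conditions --- dual in spirit to the fixpoint characterization of $\emptyStates$ recorded in \autoref{lem:char-empty}.
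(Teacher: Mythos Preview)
Your proof is correct and is precisely the routine least-fixpoint argument one expects here. The paper does not actually spell out a proof of this lemma: it merely remarks before stating it that ``all characterizations of those objects are straightforward from their definitions,'' so your write-up is a faithful (and more detailed) unfolding of exactly what the paper intends.
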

\begin{lemma}\label{lem:char-fra}
  The set $\fraStates$ is the smallest set containing all sets of the
  form $\del{q}{\realDot}$ for $q\in\bigcup_{i=0}^{d-1}\modStates{i}$
  and $\del{q}{\tu a}$ for $q\in\fraStates$ and
  $\tu a\in\digitSetDim[b][d_{\text{par}}]$.
\end{lemma}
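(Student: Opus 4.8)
The plan is to prove the two inclusions separately. First I would argue that $\fraStates$ is closed under the two generating operations, so that $\fraStates$ contains the smallest such set. Indeed, if $q\in\modStates{i}$ for some $i\in[d-1]$, then by \autoref{def:states} there is $\tu w\in\digitSetDim[b][d_{\text{par}}]^{d_{\text{seq}}\N+i}$ with $q=\del{q_{0}}{\tu w}$; then $\tu w\realDot\in\digitSetDim[b][d_{\text{par}}]^{*}\realDot\digitSetDim[b][d_{\text{par}}]^{*}$, so $\del{q}{\realDot}=\del{q_{0}}{\tu w\realDot}\in\fraStates$. Likewise, if $q=\del{q_{0}}{\tu v\realDot\tu u}\in\fraStates$ with $\tu v,\tu u$ over $\digitSetDim[b][d_{\text{par}}]$, then for any $\tu a$ the word $\tu v\realDot\tu u\tu a$ is still in $\digitSetDim[b][d_{\text{par}}]^{*}\realDot\digitSetDim[b][d_{\text{par}}]^{*}$, so $\del{q}{\tu a}\in\fraStates$. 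Hence $\fraStates$ is one of the sets closed under the stated operations, and the smallest such set is contained in it.

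For the reverse inclusion, I would take an arbitrary $q\in\fraStates$, pick a witness $q=\del{q_{0}}{\tu v\realDot\tu u}$ with $\tu v\in\digitSetDim[b][d_{\text{par}}]^{*}$ and $\tu u\in\digitSetDim[b][d_{\text{par}}]^{*}$, and induct on $\length{\tu u}$. If $\length{\tu u}=0$ then $q=\del{\del{q_{0}}{\tu v}}{\realDot}$, and $\del{q_{0}}{\tu v}\in\modStates{i}$ where $i=\length{\tu v}\bmod d_{\text{seq}}$, so $q$ lies in the first family of generators; here one should note $\length{\tu v}$ need not be a multiple of $d_{\text{seq}}$, which is why $\modStates{i}$ for all $i\in[d_{\text{seq}}-1]$ (not just $i=0$) must be allowed, matching the statement. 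If $\length{\tu u}>0$, write $\tu u=\tu u'\tu a$; then $q=\del{q'}{\tu a}$ with $q'=\del{q_{0}}{\tu v\realDot\tu u'}\in\fraStates$ via a shorter witness, so by induction $q'$ is in the smallest set, hence so is $q=\del{q'}{\tu a}$.

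The only mildly delicate point — and the main obstacle, such as it is — is bookkeeping the indices: making sure the case $\length{\tu u}=0$ really does land in $\modStates{i}$ for the correct residue $i$ and that the statement of the lemma (which quantifies $q$ over $\bigcup_{i=0}^{d-1}\modStates{i}$) covers exactly these states. Everything else is a routine structural induction on the length of the word read after the $\realDot$, using only the definition of $\del{\cdot}{\cdot}$ on concatenations and \autoref{def:states}. As the paper itself notes, the characterization is straightforward from the definitions, so the proof is short.
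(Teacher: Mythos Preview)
Your proposal is correct and is exactly the natural way to spell out what the paper leaves implicit: the paper gives no proof of \autoref{lem:char-fra} beyond the remark that ``all characterizations of those objects are straightforward from their definitions,'' and your two-inclusion argument with induction on $\length{\tu u}$ is precisely the routine verification being alluded to. The bookkeeping you flag (that the base case lands in $\modStates{i}$ for $i=\length{\tu v}\bmod d_{\text{seq}}$, which is why the union over all residues is needed) is the only point requiring any care, and you handle it correctly.
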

It is now explained how to compute efficiently those sets.
\begin{lemma}\label{lem:alg:sets}
  All sets of \autoref{def:states} are computable in time
  $\bigO{nb^{d_{\text{par}}}d_{\text{seq}}}$ and space
  $\bigO{nd_{\text{seq}}}$.
\end{lemma}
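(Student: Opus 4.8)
The plan is to prove \autoref{lem:alg:sets} by giving a concrete
algorithm for each of the three families of sets from
\autoref{def:states}, relying on the inductive characterizations just
established in \autoref{lem:char-empty}, \autoref{lem:char-mod} and
\autoref{lem:char-fra}. Throughout, recall from \autoref{sec:time-space}
that a subset of $Q$ is stored as an array of $n$ Booleans, that array
creation and mutation are constant time, and that one evaluation of
$\delta$ is constant time; since $\Aut$ is over
$\digitDotSetDim[b][d_{\text{par}}]$, each state has
$b^{d_{\text{par}}}+1$ outgoing transitions, so a single sweep over all
transitions costs $\bigO{nb^{d_{\text{par}}}}$. The whole computation must
fit in time $\bigO{nb^{d_{\text{par}}}d_{\text{seq}}}$ and space
$\bigO{nd_{\text{seq}}}$, so I have room for a constant number of sweeps per
``layer'' $i\in[d_{\text{seq}}-1]$, plus a constant number of global
sweeps.

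\textbf{Computing $\emptyStates$.}
By \autoref{lem:char-empty}, $\emptyStates$ is the largest
predecessor-closed set avoiding the accepting recurrent states,
equivalently the complement of the set of states from which some
accepting recurrent state is reachable. First I would identify the
accepting recurrent states: compute the strongly connected components of
the transition graph (Tarjan, linear in the number of edges, i.e.
$\bigO{nb^{d_{\text{par}}}}$, space $\bigO{n}$), and mark a state as
accepting-recurrent if it is accepting and its SCC has an internal edge
(or it has a self-loop). Then do a backward reachability: seed a worklist
with all accepting-recurrent states and propagate along reversed
transitions, marking everything reached; $\emptyStates$ is the set of
unmarked states. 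This is one more linear sweep, so $\bigO{nb^{d_{\text{par}}}}$
time and $\bigO{n}$ space total — well within budget (it does not even
use the $d_{\text{seq}}$ factor).

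\textbf{Computing the $\modStates{i}$ and $\fraStates$.}
By \autoref{lem:char-mod}, the $\modStates{i}$ are obtained by a
straightforward forward saturation: initialise $\modStates{0}:=\set{q_0}$
and all other $\modStates{i}:=\emptyset$; then repeatedly, for each
$i\in[d_{\text{seq}}-1]$, each $q\in\modStates{i}$ and each
$\tu a\in\digitSetDim[b][d_{\text{par}}]$, add $\del{q}{\tu a}$ to
$\modStates{(i+1)\bmod d_{\text{seq}}}$, until no set changes. Each state can
be inserted into each of the $d_{\text{seq}}$ layers at most once, and each
insertion triggers $b^{d_{\text{par}}}$ transition lookups, so the total work
is $\bigO{nb^{d_{\text{par}}}d_{\text{seq}}}$; storing $d_{\text{seq}}$ Boolean arrays of
length $n$ is $\bigO{nd_{\text{seq}}}$ space, and a single shared worklist of
(state, layer) pairs adds only $\bigO{nd_{\text{seq}}}$. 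Once the
$\modStates{i}$ are known, \autoref{lem:char-fra} gives $\fraStates$ by
one more forward saturation: seed $\fraStates$ with $\del{q}{\realDot}$
for every $q\in\bigcup_{i=0}^{d_{\text{seq}}-1}\modStates{i}$, then close under
$\del{q}{\tu a}$ for $q\in\fraStates$ and
$\tu a\in\digitSetDim[b][d_{\text{par}}]$; this is again at most one insertion
per state followed by $b^{d_{\text{par}}}$ lookups, hence
$\bigO{nb^{d_{\text{par}}}}$ time and $\bigO{n}$ space. Summing the three
phases gives the claimed bounds.

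\textbf{Main obstacle.}
The conceptually delicate point is not any single computation but making
sure the cost of the $\modStates{i}$-saturation is correctly amortised:
one must argue that a state $q$ is processed out of layer $i$ at most
once, which is exactly what letting the worklist carry (state, layer)
pairs and checking membership before enqueueing guarantees. A secondary
care point is the wrap-around index $(i+1)\bmod d_{\text{seq}}$ in
\autoref{lem:char-mod} (layer $d_{\text{seq}}-1$ feeds back into layer $0$),
which must be handled so that the fixpoint is the genuine smallest family
and not a truncated prefix; with the worklist formulation this is
automatic. Everything else is a routine transcription of the inductive
characterisations into standard graph-search loops, and I would present
it as such without dwelling on the bookkeeping.
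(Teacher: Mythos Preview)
Your proposal is correct and follows essentially the same approach as the paper: Tarjan's algorithm plus backward reachability for $\emptyStates$, and a worklist-driven forward saturation over (state, layer) pairs for the $\modStates{i}$ (with $\fraStates$ handled analogously). The amortisation argument you flag as the main obstacle is exactly the one the paper relies on, and your treatment of the wrap-around index matches the paper's use of \autoref{lem:char-mod}.
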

\begin{proof}
  Let us first consider the set $\emptyStates$.  The algorithms is a
  straightforward application of the characterization given in Lemma
  \ref{lem:char-empty}.  Tarjan's algorithm \cite{Tarjan} can be used
  to compute the set of strongly connected component in time
  $\bigO{nb^{d_{\text{par}}}}$, and therefore the set of recurrent
  states. Furthermore, it is easy to associate in linear time to each
  state its set of predecessors. Let $p_{q}$ be the number of
  predecessors of a state $q$.
  
  Two sets $\texttt{PotentiallyEmpty}$ and $\texttt{ToProcess}$ are
  used by the algorithm.  The algorithm initializes the set
  $\texttt{PotentiallyEmpty}$ to $Q$ and initializes the set
  $\texttt{ToProcess}$ to the empty set. The algorithm runs on each
  recurrent state $q$. For each state $q$, if $q$ is accepting, then
  $q$ is removed from $\texttt{PotentiallyEmpty}$ and added to
  $\texttt{ToProcess}$. The algorithm then runs on each element $q$ of
  $\texttt{ToProcess}$. For each state $q$, the algorithm removes $q$
  from $\texttt{ToProcess}$ and runs on each predecessors $q'$ of $q$.
  For each $q'$, if $q'$ is in $\texttt{PotentiallyEmpty}$, then $q'$
  is removed from $\texttt{PotentiallyEmpty}$ and added to
  $\texttt{ToProcess}$. Finally, when $\texttt{ToProcess}$ is empty,
  the algorithm halts and $\emptyStates$ is the value of
  $\texttt{PotentiallyEmpty}$.

  Let us now consider the complexity of this algorithm. At most $n$
  states are added to $\texttt{ToProcess}$, and each state is added at
  most once. For each state $q$ added to $\texttt{ToProcess}$, each of
  its $p_{q}$ predecessor is considered in constant time. Thus the
  algorithm runs in time
  $\bigO{n+ \sum_{q\in Q}p_{q}}=\bigO{nb^{d_{\text{par}}}}$.

  \paragraph{}  
  The $d_{\text{seq}}$-sets $\modStates{i}$ are computed
  simultaneously, using the characterization given in
  \autoref{lem:char-mod}.  The computation of the state $\fraStates$
  is similar.  The $d$ sets $\modStates{i}$ are initialized to the
  empty set, and $\texttt{ToProcess}$ is initialized to
  $\set{(\iniState,0)}$.  The algorithm runs on each
  $(q,i)\in\texttt{ToProcess}$. For each $(q,i)$, the pair $(q,i)$ is
  removed from $\texttt{ToProcess}$ and added into
  $\modStates{i}$. The algorithm runs on each
  $\tu a\in\digitSetDim[b][d_{\text{par}}]$. For each $\tu a$, if
  $\del{q}{\tu a}\not\in\modStates{\left(i+1\mod
      d_{\text{seq}}\right)}$, then $\del{q}{\tu a}$ is added into
  $\texttt{ToProcess}$. When $\texttt{ToProcess}$ is empty, all states
  of $\modStates{i}$ are indeed added to this set.

  Let us consider the complexity. Each pair $(q,i)$ is removed
  at most once from $\texttt{ToProcess}$, thus the outer loop is
  executed at moste $\bigO{nd_{\text{seq}}}$ times. Each execution of this loop
  clearly runs in time $\bigO{b^{d_{\text{par}}}}$. This algorithm stores a
  number, a state, and $d_{\text{seq}}$ set of states, thus it clearly takes space
  $\bigO{nd_{\text{seq}}}$.



\end{proof}
We now prove \autoref{prop:d-seq-par}.
\begin{proof}
  The algorithm computes the sets of \autoref{def:states}.  The
  algorithm runs on each
  $q\in\fraStates\cup\bigcup_{i=1}^{d-1}\modStates{i}$ and rejects if
  $\del{q}{\realDot}\not\in\emptyStates$.  By
  \autoref{lem:char-seq-par}, this algorithm accepts if an only if
  $\Aut$ accepts a subset of
  $\digitSetDim[b][d_{\text{par}}]^{d_{\text{seq}}\N}\realDot\digitSetDim[b][d_{\text{par}}]^{d-1}$.
  By \autoref{lem:alg:sets}, those sets can be computed in time
  $\bigO{nb^{d_{\text{par}}}d_{\text{seq}}}$ and space
  $\bigO{nd_{\text{seq}}}$, and the loops clearly runs in time
  $\bigO{n}$ and constant space. Thus, the whole algorithm runs in
  time $\bigO{nb^{d_{\text{par}}}d_{\text{seq}}}$ and space
  $\bigO{nd_{\text{seq}}}$.
\end{proof}
\section{Fixing a component}\label{sec:fix-01}
In order to consider the two encodings of rational numbers, we must
consider the vector of words $\pair{\tu w}{S}$ such that a suffix of some
component $w_{f}$ belongs to $0^{\omega}$ or $(b-1)^{\omega}$.  More
precisely, at some points of the run, the automaton will only have to
$0^{\omega}$ or $(b-1)^{\omega}$ in some component $f$. Since a
component is fixed, we may change the alphabet to fix this letter. We
do it by replacing this $f$-th component by an atomic symbol
$\square$.

We define a function which remove some component of a word in
\autoref{seq:modif-vector}. We introduce the automata which reads
those new words in \autoref{sec:mod-aut-par}.
\subsection{Vector of words}\label{seq:modif-vector}
We now introduced a new alphabet. Letters of this alphabet correspond
to letters of $\digitSetDim$ with some component fixed. We could have
considered $\digitSetDim[b][d-1]$, but this would lead to trouble when
$d=1$. Indeed, a word would then be an element of
$()^{*}\realDot()^{\omega}$, where $()$ is the unique 0 tuple, and it
would not be clear what the sequentialization of such a word would
be. Instead of removing a component, we choose to replace it with an
atomic symbol $\square$. The formal definition is now given.
\begin{definition}[$\squarphabetDim$]
  \label{def:change}
  \index{Sigma@$\squarphabetDim$}\index{wia@$\changePos{f}{\pair{\tu w}{S}}{z}$}
  For $f\in[d-1]$, let
  $\squarphabetDim=\left(\digitSetDim[b][f-1]\otimes\set{\square}\otimes\digitSetDim[d][b-f]\right)$.
\end{definition}
We now introduce a notation to change a word by fixing one of its component.
\begin{definition}[$\changePos{f}{\pair{\tu w}{S}}{z}$] We first
  consider $d$-parallel words.  Let $f\in[d-1]$, $z\in\squarphabet$,
  $\tu w$ a word whose alphabet is a set of $d$-tuples.  Let
  $S\subseteq\mathbb N$. Let $\changePos{f}{\pair{\tu w}{S}}{z}$ be
  $\pair{\tu w'}{S}$, where $\length{\tu w'}=\length{\tu w}$,
  $w'_{f}\in z^{\infty}$, and $w'_{i}=w_{i}$ for each $i\ne f$.

  We now consider $d$-sequential words. Let
  $\pair{w}{S}\in\left(\squarphabetDot\right)^{\infty}$, then
  $\changePos{f}{\pair{w}{S}}{z}$ is
  $\pair{\seqen{\changePos{f}{\paral{w}}{z}}}{S}$. Equivalently, this
  transformation consists in replacing each letter whose position is
  equivalent to $f$ modulo $d$ - not counting the $\realDot$'s - by
  the letter $z$.
\end{definition}
The following three lemmas about $\fix^{z@f}$ are straightforward
consequences from its definition.
\begin{lemma}\label{lem:change-move}
  Let $0\le i<f<d$ integers, $z\in\squarphabet$,
  $v\in\left(\squarphabet\right)^{i}$ and
  $w\in\left(\squarphabetDot\right)^{\infty}$, then
  $\changePos{f}{vw}{z}=v\changePos{f-i}{w}{z}$.
\end{lemma}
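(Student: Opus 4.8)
Lemma~\ref{lem:change-move}: for integers $0\le i<f<d$, a letter $z\in\squarphabet$, a word $v\in\left(\squarphabet\right)^{i}$ and a word $w\in\left(\squarphabetDot\right)^{\infty}$, we have $\changePos{f}{vw}{z}=v\changePos{f-i}{w}{z}$.

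\paragraph{Plan.} The plan is to unwind the definition of $\changePos{f}{\cdot}{z}$ for sequential words and check the two sides letter by letter. Recall that $\changePos{f}{\pair{u}{S}}{z}$ replaces, in the word $u$, every letter whose position modulo $d$ equals $f$ (ignoring the $\realDot$'s) by the letter $z$, and leaves every other letter unchanged. First I would reduce to the case where $v$ and $w$ carry no separator, i.e.\ $v\in\left(\squarphabet\right)^{i}$ exactly (it does, by hypothesis) and $w$ ranges over $\left(\squarphabetDot\right)^{\infty}$; since $v$ has no $\realDot$, the ``position modulo $d$ not counting $\realDot$'s'' of a letter of $w$ sitting at index $j$ in $w$ is shifted by exactly $i$ when $w$ is appended after $v$ inside $vw$. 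So a letter of $w$ at non-$\realDot$-position $j$ becomes a letter of $vw$ at non-$\realDot$-position $j+i$.

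\paragraph{Key steps.} Concretely I would argue as follows. (1) The letters of $vw$ coming from $v$ occupy non-$\realDot$-positions $0,1,\dots,i-1$; since $i<f<d$, none of these is $\equiv f\pmod d$, so $\changePos{f}{vw}{z}$ leaves the $v$-part untouched, which matches the $v$ prefix on the right-hand side $v\,\changePos{f-i}{w}{z}$. (2) The letters of $vw$ coming from $w$: a letter of $w$ at non-$\realDot$-position $j$ sits at non-$\realDot$-position $j+i$ in $vw$; it is replaced by $z$ in $\changePos{f}{vw}{z}$ iff $j+i\equiv f\pmod d$, i.e.\ iff $j\equiv f-i\pmod d$ (note $0\le f-i<d$, so $f-i$ is a legitimate index in $[d-1]$, which is why the statement requires $i<f$), which is exactly the condition under which that same letter is replaced by $z$ in $\changePos{f-i}{w}{z}$. (3) The $\realDot$'s are in the same positions on both sides (the set $S$, equivalently the unique separator, is preserved by $\fix$), so the two words agree letter by letter and hence are equal. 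For full rigour one may instead phrase this via the parallel picture: by definition $\changePos{f}{\pair{u}{S}}{z}=\pair{\seqen{\changePos{f}{\paral{u}}{z}}}{S}$, apply Lemma~\ref{lem:change-move}'s parallel analogue (or directly the componentwise definition $w'_{f}\in z^{\infty}$, $w'_{i}=w_{i}$ for $i\ne f$) to $\paral{vw}$, and use that concatenation by a length-$i$ word with $i<d$ shifts all column indices by $i$; but the direct letter-by-letter check above is shortest.

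\paragraph{Main obstacle.} There is no real obstacle; the only thing to be careful about is bookkeeping of indices modulo $d$ and the edge condition $i<f<d$, which guarantees both that no letter of $v$ is affected (so the prefix $v$ survives verbatim) and that $f-i$ lies in the valid range $[d-1]$ so that $\changePos{f-i}{w}{z}$ is well-defined. I would state the proof in two or three sentences, since, as the paper already notes, it is a straightforward consequence of the definition of $\fix^{z\MVAt{} f}$.
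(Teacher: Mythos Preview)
Your proposal is correct and takes essentially the same approach as the paper: the paper gives no proof at all, merely stating that this lemma (together with the two following it) is a ``straightforward consequence from its definition,'' and your letter-by-letter unwinding of the sequential definition of $\fix^{z\MVAt f}$ is exactly such a straightforward check. Your index bookkeeping is right, and your observation that the hypothesis $i<f<d$ is precisely what makes the prefix $v$ untouched and keeps $f-i$ in the valid range $[d-1]$ is the only point worth recording.
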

Note that the letter $\realDot$ does not belongs to $v$.
%
%
We now state that, if $w_{f}\in z^{\infty}$, then changing twice
position $f$ is equivalent to doing only the last change.
\begin{lemma}\label{lem:change-twice}
  Let $f\in[d-1]$, $z,z'\in\squarphabet$, let $w$ a $d$-parallel or a
  $d$-sequential word.  We have
  $\changePos{f}{\changePos{f}{\pair{\tu{w}}{S}}{z'}}{z}=
  \changePos{f}{\pair{\tu{w}}{S}}{z}$.
\end{lemma}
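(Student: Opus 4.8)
The plan is to unfold the definition of $\changePos{f}{\cdot}{\cdot}$ in each of the two cases and to observe that the \emph{outer} application completely overwrites the $f$-th component of the word by a run of $z$'s of the appropriate length while leaving every other component untouched; hence whatever the inner application did to the $f$-th component is simply discarded. I would handle the $d$-parallel case directly from the definition and then derive the $d$-sequential case from it.

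For the $d$-parallel case, write $\changePos{f}{\pair{\tu w}{S}}{z'}=\pair{\tu w'}{S}$, so that by definition $\length{\tu w'}=\length{\tu w}$, $w'_f\in{z'}^{\infty}$ and $w'_i=w_i$ for every $i\ne f$. Applying $\changePos{f}{\cdot}{z}$ to this word yields $\pair{\tu w''}{S}$ with $\length{\tu w''}=\length{\tu w'}=\length{\tu w}$, $w''_f\in z^{\infty}$ and $w''_i=w'_i=w_i$ for every $i\ne f$. These are exactly the three conditions characterizing $\changePos{f}{\pair{\tu w}{S}}{z}$; since the one-letter alphabet $\set z$ contains a unique word of each given length, the condition $w''_f\in z^{\infty}$ together with the fixed length $\length{\tu w}$ determines $w''_f$ uniquely, and therefore $\pair{\tu w''}{S}=\changePos{f}{\pair{\tu w}{S}}{z}$.

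For the $d$-sequential case I would reduce to the parallel one using that $\paral{\cdot}$ and $\seqen{\cdot}$ are mutually inverse. By definition $\changePos{f}{\pair{w}{S}}{z'}=\pair{\seqen{\changePos{f}{\paral{w}}{z'}}}{S}$, and parallelizing its word component returns $\changePos{f}{\paral{w}}{z'}$; hence
\[
\changePos{f}{\changePos{f}{\pair{w}{S}}{z'}}{z}=\pair{\seqen{\changePos{f}{\changePos{f}{\paral{w}}{z'}}{z}}}{S}.
\]
Applying the parallel case to the $d$-tuple word $\paral{w}$ collapses the inner double application to $\changePos{f}{\paral{w}}{z}$, so the right-hand side equals $\pair{\seqen{\changePos{f}{\paral{w}}{z}}}{S}=\changePos{f}{\pair{w}{S}}{z}$, which is the desired equality.

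There is no genuinely hard step here: the statement is pure definitional bookkeeping. The only two points that require a word of care are the uniqueness remark used above (so that ``a word in $z^{\infty}$ of length $\length{\tu w}$'' pins down the component exactly, not merely its individual letters), and checking that the side hypotheses — namely $S$ a set of multiples of $d$ and the alphabet being $\squarphabetDot$ in the sequential case — are precisely those under which $\paral{\cdot}$, $\seqen{\cdot}$ and the sequential version of $\fix$ are defined, so that the identity $\paral{\seqen{\cdot}}=\mathrm{id}$ may legitimately be invoked.
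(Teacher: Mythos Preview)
Your proposal is correct and is exactly the kind of argument the paper has in mind: the paper does not give any proof of this lemma, merely stating that it (together with the neighbouring Lemmas~\ref{lem:change-move} and~\ref{lem:change-cancel}) is a ``straightforward consequence from its definition.'' Your unfolding of the definition in the parallel case, followed by the reduction of the sequential case via $\paral{\cdot}$ and $\seqen{\cdot}$, is precisely such a straightforward verification.
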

We now state that, if $w_{f}\in z^{\infty}$, then $\tu w$ is a
fixpoint of the function.
\begin{lemma}\label{lem:change-cancel}
  Let $f\in[d-1]$. Let $z\in\digitSet\cup\set\square$.  Let $\tu w$ be
  a $d$-parallel word.  If $w_{f}\in z^{\infty}$ then
  $\changePos{f}{\pair{\tu{w}}{S}}{z}= \pair{\tu{w}}{S}$.
\end{lemma}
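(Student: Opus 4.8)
The plan is to unwind \autoref{def:change} in the $d$-parallel case and check that the hypothesis $w_f\in z^\infty$ makes the fixing operation act as the identity on $\pair{\tu w}{S}$.

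First I would recall that, by the $d$-parallel clause of \autoref{def:change}, $\changePos{f}{\pair{\tu w}{S}}{z}$ is by definition the pair $\pair{\tu w'}{S}$ with $\length{\tu w'}=\length{\tu w}$, $w'_f\in z^\infty$, and $w'_i=w_i$ for every $i\ne f$. Since the right-hand side of the claim is $\pair{\tu w}{S}$, the two pairs already have the same set-component and agree on every vector-component other than the $f$-th, so it suffices to prove $w'_f=w_f$.

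Next I would note that over the one-letter alphabet $\set z$ there is exactly one word of each length --- namely $z^n$ for $n\in\N$ and $z^\omega$ for the length $\card\N$ --- so any two elements of $z^\infty$ of the same length coincide. Now $w'_f\in z^\infty$ by construction, $w_f\in z^\infty$ by hypothesis, and both have length $\length{\tu w}$ (for $w'_f$ because $\length{w'_f}=\length{\tu w'}=\length{\tu w}$; for $w_f$ by the convention defining the components of a $d$-tuple word, \autoref{sec:par-enc}). Hence $w'_f=w_f$, so $\tu w'=\tu w$ and therefore $\changePos{f}{\pair{\tu w}{S}}{z}=\pair{\tu w'}{S}=\pair{\tu w}{S}$.

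I expect no genuine obstacle here: the statement is purely definitional and the argument is essentially two lines. The only point worth an explicit sentence is the uniqueness of a word of a given length over a singleton alphabet used in the second step, together with the remark that it applies uniformly whether $z\in\digitSet$ is a digit or $z=\square$ is the padding symbol, since in either case $z$ is a single letter. (For $d$-sequential words the same conclusion would follow by transporting the argument through $\paral{\cdot}$ and $\seqen{\cdot}$, but the statement only concerns $d$-parallel words, so this is unnecessary.)
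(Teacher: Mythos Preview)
Your proof is correct and is exactly the kind of argument the paper intends: the paper does not spell out a proof of this lemma but declares it, together with the two preceding lemmas, to be a ``straightforward consequence'' of \autoref{def:change}. Your unwinding of the definition and the observation that a singleton alphabet admits a unique word of each length is precisely that straightforward consequence.
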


\subsection{Automata}\label{sec:mod-aut-par}
A notation is now introduced, in order to fix the digit read in some
position of an automaton. In this section, we fix $f\in[d-1]$,
$z\in\digitSet$.
\begin{definition}[$\AutFix{f}{z}$]\label{def:autfix-R}
  Let $\Aut=\autPar{Q}{\digitDotSetDim}{\delta}{q_{0}}{F}$ be a
  $d$-parallel Büchi automaton, then let: \index{AJA@$\AutFix{f}{z}$
    for $\Aut$ a $d$-parallel Büchi automaton}
  \begin{equation*}
    \AutFix{f}{z}=\autPar{Q}{\squarphabetDotDim}{\FixTran{f}{z}}{q_{0}}{F},
  \end{equation*}
  where $\FixTran{f}{z}(q,\tu a)=\del{q}{\addPos{f}{\tu a}{z}}$ for
  all $\tu a\in\squarphabetDotDim$.
\end{definition}
\begin{definition}[$\AutFixSeq{z}$]\label{def:autfix-R}
  Let $\Aut=\autPar{Q}{\digitDotSet}{\delta}{q_{0}}{F}$ be a
  $d$-sequential Büchi automaton.  Let: \index{AJA@$\AutFixSeq{z}$ for
    $\Aut$ a $d$-sequential Büchi automaton}
  \begin{equation*}
    \AutFixSeq{z}=\autPar{Q'}{\squarphabetDot}{\FixTranSeq{z}}{(\iniState,0)}{F},
  \end{equation*}
  where $Q'=\left(Q\otimes [d-1]\right)\cup\set\emptyState$ and where,
  for each state $q\in Q$,
  $\FixTranSeq{z}((q,i),\realDot)=(\del{q}{\realDot},i)$,
  $\FixTranSeq{z}((q,i),a)=\left(\del{q}{a},i+1\right)$ for each
  $a\in\digitSet$ and $i\in\set{0,\dots,d-2}$,
  $\FixTranSeq{z}((q,d-1),\square)=\left(\del{q}{z},0\right)$. For
  each state $q\in Q'$ and each letter $a\in\squarphabetDot$ such that
  $\FixTranSeq{z}(q,a)$ is not yet defined, it is set to
  $\emptyState$.
\end{definition}
It is now stated that the two transformations introduced above
preserve weakness.
\begin{lemma}\label{lem:fix-weak}
  Let $\Aut$ be a weak $d$-parallel automaton, then the automaton
  $\AutFix fz$ is weak. Let $\Aut$ be a weak $d$-sequential automaton,
  then the automaton $\AutFixSeq z$ is weak.
\end{lemma}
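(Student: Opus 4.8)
The plan is to recall that an automaton is weak exactly when its set of accepting states is a union of strongly connected components, equivalently when no strongly connected component contains both an accepting and a non-accepting state. So for each of the two transformations I would argue that every strongly connected component of the modified automaton is entirely inside one strongly connected component of the original, so that a component of the modified automaton cannot mix accepting and non-accepting states (since the accepting set is literally the same set $F$, and $\Aut$ is weak).

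For the parallel case this is almost immediate: $\AutFix fz$ has the same state set $Q$ and the same accepting set $F$ as $\Aut$, and every transition $\FixTran{f}{z}(q,\tu a)=\del{q}{\addPos{f}{\tu a}{z}}$ of $\AutFix fz$ is a transition of $\Aut$ (on the letter $\addPos{f}{\tu a}{z}\in\digitDotSetDim$). Hence the transition graph of $\AutFix fz$ is a subgraph (on the same vertices) of that of $\Aut$ with possibly fewer edges, so each strongly connected component of $\AutFix fz$ is contained in a strongly connected component of $\Aut$. Since $\Aut$ is weak, that larger component is either $\subseteq F$ or disjoint from $F$; therefore so is the smaller one, and $F$ is a union of strongly connected components of $\AutFix fz$, i.e.\ $\AutFix fz$ is weak.

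For the sequential case the state set changes to $Q'=(Q\otimes[d-1])\cup\set\emptyState$, but the accepting set is still $F$, which we read as $F\otimes[d-1]$ inside $Q'$ and in particular does not contain $\emptyState$. I would first note that $\emptyState$ is a sink (a strongly connected component by itself), not accepting, so it causes no problem. For the other states, consider the projection $\rho:Q'\setminus\set\emptyState\to Q$, $(q,i)\mapsto q$. Every transition of $\AutFixSeq z$ between non-$\emptyState$ states projects under $\rho$ to a transition of $\Aut$: a $\realDot$-move sends $(q,i)$ to $(\del q\realDot,i)$, a digit move sends $(q,i)$ to $(\del q a,i+1)$, and a $\square$-move sends $(q,d-1)$ to $(\del q z,0)$, so in all cases $\rho$ of the target is $\del q{a'}$ for the corresponding letter $a'$ of $\Aut$. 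Consequently $\rho$ maps any directed cycle of $\AutFixSeq z$ avoiding $\emptyState$ to a closed walk of $\Aut$, so any strongly connected component $C$ of $\AutFixSeq z$ other than $\set\emptyState$ has $\rho(C)$ contained in a single strongly connected component of $\Aut$. Since $\Aut$ is weak, $\rho(C)$ is entirely in $F$ or entirely disjoint from $F$; as the accepting status of $(q,i)$ depends only on $q$, the component $C$ is entirely accepting or entirely non-accepting. Hence $F$ is a union of strongly connected components of $\AutFixSeq z$, so $\AutFixSeq z$ is weak.

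The only mildly delicate point — and the place I would be most careful — is the bookkeeping for the $\square$-transition in the sequential construction: it jumps the counter from $d-1$ back to $0$ while applying $\del q z$, so one must check it really projects to a transition of $\Aut$ (it does, via the letter $z\in\digitSet$), and one must confirm that when $d=1$ the construction degenerates gracefully (then $[d-1]=\set0$ and every digit move is already a $\square$-move to counter $0$). Everything else is routine graph-theoretic reasoning about subgraphs and projections, using only that "weak" is a property of strongly connected components and that both transformations leave the accepting set essentially untouched.
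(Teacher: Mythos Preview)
Your proposal is correct and follows the same approach as the paper: both argue that every strongly connected component of the transformed automaton projects into (or is contained in) a strongly connected component of $\Aut$, and hence inherits the all-accepting-or-all-rejecting property. Your write-up is in fact more careful than the paper's sketch---for the sequential case you only claim $\rho(C)$ lies \emph{inside} an SCC of $\Aut$ (which is all that is needed), whereas the paper asserts it \emph{equals} an SCC, a slightly stronger claim that is not actually required for the conclusion.
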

\begin{proof}
  For the case of $d$-parallel automata, it suffices to remark that
  each strongly connected component of $\AutFix fz$ is a subset of a
  strongly connected component of $\Aut$. For the case of
  $d$-sequential automata, it suffices to remark that a strongly
  connected component of $\AutFixSeq z$ is either $\set\emptyState$,
  or a set $S$ such that $\set{q\mid (q,i)\in S}$ is a strongly
  connected component of $\Aut$.
\end{proof}
We also state that changing the initial state of a $d$-parallel
automaton commute with the transformation introduced above.
\begin{lemma}\label{lem:pemut-fix-state-par}
  Let $\Aut$ be a $d$-parallel automaton and $q$ a state of
  $\Aut$. Then
  $\left(\AutFix fz\right)_{q}=\AutFix[\left(\changeIniState{\Aut}{q}\right)] fz$.
\end{lemma}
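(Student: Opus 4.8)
The statement to prove is Lemma~\ref{lem:pemut-fix-state-par}: for a $d$-parallel automaton $\Aut$ and a state $q$, we have $\left(\AutFix fz\right)_{q}=\AutFix[\left(\changeIniState{\Aut}{q}\right)] fz$.

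The plan is to unfold both sides of the claimed equality directly from Definition~\ref{def:autfix-R} (the $d$-parallel case) and from the notation $\changeIniState{\Aut}{q}$, and observe that they produce literally the same $5$-tuple. First I would write $\Aut=\autPar{Q}{\digitDotSetDim}{\delta}{q_{0}}{F}$, so that by definition $\changeIniState{\Aut}{q}=\autPar{Q}{\digitDotSetDim}{\delta}{q}{F}$ — only the initial state changes, while $Q$, the alphabet, $\delta$, and $F$ are untouched. Then I would compute the right-hand side: applying Definition~\ref{def:autfix-R} to $\changeIniState{\Aut}{q}$ gives $\AutFix[\left(\changeIniState{\Aut}{q}\right)] fz=\autPar{Q}{\squarphabetDotDim}{\FixTran{f}{z}}{q}{F}$, where $\FixTran{f}{z}(p,\tu a)=\del{p}{\addPos{f}{\tu a}{z}}$ for all $\tu a\in\squarphabetDotDim$; crucially the transition function $\FixTran{f}{z}$ depends only on $\delta$ and on $f,z$, not on the initial state, so it is the same function here as for $\Aut$ itself.

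Next I would compute the left-hand side: $\AutFix fz=\autPar{Q}{\squarphabetDotDim}{\FixTran{f}{z}}{q_{0}}{F}$, and then taking the $q$-variant via the notation $\changeIniState{\cdot}{q}$ replaces only the initial state, yielding $\left(\AutFix fz\right)_{q}=\autPar{Q}{\squarphabetDotDim}{\FixTran{f}{z}}{q}{F}$. Comparing the two expressions componentwise — same state set $Q$, same alphabet $\squarphabetDotDim$, same transition function $\FixTran{f}{z}$ (since both are built from the common $\delta$), same initial state $q$, same accepting set $F$ — shows they are equal, which is exactly the claim.

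There is essentially no obstacle here: the lemma is, as the paper says, a straightforward consequence of the definitions, and the only thing to be careful about is that both the $\fix$ operation and the $\changeIniState{\cdot}{q}$ operation act on disjoint coordinates of the $5$-tuple (the former rewrites the transition function and alphabet, the latter rewrites only the initial state), so the two operations commute trivially. If one wanted to be slightly more explicit, one could also remark that since $Q$ is unchanged by $\fix$, the state $q$ is a legitimate state of $\AutFix fz$, so the notation $\left(\AutFix fz\right)_{q}$ is well-defined — but this is immediate.
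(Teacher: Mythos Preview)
Your proposal is correct and is precisely the direct unfolding of Definition~\ref{def:autfix-R} and of the notation $\changeIniState{\Aut}{q}$ that the paper has in mind; indeed, the paper states this lemma without proof, treating it as immediate from the definitions. There is nothing to add.
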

\begin{lemma}\label{lem:pemut-fix-state-seq}
  Let $\Aut$ be a $d$-sequential automaton and $q$ a state of
  $\Aut$. Then
  $\left(\AutFixSeq z\right)_{(q,0)}=\AutFixSeq[\left(\changeIniState{\Aut}{q}\right)]
  z$.
\end{lemma}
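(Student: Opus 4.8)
The plan is to unfold the definition of $\AutFixSeq{z}$ on both sides of the claimed equality and check that the two resulting five-tuples coincide component by component. The key observation driving everything is that the construction $\Aut\mapsto\AutFixSeq{z}$ reads off only the data $Q$, $\digitDotSet$, $\delta$, $F$ of $\Aut$ and never inspects its initial state, except to \emph{declare} the pair $(\iniState,0)$ to be the initial state of the result.

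Concretely, first I would write $\AutFixSeq{z}=\autPar{Q'}{\squarphabetDot}{\FixTranSeq{z}}{(\iniState,0)}{F}$ with $Q'=\left(Q\otimes[d-1]\right)\cup\set\emptyState$, and note that the state set $Q'$, the alphabet $\squarphabetDot$, the transition function $\FixTranSeq{z}$ (each of whose defining clauses mentions only $Q$, $\delta$, $z$ and $d$), and the accepting set $F$ are all functions of $(Q,\digitDotSet,\delta,F)$ alone. Since $\changeIniState{\Aut}{q}=\autPar{Q}{\digitDotSet}{\delta}{q}{F}$ has exactly these same four components, the automaton $\AutFixSeq[\left(\changeIniState{\Aut}{q}\right)]{z}$ has the very same state set, alphabet, transition function and accepting set as $\AutFixSeq{z}$, and its initial state is $(q,0)$. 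On the other hand, $\left(\AutFixSeq{z}\right)_{(q,0)}$ is, by the meaning of the subscript notation $\changeIniState{\cdot}{\cdot}$, the automaton obtained from $\AutFixSeq{z}$ by replacing only its initial state $(\iniState,0)$ with $(q,0)$, leaving the other four components untouched. Comparing the five components then yields the desired equality.

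I do not expect a genuine obstacle here: this is pure syntactic bookkeeping, the sequential analogue of \autoref{lem:pemut-fix-state-par}, with the product states $(q,i)$ (and the sink $\set\emptyState$) in place of plain states. The only point worth a moment's care is that the correct notion of "start the sequential automaton at $q$" on the $\AutFixSeq{z}$ side is the pair $(q,0)$ rather than $q$ itself — but this is forced, because $(\iniState,0)$ is by definition the initial state of $\AutFixSeq{z}$, so $(q,0)$ is precisely its image under the identification $q\mapsto(q,0)$ used throughout \autoref{sec:mod-aut-par}.
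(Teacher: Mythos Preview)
Your proposal is correct and is exactly the intended argument: the paper states this lemma without proof, treating it (like \autoref{lem:pemut-fix-state-par}) as immediate from the definition of $\AutFixSeq{z}$, and the component-by-component verification you describe is precisely the routine check that justifies this.
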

\paragraph{Words and automata}\label{sec:fix-word-aut}
We show, in this section, how the notations introduced in the two
preceding sections interact. The first two lemmas deal with replacing
a component with a $\square$. 
\begin{lemma}\label{lem:remove-fix-par}
  Let $\Aut$ 
  be a $d$-parallel
  automaton.  Let $\pair{\tu
    w}{S}\in\left(\digitDotSetDim\right)^{\omega}$ such that $w_{f}=
  z^{\omega}$. The automaton $\Aut$ accepts $\pair{\tu
    w}{S}$ if and only if
  $\AutFix{f}{z}$ accepts $\removePos{f}{\pair{\tu w}{S}}$.
\end{lemma}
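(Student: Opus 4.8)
The plan is to unfold the definitions of $\AutFix{f}{z}$ and of $\removePos{f}{\pair{\tu w}{S}}$ and show that the runs of the two automata on the respective inputs are literally the same sequence of states. First I would recall what the two sides read: the automaton $\Aut$ over alphabet $\digitDotSetDim$ reads $\pair{\tu w}{S}$, whereas $\AutFix{f}{z}$ over alphabet $\squarphabetDotDim$ reads $\removePos{f}{\pair{\tu w}{S}}$, which by definition is $\addPos{f}{\cdot}{\square}$ applied letterwise, i.e. the word obtained from $\pair{\tu w}{S}$ by erasing the $f$-th component of each digit-letter (the $\realDot$ is unchanged). So the $i$-th letter $\tu a$ of $\removePos{f}{\pair{\tu w}{S}}$ satisfies $\addPos{f}{\tu a}{z}$ equals the $i$-th letter of $\pair{\tu w}{S}$ \emph{precisely because} $w_{f}=z^{\omega}$, so the $f$-th component of that original letter was $z$ (this is exactly where the hypothesis $w_f=z^\omega$ is used).

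The key step is then a one-line induction on run length. Let $\pi$ be the run of $\Aut$ on $\pair{\tu w}{S}$ and $\pi'$ the run of $\AutFix{f}{z}$ on $\removePos{f}{\pair{\tu w}{S}}$. Both start at $q_0$. For the inductive step, if $\pi(i)=\pi'(i)=q$, then $\pi'(i+1)=\FixTran{f}{z}(q,\tu a)=\del{q}{\addPos{f}{\tu a}{z}}$ where $\tu a$ is the $i$-th letter of $\removePos{f}{\pair{\tu w}{S}}$; by the observation above $\addPos{f}{\tu a}{z}$ is the $i$-th letter of $\pair{\tu w}{S}$, so $\del{q}{\addPos{f}{\tu a}{z}}=\pi(i+1)$. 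Hence $\pi=\pi'$, and since the two automata have the same set $F$ of accepting states, one run is accepting iff the other is, which is exactly the claimed equivalence.

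I expect no real obstacle here; the only point requiring a little care is bookkeeping the positions, namely that the $\realDot$ occurring at positions in $S$ is treated identically on both sides (it is its own preimage under $\addPos{f}{\cdot}{z}$, since $\FixTran{f}{z}$ acts as $\del{q}{\realDot}$ on the letter $\realDot$), so the letterwise correspondence between $\removePos{f}{\pair{\tu w}{S}}$ and $\pair{\tu w}{S}$ is a genuine bijection on positions. Once that is noted, the induction closes immediately.
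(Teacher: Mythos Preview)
Your proposal is correct: the paper states this lemma without proof, treating it as an immediate consequence of the definitions, and your argument is precisely the routine unfolding one would supply---showing that the two runs visit the same sequence of states because $\FixTran{f}{z}(q,\tu a)=\del{q}{\addPos{f}{\tu a}{z}}$ and, under the hypothesis $w_f=z^\omega$, the letter $\addPos{f}{\tu a}{z}$ coincides with the corresponding letter of $\pair{\tu w}{S}$. Your handling of the $\realDot$ positions is also right, and there is nothing to add.
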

\begin{lemma}\label{lem:remove-fix-seq}
  Let $\Aut
  $ be a $d$-sequential automaton. 
  Let
  $\pair{w}{S}\in\digitDotSet^{\omega}$ such that
  $\paral{w}_{d-1}=z^{\omega}$.  The automaton
  $\Aut$ accepts $\pair{w}{S}$ if and only if
  $\AutFixSeq{z}$ accepts $\removePos{d-1}{\pair{w}{S}}$.
\end{lemma}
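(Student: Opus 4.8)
The plan is to reduce the $d$-sequential statement to the $d$-parallel statement of Lemma \ref{lem:remove-fix-par} via parallelization, using the commutation lemmas already established. First I would observe that by Definition of $\AutFixSeq{z}$, the automaton $\AutFixSeq{z}$ is built so that reading $d-1$ ordinary letters and one $\square$ in its ``clock'' component simulates reading one parallel letter of $\squarphabetDim[b][d][d-1]$ in $\paral{\Aut}$ with the $(d-1)$-th component fixed to $z$; more precisely I expect that $\paral{\AutFixSeq{z}}$ is isomorphic to $(\paral{\Aut})^{z\MVAt{}(d-1)}$, or at least recognizes the same language after the clock component is collapsed. The key intermediate fact to isolate is: for $\pair{w}{S}\in\digitDotSet^{\omega}$ with $S$ a set of multiples of $d$ (which is automatic here since $\pair{w}{S}$ is a $d$-sequential word), $\paral{\removePos{d-1}{\pair{w}{S}}}=\removePos{d-1}{\paral{\pair{w}{S}}}$, which follows from the $d$-sequential clause of Definition \ref{def:change} together with the definition of $\paral{\cdot}$ on pair-encodings.

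With that in hand the argument is a short chain of equivalences. The automaton $\Aut$ accepts $\pair{w}{S}$ iff, by Lemma \ref{lem:paral-aut}, $\paral{\Aut}$ accepts $\paral{\pair{w}{S}}$. Now $\paral{w}_{d-1}=z^{\omega}$ by hypothesis, so by Lemma \ref{lem:remove-fix-par} applied to $\paral{\Aut}$ with $f=d-1$, this holds iff $(\paral{\Aut})^{z\MVAt{}(d-1)}$ accepts $\removePos{d-1}{\paral{\pair{w}{S}}}$, which by the intermediate fact equals $\paral{\removePos{d-1}{\pair{w}{S}}}$. Finally I would invoke Lemma \ref{lem:paral-aut} once more, together with the identification of $\paral{\AutFixSeq{z}}$ with $(\paral{\Aut})^{z\MVAt{}(d-1)}$ (up to the irrelevant clock bookkeeping and the sink state $\emptyState$), to conclude that this is equivalent to $\AutFixSeq{z}$ accepting $\removePos{d-1}{\pair{w}{S}}$. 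A small point to check is that the $\emptyState$ sink and the $[d-1]$-component added to the state space in Definition \ref{def:autfix-R} do not change the accepted language: the clock component is deterministic and never rejects on the ``good'' reading pattern, and $\emptyState$ is reached only on letters inconsistent with that pattern, which is exactly the behaviour $(\paral{\Aut})^{z\MVAt{}(d-1)}$ exhibits on non-$d$-sequential inputs.

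The main obstacle I anticipate is not conceptual but bookkeeping: making the correspondence between $\AutFixSeq{z}$ and $(\paral{\Aut})^{z\MVAt{}(d-1)}$ precise, since $\AutFixSeq{z}$ reads $\square$ in the position congruent to $d-1$ whereas $(\paral{\Aut})^{z\MVAt{}(d-1)}$ reads a parallel letter whose $(d-1)$-th coordinate is $\square$, and one must track how the dot $\realDot$ is handled in both (the clock does not advance on $\realDot$, matching the convention that $\realDot$ is not counted modulo $d$ in Definition \ref{def:change}). Once the $\removePos{\cdot}{\cdot}$-versus-$\paral{\cdot}$ commutation and this automaton identification are stated cleanly, the rest is the three-line equivalence chain above, and no genuine difficulty remains. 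If one prefers to avoid even naming $(\paral{\Aut})^{z\MVAt{}(d-1)}$, an alternative is a direct run-chasing argument: given an accepting run of $\Aut$ on $\pair{w}{S}$ one reads off a run of $\AutFixSeq{z}$ on $\removePos{d-1}{\pair{w}{S}}$ by replacing, at each position congruent to $d-1$, the transition on $w$'s digit (which is $z$ there) by the $\square$-transition, and conversely; the accepting states are the same set $F$ lifted along the projection forgetting the clock, so ``accepting state seen infinitely often'' transfers verbatim. Either route works; I would present the parallelization route since it reuses Lemma \ref{lem:remove-fix-par} as a black box.
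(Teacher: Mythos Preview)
The paper does not give an explicit proof of this lemma: it is stated (together with Lemmas \ref{lem:remove-fix-par} and \ref{lem:add-fix-par}) as a direct consequence of the definitions, and the intended argument is exactly your ``alternative'' direct run-chasing. Namely, the run of $\AutFixSeq{z}$ on $\removePos{d-1}{\pair{w}{S}}$ is, after forgetting the clock component, the run of $\Aut$ on $\pair{w}{S}$: whenever the clock is $d-1$ the letter read is $\square$ and the transition applies $\del{\cdot}{z}$, while the hypothesis $\paral{w}_{d-1}=z^{\omega}$ ensures the corresponding letter of $\pair{w}{S}$ is $z$; on all other letters (digits and $\realDot$) the two automata use the same underlying $\delta$. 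Acceptance then transfers verbatim.

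Your primary parallelization route is morally correct but adds overhead and has two small gaps. First, $\paral{\cdot}$ is defined in the paper only for words over $\digitSet$ and automata over $\digitDotSet$, so to form $\paral{\AutFixSeq{z}}$ or $\paral{\removePos{d-1}{\pair{w}{S}}}$ you must first extend the definition to the larger alphabet $\squarphabet$, and likewise Lemma \ref{lem:paral-aut} must be extended before you can invoke it for $\AutFixSeq{z}$. Second, your parenthetical ``which is automatic here since $\pair{w}{S}$ is a $d$-sequential word'' is not justified: the lemma only assumes $\pair{w}{S}\in\digitDotSet^{\omega}$, so $S$ need not consist of multiples of $d$, and then $\paral{\pair{w}{S}}$ is undefined. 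The direct argument handles this case without difficulty (the clock in $\AutFixSeq{z}$ does not advance on $\realDot$, so the projection to a run of $\Aut$ still works regardless of where the $\realDot$'s sit), whereas the parallelization route would need a separate treatment. Once you patch these issues, the parallelization route essentially unpacks to the direct argument anyway, so I would present the latter.
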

The following lemma deals with replacing a component of a vector by a
single letter.
\begin{lemma}\label{lem:add-fix-par}
  Let $\Aut$ 
  be a $d$-parallel automaton. 
  Let $\pair{\tu w}{S}\in\left(\squarphabetDotDim\right)^{\omega}$. The
  automaton $\AutFix{f}{z}$ accepts $\pair{\tu w}{S}$ if and only if
  $\Aut$ accepts $\addPos{f}{\pair{\tu w}{S}}{z}$.
\end{lemma}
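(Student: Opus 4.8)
The plan is to unfold both definitions and reduce the claim to a statement about runs, using the key identity that fixing a component and then adding it back is the identity on the relevant words. Concretely, write $\pair{\tu w}{S}\in\left(\squarphabetDotDim\right)^{\omega}$ and set $\pair{\tu v}{S}=\addPos{f}{\pair{\tu w}{S}}{z}$, so that $\tu v$ is the $d$-parallel word obtained from $\tu w$ by replacing each $\square$ in component $f$ by the digit $z$ (and leaving the $\realDot$'s, recorded by $S$, in place). Note that by construction $v_{f}\in z^{\infty}$, so by \autoref{lem:change-cancel} we have $\removePos{f}{\pair{\tu v}{S}}=\pair{\tu w}{S}$; this is the bridge between the two words we will need.

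First I would make the transition functions explicit. By \autoref{def:autfix-R}, $\AutFix{f}{z}$ has the same state set as $\Aut$, and for every letter $\tu a\in\squarphabetDotDim$ its transition is $\FixTran{f}{z}(q,\tu a)=\del{q}{\addPos{f}{\tu a}{z}}$, where $\addPos{f}{\tu a}{z}$ denotes the letter of $\digitDotSetDim$ obtained by substituting $z$ for $\square$ in $\tu a$ (and $\addPos{f}{\realDot}{z}=\realDot$). Hence, reading a letter $\tu a$ of $\pair{\tu w}{S}$ in $\AutFix{f}{z}$ has exactly the same effect on the state as reading the corresponding letter $\addPos{f}{\tu a}{z}$ of $\pair{\tu v}{S}=\addPos{f}{\pair{\tu w}{S}}{z}$ in $\Aut$. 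Since this holds letter by letter and the two words have the same length (the positions of $\realDot$'s coincide, being governed by the same set $S$), an induction on prefixes shows that the unique run $\pi$ of $\AutFix{f}{z}$ on $\pair{\tu w}{S}$ and the unique run $\pi'$ of $\Aut$ on $\addPos{f}{\pair{\tu w}{S}}{z}$ satisfy $\pi(i)=\pi'(i)$ for all $i$.

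Then I would conclude from the acceptance condition. Both automata have the same set of accepting states $F$, and the two runs are literally the same sequence of states; therefore one visits an element of $F$ infinitely often if and only if the other does. By the definition of acceptance for Büchi automata recalled in \autoref{sec:def-Büchi}, $\AutFix{f}{z}$ accepts $\pair{\tu w}{S}$ if and only if $\Aut$ accepts $\addPos{f}{\pair{\tu w}{S}}{z}$, which is the statement. I do not expect a serious obstacle here: the only point requiring care is the bookkeeping of the dot positions — one must check that $\addPos{f}{\cdot}{z}$ leaves the $\realDot$'s and the set $S$ untouched, so that the two words are aligned position by position — but this is immediate from \autoref{def:change} and the definition of $\addPos{f}{\cdot}{z}$, which acts as the identity on $\realDot$. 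The induction itself is the routine "run-chasing" argument, with the base case $\pi(0)=\pi'(0)=q_{0}$ and the inductive step handled by the transition identity above.
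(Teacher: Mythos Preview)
Your proof is correct and is exactly the straightforward unfolding of definitions that the paper intends; the paper itself gives no proof for this lemma, treating it (together with \autoref{lem:remove-fix-par} and \autoref{lem:remove-fix-seq}) as an immediate consequence of \autoref{def:autfix-R} and \autoref{def:change}. One small remark: the ``bridge'' identity $\removePos{f}{\pair{\tu v}{S}}=\pair{\tu w}{S}$ that you set up at the start is never actually used in your argument (the run-chasing alone suffices), and if you did want to justify it you would need \autoref{lem:change-twice} in addition to \autoref{lem:change-cancel}; you can simply drop that sentence.
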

The following lemma illustrates how the notation introduced above
behaves on a word with a component whose suffix is $z^{\omega}$.
\begin{lemma}\label{lem:ult-fix}
  Let $\Aut=\autPar{Q}{\digitDotSet}{\delta}{q_0}{F}$, $q\in Q$,
  $\pair{\tu v}{V}\in\left(\digitDotSetDim\right)^{*}$ and
  $\pair{\tu w}{W}\in\left(\digitDotSetDim\right)^{\omega}$ with
  $w_{f}=z^{\omega}$. Then $\Aut$ accepts
  $\pair{\tu v}{V}\pair{\tu w}{W}$ if and only if
  $\changeIniState{\left(\AutFix f z\right)}{\del{q}{\pair{\tu
        v}{V}}}$ accepts $\removePos{f}{\pair{\tu w}{W}}$.
\end{lemma}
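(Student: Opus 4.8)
The plan is to reduce the claim to a chain of three facts already available: the ``resume from the reached state'' observation recorded just after \autoref{ex:unbounded}, \autoref{lem:remove-fix-par}, and \autoref{lem:pemut-fix-state-par}. Set $q' = \del{q}{\pair{\tu v}{V}}$, the state obtained after reading the finite word $\pair{\tu v}{V}$ from $q$; since $\AutFix fz$ has the same state set $Q$ as $\Aut$, the state $q'$ is also a state of $\AutFix fz$.

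First I would peel off the finite prefix $\pair{\tu v}{V}$. Since $\pair{\tu v}{V}$ is a finite word over $\digitDotSetDim$ and $\pair{\tu w}{W}$ an $\omega$-word over the same alphabet, reading $\pair{\tu v}{V}\pair{\tu w}{W}$ from $q$ is the same as first reaching $q'$ and then reading $\pair{\tu w}{W}$; hence, by the remark following \autoref{ex:unbounded} applied to $\changeIniState{\Aut}{q}$, the word $\pair{\tu v}{V}\pair{\tu w}{W}$ is accepted from $q$ if and only if $\changeIniState{\Aut}{q'}$ accepts $\pair{\tu w}{W}$. Next, $\pair{\tu w}{W}$ satisfies $w_f = z^\omega$, so it and the automaton $\changeIniState{\Aut}{q'}$ fulfil the hypotheses of \autoref{lem:remove-fix-par} (that lemma only uses that the alphabet is $\digitDotSetDim$ and the form of the transition function, so it applies verbatim with $\changeIniState{\Aut}{q'}$ in place of $\Aut$); thus $\changeIniState{\Aut}{q'}$ accepts $\pair{\tu w}{W}$ if and only if $\AutFix[\left(\changeIniState{\Aut}{q'}\right)]{f}{z}$ accepts $\removePos f{\pair{\tu w}{W}}$. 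Finally, \autoref{lem:pemut-fix-state-par} applied to the state $q'$ identifies $\AutFix[\left(\changeIniState{\Aut}{q'}\right)]{f}{z}$ with $\left(\AutFix fz\right)_{q'} = \changeIniState{\left(\AutFix fz\right)}{\del{q}{\pair{\tu v}{V}}}$. Stringing the three equivalences together is exactly the statement.

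I do not anticipate a genuine difficulty: the argument is just a disciplined composition of earlier lemmas, and every step is an application of a named result to an instance that clearly meets its hypotheses. The one point deserving a line of justification is that ``reading $\pair{\tu v}{V}$ and then $\pair{\tu w}{W}$'' literally is reading the letter sequence $\pair{\tu v}{V}\pair{\tu w}{W}$, so that $q'=\del{q}{\pair{\tu v}{V}}$ is indeed the state reached just before the infinite suffix starts; this is immediate from the definition of the pair-encoding together with the extension of $\delta$ to finite words. Note also that moving the initial state of $\Aut$ to $q'$ may spoil the property of recognizing a $d$-parallel \emph{language}, but this is harmless here since neither \autoref{lem:remove-fix-par} nor \autoref{lem:pemut-fix-state-par} uses that property.
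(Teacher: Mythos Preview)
Your proposal is correct and follows essentially the same three-step chain as the paper's proof: peel off the finite prefix, apply \autoref{lem:remove-fix-par}, then commute the change of initial state with the $\mathrm{fix}$ construction. In fact your citation of \autoref{lem:pemut-fix-state-par} for the last step is more accurate than the paper's own proof, which cites \autoref{lem:fix-weak} at that point.
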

Note that, in the last term, the function $\delta$ is still the
transition function of $\Aut$ and not the one of
$\AutFix{f}{z}$.
\begin{proof}
  The fact \quo{$\Aut$ accepts $\pair{\tu v}{V}\pair{\tu w}{W}$} is
  equivalent to
  \quo{$\changeIniState{\Aut}{\del{\iniState}{\pair{\tu v}{V}}}$
    accepts $\pair{\tu w}{W}$}. By \autoref{lem:remove-fix-par}, since
  $w_{f}=z^{\omega}$, those statements are equivalent to
  \quo{$\AutFix[\left(\changeIniState{\Aut}{\del{\iniState}{\pair{\tu
            v}{V}}}\right)]{f}{z}$
    accepts $\removePos{f}{\pair{\tu w}{W}}$}. By
  \autoref{lem:fix-weak}, they are also equivalent to:
  \quo{$\changeIniState{\left(\AutFix f z\right)}{\del{q}{\pair{\tu
          v}{V}}}$ accepts $\removePos{f}{\pair{\tu w}{W}}$}.
\end{proof}
\section{Characterizations of $d$-parallel
  RVA}\label{sec:char-RVA-par}
Recall from the introduction that a RVA is a Büchi automaton accepting
a saturated language.  In this section, we give some characterizations
of the $d$-parallel RVAs. The last of those characterization allows us
to give in \autoref{sec:algo-RVA} an algorithm which decides whether
an automaton over alphabet $\digitDotSetDim$ is a RVA.  We use the
other characterizations to prove the last one. We first prove a
property of minimal RVA.
\begin{lemma}\label{lem:init-0}
  Let $\Aut=\aut$ a minimal $d$-parallel RVA, then
  $\del{\iniState}{\tu 0}=\iniState$.
\end{lemma}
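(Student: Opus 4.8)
The plan is to exploit the fact that $\AR$ recognizes a saturated language together with the fact that it is minimal, and to play these against each other by considering what happens when we prepend a block of $\tu 0$'s to an arbitrary accepted word. First I would note that every real vector $\tu r$ has the property that if $\tu v\realDot\tu w$ is an encoding of $\tu r$ in base $b$, then so is $\tu 0^k\,\tu v\realDot\tu w$ for every $k\ge 0$: prepending leading zeros to the natural part does not change the value, by the formula for $\wordToNatural{\cdot}$. Hence, since $\toInfWord{\AR}$ is saturated, for every $\omega$-word $u\in\toInfWord{\AR}$ we also have $\tu 0 u\in\toInfWord{\AR}$, and conversely if $\tu 0 u\in\toInfWord{\AR}$ and $u$ is a genuine encoding (exactly one $\realDot$, in the natural-part region) then $u\in\toInfWord{\AR}$. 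The subtlety is words that are not valid encodings at all (wrong number of dots, or a dot in the wrong place); for those I would argue separately that prepending $\tu 0$ cannot change acceptance either, because such a word is rejected regardless — a RVA by definition must reject every word that is not a valid encoding of a vector of reals, and prepending $\tu 0$ to an invalid word yields another invalid word, and vice versa (stripping a leading $\tu 0$ from an invalid word of length $>0$ in the natural part leaves it invalid).

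Combining these observations, I would conclude that for every $\omega$-word $u$ over $\digitDotSetDim$, the word $u$ is accepted by $\AR$ if and only if $\tu 0 u$ is accepted by $\AR$. Now $u$ is accepted by $\AR$ iff $u$ is accepted by $\changeIniState{\AR}{\iniState}$, and $\tu 0 u$ is accepted by $\AR$ iff $u$ is accepted by $\changeIniState{\AR}{\del{\iniState}{\tu 0}}$. Therefore $\changeIniState{\AR}{\iniState}$ and $\changeIniState{\AR}{\del{\iniState}{\tu 0}}$ recognize the same $\omega$-language, i.e.\ $\toInfWord{\changeIniState{\AR}{\iniState}}=\toInfWord{\changeIniState{\AR}{\del{\iniState}{\tu 0}}}$. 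Since $\AR$ is minimal, distinct states recognize distinct languages, so $\del{\iniState}{\tu 0}=\iniState$.

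The main obstacle I anticipate is the careful bookkeeping around non-encodings — making sure the equivalence ``$u$ accepted $\iff$ $\tu 0 u$ accepted'' holds for all $\omega$-words and not merely for valid encodings. For valid encodings the value-preservation argument plus saturation is immediate; the work is in checking that the class of non-encodings over $\digitDotSetDim^{\omega}$ is closed both under prepending $\tu 0$ and under removing a leading $\tu 0$ (when present). A word over this alphabet is a valid encoding exactly when it has precisely one occurrence of $\realDot$ and that occurrence is preceded only by finitely many letters; neither of these conditions is affected by adding or deleting a single leading letter from $\digitSetDim$, so the non-encodings are indeed closed both ways, and such words are rejected by the RVA in either form. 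Once that lemma is in hand, the minimality argument is a one-line finish.
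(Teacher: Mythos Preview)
Your proof is correct and follows essentially the same approach as the paper's: show that $\Aut$ accepts $\tu w$ iff $\Aut$ accepts $\tu 0\tu w$ for every $\omega$-word $\tu w$, then invoke minimality to conclude $\del{\iniState}{\tu 0}=\iniState$. The paper's version is terser and does not explicitly separate out the non-encoding case---it implicitly relies on the fact that a $d$-parallel automaton by definition rejects every word outside $\digitSetDim^{*}\realDot\digitSetDim^{\omega}$---but your explicit bookkeeping there is entirely sound.
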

\begin{proof}
  Since $\Aut$ is minimal, it suffices to prove
  $\toInfWord{\Aut_{\del{\iniState}{\tu 0}}}=\toInfWord{\Aut}$, hence
  it suffices to prove that $\Aut$ accepts $\tu w$ if and only if
  $\Aut_{\del{\iniState}{\tu 0}}$ for all
  $\tu w\in \left(\digitDotSetDim\right)^{\omega}$.

  Let $\tu w\in \left(\digitDotSetDim\right)^{\omega}$.  The automaton
  $\Aut_{\del q{\tu 0}}$ accepts $\tu w$ if and only if $\Aut$ accepts
  $\tu 0\tu w$.  Since $\Aut$ is saturated and
  $\wordToReal{\tu 0\tu w}=\wordToReal{\tu w}$, $\Aut$ accepts
  $\tu 0\tu w$ if and only if it accepts $\tu w$. By transitivity of
  equivalence, $\Aut$ accepts $\tu w$ if and only if
  $\Aut_{\del{\iniState}{\tu 0}}$ accepts $\tu w$.
\end{proof}
We now state the  characterizations.
\begin{proposition}\label{prop:char-RVA}
  Let $\Aut$ be a minimal weak $d$-parallel Büchi automaton such that
  $\del{\iniState}{\tu 0}=\iniState$.  The following statement are
  equivalent:
  \begin{enumerate}
  \item\label{lem:char-RVA-RVA} The automaton $\Aut$ is a RVA.
  \item\label{lem:char-RVA-nat-fixed}For all
    $\pair{\tu w}{\set s}\pair{\tu w'}{\set s}\in
    \digitSetDim^*\realDot\digitSetDim^{\omega}$ such that
    $\wordToReal {\pair{\tu{w}}{\set
        s}}=\wordToReal{\pair{\tu{w'}}{\set s}}$ and such that $\Aut$
    accepts $\pair{\tu w}{\set s}$, $\Aut$ accepts $(\tu w',\set s)$.
  \item\label{lem:char-RVA-dist} for all
    $\pair{\tu w}{\set s}\pair{\tu w'}{\set s}\in
    \digitSetDim^*\realDot\digitSetDim^{\omega}$ such that
    $\wordToReal {\pair{\tu{w}}{\set
        s}}=\wordToReal{\pair{\tu{w'}}{\set s}}$, such that
    $\distWord{ w}{ {w'}}=1$, and such that $\Aut$ accepts
    $\pair{\tu w}{\set s}$, $\Aut$ accepts $(\tu w',\set s)$.
  \item\label{lem:char-RVA-char} for each $f\in[d-1]$, for each $q\in Q$,
    accessible in $\Aut$ from $q_{0}$, for each $\tu a\in\digitSetDim$
    with $a_{f}<b-1$, 
    \begin{equation}\label{eq:char-RVA}
      \toInfWord{\changeIniState{\left(\AutFix{f}{(b-1)}\right)}{\del{q}{\tu{a}}}}=
      \toInfWord{\changeIniState{\left(\AutFix{f}{0}\right)}{\del{q}{\tu a'}}},
    \end{equation} where 
    $\tu a'=a_{0}\dots{}a_{f-1}(a_{f}+1)a_{f+1}\dots{}a_{d-1}$.
  \end{enumerate}
\end{proposition}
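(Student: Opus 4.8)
The plan is to establish the cycle of implications $\ref{lem:char-RVA-RVA}\Rightarrow\ref{lem:char-RVA-nat-fixed}\Rightarrow\ref{lem:char-RVA-dist}\Rightarrow\ref{lem:char-RVA-char}\Rightarrow\ref{lem:char-RVA-RVA}$. The first implication is immediate from the definition of a saturated language: if $\Aut$ accepts $\pair{\tu w}{\set s}$, then $\wordToReal{\pair{\tu w}{\set s}}\in\wordToReal{\toInfWord\Aut}$, hence every encoding of this vector, in particular $\pair{\tu w'}{\set s}$, belongs to $\toInfWord\Aut$. The implication $\ref{lem:char-RVA-nat-fixed}\Rightarrow\ref{lem:char-RVA-dist}$ is trivial, Condition~\ref{lem:char-RVA-dist} being a restriction of Condition~\ref{lem:char-RVA-nat-fixed} to pairs at distance $1$.

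For $\ref{lem:char-RVA-dist}\Rightarrow\ref{lem:char-RVA-char}$, I would fix $f\in[d-1]$, a state $q$ accessible from $\iniState$, and $\tu a\in\digitSetDim$ with $a_f<b-1$, set $\tu a'$ as in the statement, and choose a finite word $\tu u$ with $\del{\iniState}{\tu u}=q$. By \autoref{lem:pemut-fix-state-par} and \autoref{lem:add-fix-par}, a word $\pair{\tu z}{Z}$ over $\squarphabetDotDim$ lies in $\toInfWord{\changeIniState{\left(\AutFix f{(b-1)}\right)}{\del q{\tu a}}}$ iff $\Aut$ accepts $\tu u\,\tu a\,\addPos f{\pair{\tu z}{Z}}{(b-1)}$, and similarly $\pair{\tu z}{Z}$ lies in $\toInfWord{\changeIniState{\left(\AutFix f0\right)}{\del q{\tu a'}}}$ iff $\Aut$ accepts $\tu u\,\tu a'\,\addPos f{\pair{\tu z}{Z}}{0}$. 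When the number of $\realDot$'s in these two words is not $1$ — i.e.\ $\tu u$ carries at least two $\realDot$'s, or carries one $\realDot$ but $Z\neq\emptyset$, or carries none but $\card Z\neq 1$ — both words are rejected by the $d$-parallel automaton $\Aut$, so both languages are empty and \eqref{eq:char-RVA} holds. Otherwise the two words are genuine encodings $\pair{\tu x}{\set s}$ and $\pair{\tu x'}{\set s}$: they agree on every component $i\neq f$, while their $f$-th components are, with the separator placed at position $s$, of the form $v\,a_f\,(b-1)^\omega$ resp.\ $v\,(a_f{+}1)\,0^\omega$ for a common prefix $v$; by \autoref{theo:rat} these encode the same real, so $\wordToReal{\pair{\tu x}{\set s}}=\wordToReal{\pair{\tu x'}{\set s}}$ and $\distWord{x}{x'}=1$. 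Condition~\ref{lem:char-RVA-dist}, used in both directions since its hypotheses are symmetric, then gives that $\Aut$ accepts $\pair{\tu x}{\set s}$ iff it accepts $\pair{\tu x'}{\set s}$, which is exactly \eqref{eq:char-RVA}.

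For $\ref{lem:char-RVA-char}\Rightarrow\ref{lem:char-RVA-RVA}$, I would first reduce saturation to the claim that, for every $s$, any two encodings $\pair{\tu w}{\set s}$, $\pair{\tu w'}{\set s}$ of the same vector are accepted or rejected together: any two encodings of a vector $\mathbf r$ coincide after prepending enough copies of $\tu 0$, and prepending $\tu 0$ does not change acceptance because $\del{\iniState}{\tu 0}=\iniState$. To prove the claim, let $D=\set{i\mid w_i\neq w'_i}$; for each $i\in D$ the two distinct encodings $\pair{w_i}{\set s}$, $\pair{w'_i}{\set s}$ of the same real with the same natural part length are, by \autoref{theo:rat}, of the form $v^{(i)}c^{(i)}(b-1)^\omega$ and $v^{(i)}(c^{(i)}{+}1)0^\omega$ with $c^{(i)}<b-1$. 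Switching the components of $D$ one at a time gives a chain of encodings from $\pair{\tu w}{\set s}$ to $\pair{\tu w'}{\set s}$ whose consecutive members differ in a single component, so it suffices to treat that case. Writing $\tu w=\tu u\,\tu a\,\tu y$ and $\tu w'=\tu u\,\tu a'\,\tu y'$, where $\tu u$ is the longest common prefix, $\tu a,\tu a'$ are the next (digit) letters, equal off component $f$ and carrying $c$ resp.\ $c{+}1$ there, and $\tu y,\tu y'$ have $f$-th components $(b-1)^\omega$ resp.\ $0^\omega$ and agree elsewhere (so $\removePos f{\tu y}=\removePos f{\tu y'}$), one obtains, with $q=\del{\iniState}{\tu u}$ (accessible), that $\Aut$ accepts $\tu u\tu a\tu y$ iff $\changeIniState{\left(\AutFix f{(b-1)}\right)}{\del q{\tu a}}$ accepts $\removePos f{\tu y}$ by \autoref{lem:ult-fix}, iff $\changeIniState{\left(\AutFix f0\right)}{\del q{\tu a'}}$ accepts $\removePos f{\tu y}$ by \eqref{eq:char-RVA} applied to $f,q,\tu a$ (note $a_f=c<b-1$), iff $\Aut$ accepts $\tu u\tu a'\tu y'$ by \autoref{lem:ult-fix} again, using $\removePos f{\tu y}=\removePos f{\tu y'}$. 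This closes the cycle.

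I expect $\ref{lem:char-RVA-dist}\Rightarrow\ref{lem:char-RVA-char}$ to be the delicate step: turning the abstract language equality \eqref{eq:char-RVA} into a statement about encodings of reals forces a careful case split on where the state $q$ sits relative to the separator (and whether it is useless) and on how many separators the tail $\pair{\tu z}{Z}$ carries, so that Condition~\ref{lem:char-RVA-dist} is invoked only on genuine encodings. By contrast, the bookkeeping in $\ref{lem:char-RVA-char}\Rightarrow\ref{lem:char-RVA-RVA}$ — breaking a global identity of encodings into single-component steps, factoring each step through a common prefix, and reducing saturation via $\del{\iniState}{\tu 0}=\iniState$ — is routine once \autoref{theo:rat} and \autoref{lem:ult-fix} are in hand.
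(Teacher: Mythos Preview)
Your proof is correct and relies on the same key ingredients as the paper --- \autoref{theo:rat}, \autoref{lem:add-fix-par}, \autoref{lem:ult-fix}, the leading-zero reduction via $\del{\iniState}{\tu 0}=\iniState$, and the chain argument over differing components --- but you run the cycle in the \emph{opposite} direction: the paper proves $\ref{lem:char-RVA-RVA}\Rightarrow\ref{lem:char-RVA-char}\Rightarrow\ref{lem:char-RVA-dist}\Rightarrow\ref{lem:char-RVA-nat-fixed}\Rightarrow\ref{lem:char-RVA-RVA}$. Your orientation buys two implications for free ($\ref{lem:char-RVA-RVA}\Rightarrow\ref{lem:char-RVA-nat-fixed}$ and $\ref{lem:char-RVA-nat-fixed}\Rightarrow\ref{lem:char-RVA-dist}$ are immediate), at the cost of packing the paper's three separate steps $\ref{lem:char-RVA-char}\Rightarrow\ref{lem:char-RVA-dist}\Rightarrow\ref{lem:char-RVA-nat-fixed}\Rightarrow\ref{lem:char-RVA-RVA}$ into your single $\ref{lem:char-RVA-char}\Rightarrow\ref{lem:char-RVA-RVA}$. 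Your $\ref{lem:char-RVA-dist}\Rightarrow\ref{lem:char-RVA-char}$ mirrors the paper's $\ref{lem:char-RVA-RVA}\Rightarrow\ref{lem:char-RVA-char}$ almost verbatim (the two constructed words always differ in exactly one component, so Condition~\ref{lem:char-RVA-dist} suffices where the paper invokes full saturation); your explicit case split on the number of $\realDot$'s is a detail the paper leaves implicit but which is indeed handled by the $d$-parallel hypothesis.
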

Note that Property \eqref{lem:char-RVA-RVA} requires to consider
$d$-tuple of words, with $\realDot$'s in potentially different
positions, while in Property \eqref{lem:char-RVA-nat-fixed}, both words
have a single $\realDot$'s at the same position.  Property
\eqref{lem:char-RVA-nat-fixed} requires to consider any pair of word
whose natural part have the same length, while Property
\eqref{lem:char-RVA-dist} restrict our study to the case where all but
one components of the word are equal.
\begin{proof}
  The proof is done by the following sequence of implications.
  Property \eqref{lem:char-RVA-RVA} implies Property
  \eqref{lem:char-RVA-char}, which implies Property
  \eqref{lem:char-RVA-dist}, which implies Property
  \eqref{lem:char-RVA-nat-fixed}, which implies Property
  \eqref{lem:char-RVA-RVA}.
  \paragraph{Property \eqref{lem:char-RVA-RVA} implies Property
    \eqref{lem:char-RVA-char}}
  Let $f,q,\tu a,\tu a'$ as in the hypothesis. Let us prove that
  $\toInfWord{\changeIniState{\left(\AutFix{f}{(b-1)}\right)}{\del{q}{\tu{a}}}}\subseteq
  \toInfWord{\changeIniState{\left(\AutFix[\left(\Aut\right)]{f}{0}\right)}{\del{q}{\tu{a'}}}}$,
  the proof of the reverse inclusion is similar.  Let
  $\pair{\tu w}{T}\in\left(\squarphabetDotDim\right)^{\omega}$
  accepted by
  $\changeIniState{\left(\AutFix{f}{(b-1)}\right)}{\del{q}{\tu{a}}}$,
  let us prove that it is accepted by
  $\changeIniState{\left(\AutFix[\left(\Aut\right)]{f}{0}\right)}{\del{q}{\tu{a'}}}$.

  Since $\changeIniState{\left(\AutFix{f}{(b-1)}\right)}{\del{q}{\tu{a}}}$
  accepts $\pair{\tu w}{T}$, by Lemma \ref{lem:add-fix-par},
  $\changeIniState{\Aut}{\del{q}{\tu a}}$ accepts $\addPos{f}{\pair{\tu w}{T}}{b-1}$.
  Since $q$ is accessible from $\iniState$, there exists
  $\pair{\tu v}{S}\in\left(\digitDotSetDim\right)^{*}$ such that
  $\del{\iniState}{\pair{\tu v}{S}}=q$. Since $\changeIniState{\Aut}{\del{q}{\tu a}}$
  accepts $\addPos{f}{\pair{\tu w}{T}}{b-1}$ and
  $\del{\iniState}{\pair{\tu v}{S}}=q$, it follows that $\Aut$ accepts
  $\pair{\tu v}{S}\tu a\addPos{f}{\pair{\tu w}{T}}{b-1}$. Let us prove
  that
  $\wordToReal{\pair{\tu v}{S}\tu a\addPos{f}{\pair{\tu w}{T}}{b-1}}=
  \wordToReal{\pair{\tu v}{S}\tu a'\addPos{f}{\pair{\tu
        w}{T}}{0}}$. That is, for $j\in[d-1]$, we want to prove that
  $\wordToReal{\left(\pair{\tu v}{S}\tu a\addPos{f}{\pair{\tu
          w}{T}}{b-1}\right)_{i}}= \wordToReal{\left(\pair{\tu
        v}{S}\tu a'\addPos{f}{\pair{\tu w}{T}}{0}\right)_{i}}$.  For
  $i\in[d-1]\setminus f$, it suffices to remark that:
  \begin{equation*}
    \left(\pair{\tu v}{S}\tu a\addPos{f}{\pair{\tu w}{T}}{b-1}\right)_{i}=\pair{v_i}{S}a_{i}\pair{w_{i}}{T}=\pair{v_i}{S}a'_{i}\pair{w_{i}}{T}=\left(\pair{\tu v}{S}\tu a'\addPos{f}{\pair{\tu
          w}{T}}{0}\right)_{i}.
  \end{equation*}It remains to consider the case $i=f$. Remark that
  \begin{equation*}
    \left(\pair{\tu v}{S}\tu a\addPos{f}{\pair{\tu w}{T}}{b-1}\right)_{f}=\pair{v_f}{S}a_{f}\pair{(b-1)^{\omega}}{T}
  \end{equation*}
  and
  \begin{equation*}
    \left(\pair{\tu v}{S}\tu a'\addPos{f}{\pair{\tu
          w}{T}}{0}\right)_{f}=\pair{v_f}{S}a'_{f}\pair{0^{\omega}}{T}=\pair{v_f}{S}(a_{f}+1)\pair{0^{\omega}}{T}.
  \end{equation*}
  By \autoref{theo:rat}, both of those words encode the same number.

  Since $\Aut$ accepts
  $\pair{\tu v}{S}\tu a\addPos{f}{\pair{\tu w}{T}}{b-1}$,
  $\wordToReal{\pair{\tu v}{S}\tu a\addPos{f}{\pair{\tu w}{T}}{b-1}}=
  \wordToReal{\pair{\tu v}{S}\tu a'\addPos{f}{\pair{\tu w}{T}}{0}}$
  and $\Aut$ is saturated, $\Aut$ accepts
  $\pair{\tu v}{S}\tu a'\addPos{f}{\pair{\tu w}{T}}{0}$. It follows
  that $\Aut_{\del{q}{\tu a'}}$ accepts
  $\addPos{f}{\pair{\tu w}{T}}{0}$. Finally, since
  $\Aut_{\del{q}{\tu a'}}$ accepts $\addPos{f}{\pair{\tu w}{T}}{0}$,
  by Lemma \ref{lem:add-fix-par},
  $\left(\AutFix[\Aut]{f}{0}\right)_{\del{q}{\tu{a'}}}$ accepts
  $\pair{\tu w}{T}$.
  \paragraph{Property \eqref{lem:char-RVA-char} implies Property
    \eqref{lem:char-RVA-dist}}
  Let $\pair{\tu{w}}{\set{s}}$ and $\pair{\tu{w}'}{\set{s}}$ be two
  words as in Property \eqref{lem:char-RVA-dist}. Let us prove that
  $\Aut$ accepts $\tu{w'}$.  Let $f\in[d-1]$ be the only integer such
  that $w_{f}\ne w'_{f}$.  As explained in \autoref{theo:rat}, since
  $\wordToFractional{\pair{w_{f}}{\set s}}=
  \wordToFractional{\pair{w'_{f}}{\set s}}$,
  since ${w}_{f}\ne {w'}_{f}$, and since their natural parts have the
  same length, then
  $\set{w_{f},w'_{f}}=\set{u_{f}a_{f}(b-1)^{\omega},u_{f}(a_{f}+1)0^\omega}$,
  for some $u_{f}\in\digitSet^{*}$ and
  $a_{f}\in\digitSet\setminus\set{b-1}$. Let us assume that $w_{f}$ is
  $u_{f}a_{f}(b-1)^{\omega}$, the case where $w_{f}$ is
  $u_{f}(a_{f}+1)0^\omega$ is similar. Note that
  $w'_{f}=u_{f}(a_{f}+1)0^\omega$.

  Let $l$ be the length of the prefix of $(w_{f},\set s)$ before the
  occurrence of the letter $a_{f}$. It is $\length{u_{f}}$ if
  $s>\length{u_{f}}$, and $\length{u_{f}}+1$ otherwise.  Let
  $\pair{\tu u}{U}=\prefix{\pair{\tu w}{\set s}}{l}=\prefix{\pair{\tu
      w'}{\set s}}{l}$ be this prefix.  Let
  $\tu a= \pair{\tu w}{\set s}[l]$ and $\tu a'= (\tu w',\set
  s)[l]$. Similarly, let
  $\pair{\tu v}{V}=\suffix{\pair{\tu w}{\set s}}{l+1}$ and
  $\pair{\tu v'}{V}=\suffix{(\tu w',\set s)}{l+1}$ be the suffixes of
  $\tu w$ and $\tu w'$ after those occurrences of $\tu a$ and of
  $\tu a'$ respectively. 
  Note that the notations $u_{f}$, $a_{f}$ and $a'_{f}$
  introduced above are consistent with the notations $\tu u$, $\tu a$
  and $\tu a'$.  Note also that ${v}_{f}=(b-1)^{\omega}$, that
  ${v'}_{f}=0^{\omega}$, and that $v_{i}=v'_{i}$ for all
  $i\in[d-1]\setminus \set f$. Hence
  $\removePos{f}{\pair{\tu v}{V}}=\removePos{f}{\pair{\tu v'}{V}}$.
  The notation $a_{f}$ introduced above is coherent with the notation
  $\tu a$. Since $a'_{f}=a_{f}+1$, $\tu a$ and $\tu a'$ satisfy the
  hypothesis of Property \eqref{lem:char-RVA-char}. It follows that
  $\toInfWord{\changeIniState{\left(\AutFix{f}{(b-1)}\right)}{\del{q}{\tu{a}}}}=
  \toInfWord{\changeIniState{\left(\AutFix{f}{0}\right)}{\del{q}{\tu
        a'}}}$.

  We can now prove that $\pair{\tu w'}{\set s}$ is accepted by
  $\Aut$. Since
  $\pair{\tu w}{\set s}=\pair{\tu u}{U}\tu a\pair{\tu v}{V}$ is
  accepted by $\Aut$ and $v_{f}=(b-1)^{\omega}$, by
  \autoref{lem:ult-fix}, 
  $\changeIniState{\left(\AutFix{f}{(b-1)}\right)}{\del{\iniState}{\pair{\tu
        u}{U}\tu{a}}}$ accepts $\removePos{f}{\pair{\tu v}{V}}$.
  Since
  $\changeIniState{\left(\AutFix{f}{(b-1)}\right)}{\del{\iniState}{\pair{\tu
        u}{U}\tu{a}}}$ accepts $\removePos{f}{\pair{\tu v}{V}}$ and
  $\toInfWord{\changeIniState{\left(\AutFix{f}{(b-1)}\right)}{\del{q}{\tu{a}}}}=
  \toInfWord{\changeIniState{\left(\AutFix{f}{0}\right)}{\del{q}{\tu
        a'}}}$,
  $\changeIniState{\left(\AutFix{f}{0}\right)}{\del{\iniState}{\pair{\tu
        u}{U}\tu a'}}$ accepts $\removePos{f}{\pair{\tu v}{V}}$. Since
  $\removePos{f}{\pair{\tu v}{V}}=\removePos{f}{\pair{\tu v'}{V}}$ and
  $\changeIniState{\left(\AutFix{f}{0}\right)}{\del{\iniState}{\pair{\tu
        u}{U}\tu a'}}$ accepts $\removePos{f}{\pair{\tu v}{V}}$,
  $\changeIniState{\left(\AutFix{f}{0}\right)}{\del{\iniState}{\pair{\tu
        u}{U}\tu a'}}$ accepts $\removePos{f}{\pair{\tu v'}{V}}$. It
  follows from \autoref{lem:ult-fix} that
  $\Aut$ accepts
  $\pair{\tu u}{U}\tu a'\pair{\tu v}{V}=(\tu w',\set s)$.
  \paragraph{Property \eqref{lem:char-RVA-dist} implies Property
    \eqref{lem:char-RVA-nat-fixed}} We must prove that, for all
  $\pair{\tu w}{\set s}, \pair{\tu
    w'}{\set{s}}\in\left(\digitDotSetDim\right)^{\omega}$, if $\Aut$
  accepts $\pair{\tu w}{\set s}$ and
  $\wordToReal{\pair{\tu w}{\set s}}=\wordToReal{(\tu w',\set{s})}$
  then $\Aut$ accepts $\pair{\tu w'}{\set {s'}}$.  The proof is by
  induction on $i=\distWord{w}{w'}$.  The case $i=0$ is trivial, since
  it means that $\pair{\tu {w}}{\set s}=\pair{\tu{w'}}{\set {s}}$.
  Let us now assume that $i>1$ and that the induction hypothesis holds
  when $\distWord{w}{w'}<i$.

  Since $\distWord{w}{w'}=i$ and $i>1$, there exists $f\in[d-1]$ such
  that $w_{f}\ne w'_{f}$.  Let
  $\pair{\tu w''}{\set s}\in\left(\digitDotSetDim\right)^{\omega}$
  such that $w''_{f}=w_{f}$ and such that $w_{k}''=w'_{k}$ for all
  $k\in[b-1]\setminus\set f$. Note that $\distWord{w}{w''}=i-1$,
  $\distWord{w'}{w''}=1$ and
  $\wordToReal{\pair{\tu w}{\set s}}=\wordToReal{\pair{\tu {w''}}{\set
      s}}=\wordToReal{\pair{\tu {w'}}{\set s}}$.

  Since $\distWord{w}{w''}=i-1$ and
  $\wordToReal{\pair{\tu w}{\set s}}=\wordToReal{\pair{\tu {w''}}{\set
      s}}$, by induction hypothesis $\Aut$ accepts
  $\pair{\tu {w''}}{\set s}$.  Since $\distWord{w''}{w'}=1$ and
  $\wordToReal{\pair{\tu{w''}}{\set
      s}}=\wordToReal{\pair{\tu{w'}}{\set s}}$ and since $\Aut$
  accepts $\pair{\tu {w''}}{\set s}$, by Property
  \eqref{lem:char-RVA-dist}, $\Aut$ accepts $\pair{\tu{w'}}{\set s}$.
  \paragraph{Property
    \eqref{lem:char-RVA-nat-fixed} implies Property \eqref{lem:char-RVA-RVA}}
  We must prove that, for all
  $\pair{\tu w}{\set s}, (\tu
  w',\set{s'})\in\left(\digitDotSetDim\right)^{\omega}$, if $\Aut$
  accepts $\pair{\tu w}{\set s}$ and
  $\wordToReal{\pair{\tu w}{\set s}}=\wordToReal{\pair{\tu w'}{\set
      {s'}}}$ then $\Aut$ accepts $\pair{\tu w'}{\set {s'}}$.  Let us
  assume that $s\le s'$, the case $s>s'$ is similar. Since
  $\del{\iniState}{\tu 0}=\iniState$ and $\Aut$ accepts
  $(\tu w, \set s)$, $\Aut$ accepts $\pair{\tu
    0^{s'-s}}{\set{s'}}$. Note that
  $\wordToReal{\pair{\tu 0^{s'-s}\tu w}{\set{s'}}}=
  \wordToReal{\pair{\tu w}{\set s}}=\wordToReal{\pair{\tu w'}{\set
      {s'}}}$. Since
  $\wordToReal{\pair{\tu 0^{s'-s}\tu
      w}{\set{s'}}}=\wordToReal{\pair{\tu w'}{\set s'}}$ and $\Aut$
  accepts $\pair{\tu 0^{s'-s}\tu w}{\set{s'}}$, by Property
  \eqref{lem:char-RVA-nat-fixed}, $\Aut$ accepts
  $\pair{\tu w'}{\set {s'}}$.
\end{proof}

\section{Characterization of $d$-sequential automata}\label{sec:char-RVA-seq}
A characterization of $d$-sequential automata is now given. This
characterization is similar to Property \ref{lem:char-RVA-char} of
\autoref{prop:char-RVA}. Instead of doing the whole proof again for
sequential automata, we prove that this characterization is correct by
proving that it implies the characterization of
\autoref{prop:char-RVA} on the parallelization of the $d$-sequential
automata.
\begin{proposition}\label{theo:seq-char}
  Let $\Aut$ be a minimal weak $d$-sequential Büchi automaton.  The
  following statements are equivalent:
  \begin{enumerate}
  \item\label{theo:seq-char-RVA-RVA} The automaton $\Aut$ is a RVA.
  \item\label{theo:seq-char-RVA-char}
    \begin{itemize}
    \item $\del{\iniState}{0^{d}}=\iniState$ and
    \item For each $q\in Q$ accessible in $\Aut$ from $q_{0}$, for
      each $a\in\digitSet\setminus{b-1}$, $
        {\changeIniState{\left(\AutFixSeq{(b-1)}\right)}{(\del{q}{a},0)}}$ and
        ${\changeIniState{\left(\AutFixSeq{0}\right)}{(\del{q}{a+1},0)}}$
        accept the same language.
    \end{itemize}
  \end{enumerate}
\end{proposition}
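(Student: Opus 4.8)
The plan is to reduce the $d$-sequential case to the $d$-parallel case already settled in \autoref{prop:char-RVA}, by parallelizing $\Aut$. First I would check that $\paral{\Aut}$ is a weak $d$-parallel automaton: it is weak because every strongly connected component of $\paral{\Aut}$ unfolds into a path contained in a single strongly connected component of $\Aut$ (so $F$, being a union of such components in $\Aut$, is one in $\paral{\Aut}$), and it is $d$-parallel because $\Aut$, being $d$-sequential, rejects every $\omega$-word whose number of $\realDot$'s differs from $1$, hence by unfolding runs $\paral{\Aut}$ rejects every $\omega$-word over $\digitDotSetDim$ with a number of $\realDot$'s different from $1$, i.e. $\toInfWord{\paral{\Aut}}\subseteq\digitSetDim^{*}\realDot\digitSetDim^{\omega}$. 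By \autoref{lem:seq-aut} the language $\toInfWord{\paral{\Aut}}$ is exactly the parallelization of $\toInfWord{\Aut}$, and a $d$-sequential language is by definition saturated precisely when its parallelization is, so $\Aut$ is a RVA if and only if $\paral{\Aut}$ is. Finally $\del{\iniState}{\tu 0}$ in $\paral{\Aut}$ equals $\del{\iniState}{0^{d}}$ in $\Aut$, so the first item of Property~\eqref{theo:seq-char-RVA-char} is exactly the hypothesis $\del{\iniState}{\tu 0}=\iniState$ of \autoref{prop:char-RVA} applied to $\paral{\Aut}$.

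Next I would record the $d$-sequential analogue of \autoref{lem:init-0}: repeating its proof verbatim, $\changeIniState{\Aut}{\del{\iniState}{0^{d}}}$ is again a $d$-sequential-shaped automaton, and for every $w\in\digitSet^{d\N}\realDot\digitSet^{\omega}$ the words $0^{d}w$ and $w$ encode the same vector, so saturation forces them to be simultaneously accepted; by minimality $\del{\iniState}{0^{d}}=\iniState$. Hence if $\del{\iniState}{0^{d}}\neq\iniState$ then neither Property~\eqref{theo:seq-char-RVA-RVA} nor Property~\eqref{theo:seq-char-RVA-char} holds and there is nothing to prove; from now on I assume $\del{\iniState}{0^{d}}=\iniState$, so that \autoref{prop:char-RVA} applies to $\paral{\Aut}$ — its proof uses only weakness and the identity $\del{\iniState}{\tu 0}=\iniState$, not minimality (alternatively one minimizes $\paral{\Aut}$ first and transports Property~\eqref{lem:char-RVA-dist} along the resulting morphism).

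The core step is to show, under the standing assumption, that the second item of Property~\eqref{theo:seq-char-RVA-char} implies Property~\eqref{lem:char-RVA-dist} of \autoref{prop:char-RVA} for $\paral{\Aut}$, and conversely that $\Aut$ being a RVA implies that second item. For the first implication, take $d$-parallel words $\pair{\tu w}{\set s}$ and $\pair{\tu w'}{\set s}$ of equal real value with $\distWord{w}{w'}=1$, differing in component $f$, with $\paral{\Aut}$ accepting $\pair{\tu w}{\set s}$; by \autoref{theo:rat} one may take $w_{f}=u_{f}a_{f}(b-1)^{\omega}$ and $w'_{f}=u_{f}(a_{f}+1)0^{\omega}$. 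Through \autoref{lem:seq-aut}, $\Aut$ accepts $\seqen{\pair{\tu w}{\set s}}=v\,a_{f}\,u$, where $v$ is the common prefix up to the displayed occurrence of $a_{f}$; set $q=\del{\iniState}{v}$. A count of positions modulo $d$ — carefully accounting for the fact that the $\realDot$ does not advance the mod-$d$ counter of $\AutFixSeq{z}$, which is moreover reset to $0$ right after the single digit is read — shows that the $\square$-positions of $\changeIniState{\left(\AutFixSeq{(b-1)}\right)}{(\del{q}{a_{f}},0)}$ are exactly the positions of $u$ lying in component $f$, all of which carry $b-1$; so by \autoref{lem:remove-fix-seq} together with \autoref{lem:pemut-fix-state-seq} this automaton accepts $\removePos{d-1}{u}$. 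Since $q$ is accessible and $a_{f}<b-1$, the second item of Property~\eqref{theo:seq-char-RVA-char} gives that $\changeIniState{\left(\AutFixSeq{0}\right)}{(\del{q}{a_{f}+1},0)}$ accepts $\removePos{d-1}{u}$, which equals $\removePos{d-1}{u'}$ for the corresponding suffix $u'$ of $\seqen{\pair{\tu w'}{\set s}}$ (the two suffixes agree off the $\square$-positions); unwinding the same equivalences with $0$ in place of $b-1$ yields that $\Aut$ accepts $v\,(a_{f}+1)\,u'=\seqen{\pair{\tu w'}{\set s}}$, i.e. $\paral{\Aut}$ accepts $\pair{\tu w'}{\set s}$. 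For the converse, given $q$ accessible in $\Aut$ and $a<b-1$, fix $v$ with $\del{\iniState}{v}=q$; for any $u$ accepted by $\changeIniState{\left(\AutFixSeq{(b-1)}\right)}{(\del{q}{a},0)}$, \autoref{lem:remove-fix-seq} and \autoref{lem:pemut-fix-state-seq} give that $\Aut$ accepts $v\,a\,\addPos{d-1}{u}{b-1}$; the same modular count shows that in this word the component containing $a$ reads $\dots a(b-1)^{\omega}$ while in $v\,(a+1)\,\addPos{d-1}{u}{0}$ it reads $\dots(a+1)0^{\omega}$ and all other components agree, so by \autoref{theo:rat} the two words encode the same vector and carry equally many $\realDot$'s; saturation of $\Aut$ then gives that $\Aut$ accepts $v\,(a+1)\,\addPos{d-1}{u}{0}$, i.e. $\changeIniState{\left(\AutFixSeq{0}\right)}{(\del{q}{a+1},0)}$ accepts $u$, and symmetrically, so the two $\omega$-languages coincide.

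Chaining the implications finishes the proof: Property~\eqref{theo:seq-char-RVA-char} implies Property~\eqref{lem:char-RVA-dist} for $\paral{\Aut}$, hence by \autoref{prop:char-RVA} $\paral{\Aut}$ is a RVA, hence $\Aut$ is a RVA, which is Property~\eqref{theo:seq-char-RVA-RVA}; conversely Property~\eqref{theo:seq-char-RVA-RVA} gives that $\Aut$ (and so $\paral{\Aut}$) is a RVA, whence the second item of Property~\eqref{theo:seq-char-RVA-char} by the previous paragraph, while the first item holds by the initial-state argument. I expect the main obstacle to be exactly the modular position bookkeeping of that paragraph: reconciling the freshly-reset, $\realDot$-insensitive mod-$d$ counter of $\AutFixSeq{z}$ with the intrinsic component indexing of $\paral{\Aut}$, where a single off-by-one in the placement of the $\realDot$ or of the incremented digit would misalign the construction with \autoref{theo:rat}.
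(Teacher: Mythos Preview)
Your proposal is correct and follows the same overall strategy as the paper: reduce to the parallel case by passing to $\paral{\Aut}$ and invoking \autoref{prop:char-RVA}. The difference is organizational. The paper isolates the translation between the sequential and parallel fixed-component automata into a standalone lemma (\autoref{lem:char-seq-to-paral}), which says that $\changeIniState{\left(\AutFix[\paral{\Aut}]{f}{z}\right)}{\paralDel{q}{\tu a}}$ accepts $\pair{\tu w}{S}$ if and only if $\changeIniState{\left(\AutFixSeq{z}\right)}{(\del{q}{a_0\dots a_f},0)}$ accepts $a_{f+1}\dots a_{d-1}\seqen{\pair{\tu w}{S}}$; with that lemma the paper verifies Property~\eqref{lem:char-RVA-char} of \autoref{prop:char-RVA} for $\paral{\Aut}$ by a short chain of citations. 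You instead verify Property~\eqref{lem:char-RVA-dist} for $\paral{\Aut}$ and do the modular bookkeeping inline. This is slightly longer, since you effectively redo the passage \eqref{lem:char-RVA-char}$\Rightarrow$\eqref{lem:char-RVA-dist} already contained in \autoref{prop:char-RVA}, but it avoids stating an auxiliary lemma; the paper's packaging makes the alignment argument reusable and keeps the main proof free of index-chasing. Your remark that the implications of \autoref{prop:char-RVA} use only weakness and $\del{\iniState}{\tu 0}=\iniState$, not minimality, is correct and is exactly what the paper silently relies on when applying that proposition to $\paral{\Aut}$, which need not be minimal.
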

In order to prove this proposition, we introduce the following
lemma. This lemma allows us to reduce Property
\eqref{theo:seq-char-RVA-char} of \autoref{theo:seq-char} to Property
\eqref{lem:char-RVA-char} of \autoref{prop:char-RVA}.
\begin{lemma}\label{lem:char-seq-to-paral}
  Let $\Aut$ be a weak $d$-sequential Büchi automaton, $f\in[d-1]$,
  $z\in\digitSet$, $q$ a state of $\Aut$, $\tu a\in\digitSetDim$ and
  $\pair{\tu w}{S}\in\left(\squarphabetDotDim\right)^{\omega}$. The
  word $\pair{\tu w}{S}$ is accepted by
  $\changeIniState{\left(\AutFix[\paral{\Aut}]{f}{z}\right)}{(\paralDel{q}{\tu{a}},0)}$
  if and only if
  $\changeIniState{\left(\AutFixSeq{z}\right)}{\del{q}{a_{0}\dots{}a_{f}}}$
  accepts $a_{f+1}\dots{}a_{d-1}\seqen{\pair{\tu w}{S}}$.
\end{lemma}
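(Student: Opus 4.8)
The plan is to prove both directions at once by reducing each side to one and the same statement about $\Aut$ itself, namely that $\changeIniState{\Aut}{q}$ accepts the $d$-sequential $\omega$-word $u:=a_{0}a_{1}\cdots a_{d-1}\,\seqen{\addPos{f}{\pair{\tu w}{S}}{z}}$; the lemma then follows by transitivity of equivalence. Here $\addPos{f}{\pair{\tu w}{S}}{z}\in\left(\digitDotSetDim\right)^{\omega}$ is the $d$-parallel word obtained from $\pair{\tu w}{S}$ by turning the $\square$ of component $f$ into the digit $z$, $\seqen{\cdot}$ sequentializes it, and the prefix $a_{0}\cdots a_{d-1}$ merely records the run of $\Aut$ from $q$ to $\paralDel{q}{\tu a}=\del{q}{a_{0}\cdots a_{d-1}}$, which equals $\del{q}{a_{0}\cdots a_{f}}$ followed by $a_{f+1}\cdots a_{d-1}$.

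For the left-hand side I would chase through the commutation lemmas already proved. By \autoref{lem:pemut-fix-state-par} and \autoref{lem:paral-state}, changing the initial state commutes past both the fixing operation and parallelization, so $\changeIniState{\left(\AutFix[\paral{\Aut}]{f}{z}\right)}{\paralDel{q}{\tu a}}$ equals $\AutFix[\paral{(\changeIniState{\Aut}{\paralDel{q}{\tu a}})}]{f}{z}$. Then \autoref{lem:add-fix-par} replaces acceptance of $\pair{\tu w}{S}$ by this automaton with acceptance of $\addPos{f}{\pair{\tu w}{S}}{z}$ by $\paral{(\changeIniState{\Aut}{\paralDel{q}{\tu a}})}$, and \autoref{lem:seq-aut} turns that into acceptance of $\seqen{\addPos{f}{\pair{\tu w}{S}}{z}}$ by $\changeIniState{\Aut}{\paralDel{q}{\tu a}}$, i.e.\ acceptance of $u$ by $\changeIniState{\Aut}{q}$.

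For the right-hand side the analogous reduction uses \autoref{lem:pemut-fix-state-seq} (reading the second coordinate of the displayed initial state as $0$) and \autoref{lem:remove-fix-seq} applied to $\changeIniState{\Aut}{\del{q}{a_{0}\cdots a_{f}}}$, but it needs one preliminary bookkeeping step, which I expect to be the only real obstacle. One must verify that $a_{f+1}\cdots a_{d-1}\,\seqen{\pair{\tu w}{S}}$ is exactly $\removePos{d-1}{\pair{v}{T}}$ for $\pair{v}{T}:=a_{f+1}\cdots a_{d-1}\,\seqen{\addPos{f}{\pair{\tu w}{S}}{z}}$, and that $\paral{v}_{d-1}=z^{\omega}$; equivalently, that prepending the $d-1-f$ digits $a_{f+1}\cdots a_{d-1}$ rotates the occurrences of $\square$ from positions $\equiv f\pmod d$ in $\seqen{\pair{\tu w}{S}}$ (not counting the $\realDot$'s) onto positions $\equiv d-1\pmod d$, the component that $\AutFixSeq{z}$ is built to fix. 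This is precisely the index manipulation isolated in \autoref{lem:change-move} together with the definition of $\seqen$. Granting it, the right-hand statement becomes acceptance of $a_{f+1}\cdots a_{d-1}\,\seqen{\addPos{f}{\pair{\tu w}{S}}{z}}$ by $\changeIniState{\Aut}{\del{q}{a_{0}\cdots a_{f}}}$, which is again acceptance of $u$ by $\changeIniState{\Aut}{q}$ since $a_{0}\cdots a_{f}\,a_{f+1}\cdots a_{d-1}=a_{0}\cdots a_{d-1}$. A minor point to check along the way is that \autoref{lem:seq-aut} and \autoref{lem:remove-fix-seq}, phrased for genuine encodings, are being applied here to words $\pair{\tu w}{S}$ with possibly non-singleton $S$; if necessary one runs the same chase with the pair-encoding forms of those correspondences, tracking the image of $S$ under $\seqen$ and under prepending a finite word. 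Every step other than the modular alignment is a direct invocation of a previously established lemma.
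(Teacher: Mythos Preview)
Your proposal is correct and follows essentially the same route as the paper: both sides are reduced, via the commutation lemmas \autoref{lem:pemut-fix-state-par}/\autoref{lem:pemut-fix-state-seq}, \autoref{lem:paral-state}, \autoref{lem:add-fix-par}, \autoref{lem:seq-aut}, and \autoref{lem:remove-fix-seq}, to the statement that $\changeIniState{\Aut}{q}$ accepts $a_{0}\cdots a_{d-1}\,\seqen{\addPos{f}{\pair{\tu w}{S}}{z}}$, with the index-shift bookkeeping as the only nontrivial step. The paper does the chain as a single string of equivalences rather than meeting in the middle, and in the bookkeeping it invokes not only \autoref{lem:change-move} but also \autoref{lem:change-twice} and \autoref{lem:change-cancel} explicitly to collapse $\removePos{f}{\addPos{f}{\pair{\tu w}{S}}{z}}$ back to $\pair{\tu w}{S}$; you describe the same content informally but would do well to cite those two lemmas as well.
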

\begin{proof}
  By \autoref{lem:pemut-fix-state-seq}
  \quo{$\changeIniState{\left(\AutFix[\paral{\Aut}]{f}{z}\right)}{(\paralDel{q}{\tu{a}},0)}$
    accepts $\pair{\tu w}{S}$} is equivalent to:
  \quo{$\AutFix[\left(\changeIniState{\paral{\Aut}}{\paralDel{q}{\tu{a}}}\right)]{f}{z}$
    accepts $\pair{\tu w}{S}$}. By \autoref{lem:add-fix-par}, it is
  equivalent to:
  \quo{$\left(\changeIniState{\paral{\Aut}}{\paralDel{q}{\tu{a}}}\right)$
    accepts $\addPos{f}{\pair{\tu w}{S}}{z}$}, hence to
  \quo{$\left(\paral{\Aut}\right)_{q}$ accepts
    $\tu a\addPos{f}{\pair{\tu w}{S}}{z}$}.  By
  \autoref{lem:paral-state}, it is also equivalent to:
  \quo{$\paral{\changeIniState{\Aut}{q}}$ accepts
    $\tu a\addPos{f}{\pair{\tu w}{S}}{z}$}.  By \autoref{lem:seq-aut},
  it is also equivalent to: \quo{$\changeIniState{\Aut}{q}$ accepts
    $\seqen{\tu a\addPos{f}{\pair{\tu w}{S}}{z}}$}. Note that
  $\seqen{\tu a\addPos{f}{\pair{\tu w}{S}}{z}}$ equals
  $a_{0}\dots{}a_{d-1}\seqen{\addPos{f}{\pair{\tu w}{S}}{z}}$, thus
  the above-mentionned properties are equivalent to:
  \quo{$\changeIniState{\Aut}{q}$ accepts
    $a_{0}\dots{}a_{d-1}\seqen{\addPos{f}{\pair{\tu w}{S}}{z}}$} and
  then to: \quo{$\Aut_{\del{q}{a_{0}\dots,a_{f}}}$ accepts
    $a_{f+1}\dots{}a_{d-1}\seqen{\addPos{f}{\pair{\tu w}{S}}{z}}$}.
  By \autoref{lem:remove-fix-seq}, it is equivalent to:
  \quo{$\AutFixSeq[\left(\Aut_{_{\del{q}{a_{0}\dots,a_{f}}}}\right)]{z}$
    accepts
    $\removePos{d-1}{a_{f+1}\dots{}a_{d-1}\seqen{\addPos{f}{\pair{\tu
            w}{S}}{z}}}$}.

  Note that the length of $a_{f+1}\dots{}a_{d-1}$ is
  $(d-1)-(f+1)+1=d-f-1$, and that $(d-1)-(d-f-1)=f$, thus by
  \autoref{lem:change-move},
  $\removePos{d-1}{a_{f+1}\dots{}a_{d-1}\seqen{\addPos{f}{\pair{\tu
          w}{S}}{z}}}$
  equals
  $a_{f+1}\dots{}a_{d-1}\removePos{f}{\seqen{\addPos{f}{\pair{\tu
          w}{S}}{z}}}$.
  By definition of $\fix_{\square\MVAt f}$ on $d$-sequential number,
  $\removePos{f}{\seqen{\addPos{f}{\pair{\tu w}{S}}{z}}}$ equals
  $\seqen{\removePos{f}{\addPos{f}{\pair{\tu w}{S}}{z}}}$.  By
  \autoref{lem:change-twice},
  $\removePos{f}{\addPos{f}{\pair{\tu w}{S}}{z}}$ equals
  $\removePos{f}{{\pair{\tu w}{S}}}$.  Since $w_{f}=\square^{\omega}$,
  by \autoref{lem:change-cancel},
  $\removePos{f}{{\pair{\tu w}{S}}}=\pair{\tu w}{S}$.  It follows that
  $\removePos{d-1}{a_{f+1}\dots{}a_{d-1}\seqen{\addPos{f}{\pair{\tu
          w}{S}}{z}}}$
  equals $a_{f+1}\dots{}a_{d-1}\seqen{\pair{\tu w}{S}}$.  Thus, all
  the above mentionned facts are equivalent to:
  \quo{$\AutFixSeq[\left(\Aut_{_{\del{q}{a_{0}\dots,a_{f}}}}\right)]{b-1}$
    accepts $a_{f+1}\dots{}a_{d-1}\seqen{\pair{\tu w}{S}}$}. Finally,
  by \autoref{lem:pemut-fix-state-par}, it is equivalent to
  \quo{$\changeIniState{\left(\AutFixSeq{z}\right)}{\del{q}{a_{0}\dots{}a_{f}}}$
    accepts $a_{f+1}\dots{}a_{d-1}\seqen{\pair{\tu w}{S}}$}.
\end{proof}
\autoref{theo:seq-char} is now proven.
\begin{proof}
  Let us show that Property \eqref{theo:seq-char-RVA-RVA} implies
  Property \eqref{theo:seq-char-RVA-char}.  The proof of the first
  part of this Property is the same than the proof of
  \autoref{lem:init-0}.  The proof of the second part of this Property
  is the same than the proof that Property \eqref{lem:char-RVA-RVA} of
  \autoref{prop:char-RVA} implies Property \eqref{lem:char-RVA-char}
  of \autoref{prop:char-RVA}.

  It remains to prove that Property \eqref{theo:seq-char-RVA-char}
  implies Property \eqref{theo:seq-char-RVA-RVA}. We want to prove
  that $\Aut$ is a RVA. Since $\Aut$ is $d$-sequential, it remains to
  prove that $\toInfWord{\Aut}$ is saturated. It suffices to prove
  that $\paral{\toInfWord{\Aut}}$ is saturated. By
  \autoref{lem:paral-aut}, $\paral{\Aut}$ accepts
  $\paral{\toInfWord{\Aut}}$. Therefore, we only have to prove that
  $\paral{\Aut}$ is a RVA. By \autoref{prop:char-RVA}, it suffices to
  prove that $\paralDel{\iniState}{\tu 0}=\iniState$ and that, for
  each $f\in[d-1]$, for each $q\in Q$, accessible in $\Aut$ from
  $q_{0}$, for each $\tu a\in\digitSetDim$ with $a_{f}<b-1$,
  $
  \toInfWord{\changeIniState{\left(\AutFix[{\paral{\Aut}}]{f}{(b-1)}\right)}{\paralDel{q}{\tu{a}}}}=
  \toInfWord{\changeIniState{\left(\AutFix[\paral{\Aut}]{f}{0}\right)}{\paralDel{q}{\tu
        a'}}}$ where
  $\tu a'=a_{0}\dots{}a_{f-1}(a_{f}+1)a_{f+1}\dots{}a_{d-1}$.

  Note that the first part of Property \eqref{theo:seq-char-RVA-char}
  clearly implies $\paralDel{\iniState}{\tu 0}=\iniState$.  Let $q$,
  $\tu a$, $\tu a'$ as above, it remains to prove that
  $\toInfWord{\changeIniState{\left(\AutFix[\paral{\Aut}]{f}{(b-1)}\right)}{\paralDel{q}{\tu{a}}}}\subseteq
  \toInfWord{\changeIniState{\left(\AutFix[\paral{\Aut}]{f}{0}\right)}{\paralDel{q}{\tu{a'}}}}$,
  the reverse inclusion is similar.  Let
  $\pair{\tu w}{S}\in\squarphabetDotDim^{\omega}$ accepted by
  $\changeIniState{\left(\AutFix[\paral{\Aut}]{f}{(b-1)}\right)}{\paralDel{q}{\tu{a}}}$,
  we want to prove that it is accepted by
  $\AutFix[\left(\paral{\Aut}\right)_{\paralDel{q}{\tu{a}'}}]{f}{0}$.

  Since
  $\changeIniState{\left(\AutFix[\paral{\Aut}]{f}{(b-1)}\right)}{\paralDel{q}{\tu{a}}}$
  accepts $\pair{\tu w}{S}\in\left(\squarphabetDotDim\right)^{\omega}$, by
  \autoref{lem:char-seq-to-paral},
  $\AutFixSeq[\left(\Aut_{\del{q}{a_{0}\dots{}a_{f-1}a_{f}}}\right)]{b-1}$
  accepts $a_{f+1}\dots{}a_{d-1}\seqen{\pair{\tu w}{S}}$. By the
  second part of Property \eqref{theo:seq-char-RVA-char}, it follows
  that
  $\AutFixSeq[\left(\Aut_{\del{q}{a_{0}\dots{a_{f-1}}(a_{f}+1)}}\right)]{0}$
  accepts $a_{f+1}\dots{}a_{d-1}\seqen{\pair{\tu w}{S}}$. By
  \autoref{lem:char-seq-to-paral}, it follows that
  $\changeIniState{\left(\AutFix[\paral{\Aut}]{f}{0}\right)}{\paralDel{q}{\tu{a'}}}$
  accepts $\pair{\tu w}{S}$.
\end{proof}
\section{Algorithms}\label{sec:algo-RVA}
We now show how to decide efficiently whether an automaton is a RVA.
\begin{theorem}\label{theo:RVA-par}
  Let $\Aut=\autPar{Q}{\digitDotSetDim}{\delta}{q_0}{F}$ an automaton
  with $n$ states. It is decidable in time $\bigO{n\log(n)db^{d}}$ and
  space $\bigO{nb^{d}}$ whether $\Aut$ is a RVA.
\end{theorem}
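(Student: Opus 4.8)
The plan is to reduce the question whether $\Aut$ is a RVA to deciding a family of language-equivalence queries between states of weak Büchi automata built from $\Aut$, and to answer those queries with the data structure of \autoref{theo:min-quasi}.

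First I would test, using \autoref{theo:d-par}, whether $\Aut$ is a $d$-parallel automaton (that is, whether it accepts only well-formed parallel encodings); if not, $\Aut$ is not a RVA and the algorithm stops. This costs time $\bigO{nb^{d}}$ and space $\bigO{n}$. Since being a RVA depends only on $\toInfWord{\Aut}$ and minimization preserves the recognized language, I would then replace $\Aut$ by its minimal weak quotient $\Aut'$, which is still weak and $d$-parallel and is obtained by \autoref{theo:minimal} in time $\bigO{n\log(n)b^{d}}$ and space $\bigO{nb^{d}}$ (after a linear-time removal of inaccessible states, if needed). By \autoref{lem:init-0}, if $\del{\iniState'}{\tu 0}\ne\iniState'$ then $\Aut'$, hence $\Aut$, is not a RVA and the algorithm stops. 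Otherwise $\Aut'$ satisfies the hypotheses of \autoref{prop:char-RVA}, so it remains to decide its Property~\eqref{lem:char-RVA-char}: for every $f\in[d-1]$, every state $q$ accessible from $\iniState'$ (the accessible states being computed once in time $\bigO{nb^{d}}$), and every $\tu a\in\digitSetDim$ with $a_{f}<b-1$, whether
\begin{equation*}
  \toInfWord{\changeIniState{\left(\AutFix[\Aut']{f}{(b-1)}\right)}{\del{q}{\tu a}}}=\toInfWord{\changeIniState{\left(\AutFix[\Aut']{f}{0}\right)}{\del{q}{\tu a'}}},
\end{equation*}
where $\tu a'=a_{0}\dots a_{f-1}(a_{f}+1)a_{f+1}\dots a_{d-1}$.

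For a fixed $f$, I would build the two automata $\AutFix[\Aut']{f}{(b-1)}$ and $\AutFix[\Aut']{f}{0}$ of \autoref{sec:mod-aut-par}: each has the state set of $\Aut'$, the alphabet $\squarphabetDotDim$ of cardinality $b^{d-1}+1$, and transitions given by one lookup in the transition function of $\Aut'$, so each is built in time $\bigO{nb^{d-1}}$; both are weak by \autoref{lem:fix-weak}. Feeding this pair (at most $2n$ states, alphabet cardinality $b^{d-1}+1$) to \autoref{theo:min-quasi} yields, in time $\bigO{n\log(n)b^{d-1}}$ and space $\bigO{nb^{d-1}}$, a structure that answers in constant time whether any state of $\AutFix[\Aut']{f}{(b-1)}$ and any state of $\AutFix[\Aut']{f}{0}$ recognize the same language. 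I would then loop over all accessible states $q$ of $\Aut'$ and all $\tu a\in\digitSetDim$ with $a_{f}<b-1$, querying the structure on $\del{q}{\tu a}$ and $\del{q}{\tu a'}$; this loop runs in time $\bigO{nb^{d}}$. The algorithm accepts exactly when every query, over all $f\in[d-1]$, returns equality; correctness is then precisely the equivalence of Properties~\eqref{lem:char-RVA-RVA} and~\eqref{lem:char-RVA-char} in \autoref{prop:char-RVA}, together with \autoref{theo:d-par} and \autoref{lem:init-0}.

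Summing the costs: the parallelism test and the accessibility computation take $\bigO{nb^{d}}$, the minimization takes $\bigO{n\log(n)b^{d}}$, and for each of the $d$ values of $f$ the two fixed automata together with their equivalence structure are obtained in time $\bigO{n\log(n)b^{d-1}}$ while the inner loop takes $\bigO{nb^{d}}$; this totals $\bigO{n\log(n)db^{d}}$. For space, $\Aut$ and $\Aut'$ take $\bigO{nb^{d}}$, and the per-$f$ objects take $\bigO{nb^{d-1}}$ and can be freed before the next value of $f$, so the space stays $\bigO{nb^{d}}$. I expect the only point needing care to be this accounting: the fixed automata must be kept over an alphabet of size $\bigO{b^{d-1}}$, so that the $d$ applications of \autoref{theo:min-quasi} cost only $\bigO{n\log(n)db^{d-1}}$, which is within the stated bound, while the $\bigO{nb^{d}}$ iteration over pairs $(q,\tu a)$ per value of $f$ remains; the weakness facts required to invoke \autoref{theo:minimal} and \autoref{theo:min-quasi} are already provided by \autoref{lem:fix-weak}, and the correctness of the decision is immediate from \autoref{prop:char-RVA}.
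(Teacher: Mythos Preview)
Your proposal is correct and follows essentially the same approach as the paper: check $d$-parallelism via \autoref{theo:d-par}, reduce to the minimal automaton, test $\del{\iniState}{\tu 0}=\iniState$ via \autoref{lem:init-0}, and then, for each $f$, build $\AutFix{f}{0}$ and $\AutFix{f}{(b-1)}$, invoke \autoref{theo:min-quasi} (using \autoref{lem:fix-weak} for weakness), and query all pairs $(\del{q}{\tu a},\del{q}{\tu a'})$ as in Property~\eqref{lem:char-RVA-char} of \autoref{prop:char-RVA}. The only cosmetic differences are that you spell out the minimization step explicitly (the paper absorbs it in a ``without loss of generality'') and that you track the alphabet size $b^{d-1}+1$ of the fixed automata more carefully than the paper does; both yield the stated bounds.
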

Note that $b^{d}$ is the cardinality of the alphabet. Thus this
algorithm is quasi-linear in the size of its input.
\begin{proof}
  Without loss of generality, it can be assumed that the automaton is
  minimal.  The algorithm consists in three parts. First, the
  algorithm checks whether the algorithm of \autoref{theo:d-par}
  applied on $\Aut$ returns true. Secondly, the algorithm checks
  whether $\del{\iniState}{\tu 0}=\iniState$. Thirdly, the algorithm
  runs on each $f\in[d-1]$. For each $f$, the algorithm generates the
  automata $\AutFix{f}{0}$ and $\AutFix{f}{(b-1)}$ and the data
  structure mentionned in \autoref{theo:min-quasi}. The algorithm runs
  on each $q\in Q$ and $\tu a$, $\tu a'$ as in Property
  \eqref{lem:char-RVA-char} of \autoref{prop:char-RVA}. The algorithm
  then applies the algorithm of \autoref{theo:min-quasi} to check
  whether, for all pairs $(\paralDel{q}{\tu a},\paralDel{q}{\tu a'})$,
  $\changeIniState{\left(\AutFix{f}{b-1}\right)}{\paralDel{q}{\tu a}}$
  and
  $\changeIniState{\left(\AutFix{f}{0}\right)}{\paralDel{q}{\tu a'}}$
  accept the same language.  If one of those checks fail the algorithm
  rejects. Otherwise, the algorithm accepts.

  By \autoref{lem:fix-weak}, $\AutFix{f}{0}$ and $\AutFix{f}{(b-1)}$
  are weak, thus, \autoref{theo:min-quasi} can be used.  It follows
  from \autoref{theo:d-par}, \autoref{lem:init-0} and
  \autoref{prop:char-RVA} that this algorithm accepts exactly the
  $d$-parallel automata which are RVA.

  It remains to consider the complexity. By \autoref{theo:d-par}, the
  first part runs in $\bigO{nb^{d}}$ and space $\bigO{n}$. The second
  part clearly runs in constant time and space. By
  \autoref{theo:min-quasi}, the last part runs in time
  $\bigO{n\log(n)b^{d}d}$ and space $\bigO{n\log{n}b^{d}}$.
\end{proof}
We now prove a similar theorem for $d$-sequential automata.
\begin{theorem}\label{theo:RVA-seq}
  Let $\Aut=\autPar{Q}{\digitDotSet}{\delta}{q_0}{F}$ an automaton
  with $n$ states. It is decidable in time $\bigO{nd\log(nd)b}$ and
  space $\bigO{ndb}$ whether $\Aut$ is a RVA.
\end{theorem}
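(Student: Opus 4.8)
The plan is to transcribe the algorithm behind \autoref{theo:RVA-par} almost verbatim, using the $d$-sequential characterization of \autoref{theo:seq-char} in place of \autoref{prop:char-RVA}. First I would reduce to the minimal case: since $\Aut$ is weak, \autoref{theo:minimal} computes an equivalent minimal weak automaton over $\digitDotSet$ in time $\bigO{n\log(n)b}$ and space $\bigO{nb}$, and this can only decrease the number of states, so I may assume $\Aut$ is minimal (being a RVA depends only on the recognized language, which is preserved). The algorithm then has three parts. Part~1 runs the algorithm of \autoref{theo:d-seq} to test whether $\Aut$ is a $d$-sequential automaton, rejecting otherwise; this is sound because a RVA over $\digitDotSet$ must in particular be a $d$-sequential automaton, and from here on \autoref{theo:seq-char} applies. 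Part~2 tests whether $\del{\iniState}{0^{d}}=\iniState$, which is the first bullet of Property~\eqref{theo:seq-char-RVA-char}; reject if it fails. Part~3 tests the second bullet of that Property.

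For Part~3, I would compute the states of $\Aut$ reachable from $\iniState$ by one graph traversal, then build the two automata $\AutFixSeq{0}$ and $\AutFixSeq{(b-1)}$ of \autoref{def:autfix-R}. A key point — and the reason this is not a mechanical copy of the parallel case — is that Property~\eqref{theo:seq-char-RVA-char} is uniform in the position, so these two automata are built once and for all rather than once per coordinate $f$; the trade-off is that $\AutFixSeq{z}$ has state set $Q\otimes[d-1]$ together with a sink state, hence $\bigO{nd}$ states, with alphabet $\squarphabetDot$ of $\bigO{b}$ letters. By \autoref{lem:fix-weak} both automata are weak, so \autoref{theo:min-quasi} applies to the pair $(\AutFixSeq{(b-1)},\AutFixSeq{0})$ and yields, in time $\bigO{nd\log(nd)b}$ and space $\bigO{ndb}$, a data structure of size $\bigO{ndb}$ letting me decide in constant time, for any state $p$ of $\AutFixSeq{(b-1)}$ and any state $p'$ of $\AutFixSeq{0}$, whether $\changeIniState{\left(\AutFixSeq{(b-1)}\right)}{p}$ and $\changeIniState{\left(\AutFixSeq{0}\right)}{p'}$ accept the same language. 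The algorithm then loops over every reachable $q\in Q$ and every $a\in\digitSet\setminus\set{b-1}$ and uses this data structure to test in constant time whether $\changeIniState{\left(\AutFixSeq{(b-1)}\right)}{(\del{q}{a},0)}$ and $\changeIniState{\left(\AutFixSeq{0}\right)}{(\del{q}{a+1},0)}$ accept the same language, rejecting on the first failure and accepting if none occurs. Correctness follows since the algorithm accepts precisely when $\Aut$ is $d$-sequential and Property~\eqref{theo:seq-char-RVA-char} holds, i.e., by \autoref{theo:seq-char}, precisely when $\Aut$ is a RVA.

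For the running time and space I would just sum the pieces: minimization in $\bigO{n\log(n)b}$ time and $\bigO{nb}$ space; Part~1 in $\bigO{nbd}$ time and $\bigO{nd}$ space by \autoref{theo:d-seq}; Part~2 in $\bigO{d}$ time and constant space; in Part~3 the reachability traversal in $\bigO{nb}$ time and $\bigO{n}$ space, the construction of $\AutFixSeq{0}$ and $\AutFixSeq{(b-1)}$ in $\bigO{ndb}$ time and space, the call to \autoref{theo:min-quasi} in $\bigO{nd\log(nd)b}$ time and $\bigO{ndb}$ space, and the final loop performing $\bigO{nb}$ constant-time tests. The dominant term is $\bigO{nd\log(nd)b}$ in time and $\bigO{ndb}$ in space, as claimed. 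The only genuinely delicate point — the main obstacle, such as it is — is the bookkeeping around $\AutFixSeq{z}$: one must track the $d$-fold blow-up of its state space, check that the queried states $(\del{q}{a},0)$ and $(\del{q}{a+1},0)$ really are states of the automata fed to \autoref{theo:min-quasi} (they are, since $0\in[d-1]$ and $\del{q}{a},\del{q}{a+1}\in Q$), and verify via \autoref{lem:char-seq-to-paral} and \autoref{theo:seq-char} that this loop condition is exactly the one needed. All remaining steps are routine transcriptions of the proof of \autoref{theo:RVA-par}.
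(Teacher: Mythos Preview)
Your proposal is correct and follows essentially the same approach as the paper's proof: reduce to the minimal case, test $d$-sequentiality via \autoref{theo:d-seq}, test $\del{\iniState}{0^{d}}=\iniState$, build $\AutFixSeq{0}$ and $\AutFixSeq{(b-1)}$, invoke \autoref{theo:min-quasi} once, and loop over $q\in Q$ and $a\in\digitSet\setminus\set{b-1}$ to test the language equalities of \autoref{theo:seq-char}. Your write-up is in fact a bit more careful than the paper's (you make the state pair $(\del{q}{a},0)$ explicit and spell out the $\bigO{nd}$ state count of $\AutFixSeq{z}$), but the algorithm, correctness argument, and complexity analysis are the same.
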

\begin{proof}
  Without loss of generality, it can be assumed that the automaton is
  minimal.  The algorithm and its proof are similar to the ones of
  \autoref{theo:RVA-par}. The algorithm consists in three
  parts. Firstly, the algorithm applies the algorithm of
  \autoref{theo:d-seq} to check whether $\Aut$ is a $d$-sequential
  automaton. Secondly, it checks whether
  $\del{\iniState}{0^{d}}=\iniState$. Thirdly, the algorithm generates
  the automata $\AutFixSeq{0}$ and $\AutFixSeq{(b-1)}$ and the data
  structure mentionned in \autoref{theo:min-quasi}.  The algorithm
  runs on each $q\in Q$ and $a\in\digitSet\setminus{b-1}$. For each
  $q$ and $a$, the algorithm applies the algorithm of
  \autoref{theo:min-quasi} to check whether
  ${\changeIniState{\left(\AutFixSeq{(b-1)}\right)}{\del{q}{a}}}$ and
  ${\changeIniState{\left(\AutFixSeq{0}\right)}{\del{q}{a+1}}}$ accept
  the same language.

  By \autoref{lem:fix-weak}, $\AutFixSeq{0}$ and $\AutFixSeq{(b-1)}$
  are weak, thus, \autoref{theo:min-quasi} can be used. It follows
  from \autoref{theo:d-seq} and \autoref{theo:seq-char} that this
  algorithm accepts exactly the $d$-sequential automata which are RVAs.
  
  It remains to consider the complexity.  By \autoref{theo:d-seq}, the
  first part runs in time $\bigO{ndb}$ and space $\bigO{nd}$. The
  second part runs in time $\bigO{d}$ and constant space. Finally, by
  \autoref{theo:min-quasi}, the third part runs in time
  $\bigO{nd\log(nd)b}$ and space $\bigO{ndb}$.



\end{proof}
A last algorithm is given for the special case of dimension $d=1$.
\begin{theorem}\label{theo:dim-1}
  Let $\Aut=\autPar{Q}{\digitSet}{\delta}{q_0}{F}$ a minimal weak
  Büchi automaton with $n$ states. It is decidable in time $\bigO{nb}$
  and space $\bigO{n}$ whether $\toInfWord{\Aut}$ is saturated.
\end{theorem}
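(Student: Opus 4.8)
The plan is to specialize \autoref{prop:char-RVA} to $d=1$, so that the component $f$ is forced to equal $0$, and then to exploit the fact that the automata $\AutFix{0}{z}$ have the two-letter alphabet $\set{\square,\realDot}$ and a very constrained shape, which lets us carry out all the language-equivalence tests the characterization calls for in linear total time, without the $\log n$ overhead of the generic minimization of \autoref{theo:min-quasi}. First, as in the proof of \autoref{theo:RVA-par}, I would assume $\Aut$ minimal and run the algorithm of \autoref{theo:d-par} with $d=1$; this decides in time $\bigO{nb}$ and space $\bigO{n}$ whether $\toInfWord{\Aut}\subseteq\digitSet^{*}\realDot\digitSet^{\omega}$, and if it fails then $\toInfWord{\Aut}$ is not even a $1$-parallel language, hence not saturated, and the algorithm rejects. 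Then check whether $\del{\iniState}{0}=\iniState$; if not, $\Aut$ is not a RVA by \autoref{lem:init-0}, so reject. Once both tests pass, the hypotheses of \autoref{prop:char-RVA} hold, so $\Aut$ is a RVA if and only if for every state $q$ accessible from $\iniState$ and every digit $a\in\digitSet\setminus\set{b-1}$ the automata $\changeIniState{\left(\AutFix{0}{(b-1)}\right)}{\del{q}{a}}$ and $\changeIniState{\left(\AutFix{0}{0}\right)}{\del{q}{a+1}}$ recognize the same $\omega$-language.

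The crux is to perform these $\bigO{nb}$ tests in total time $\bigO{nb}$. After the two preliminary checks, from any state $p$ reachable from $\iniState$ the automaton $\changeIniState{\left(\AutFix{0}{z}\right)}{p}$ (for $z\in\set{0,b-1}$) accepts only words with at most one $\realDot$: if $p$ lies in the fractional part its language is $\emptyset$ or $\set{\square^{\omega}}$, and if $p$ lies in the natural part its language is exactly $\set{\square^{j}\realDot\square^{\omega}\mid b_{z}(\del{p}{z^{j}})=1}$, where $b_{z}(r)$ is the Boolean ``$\changeIniState{\Aut}{\del{r}{\realDot}}$ accepts $z^{\omega}$'', equivalently ``the chain $s\mapsto\del{s}{z}$ starting at $\del{r}{\realDot}$ ultimately cycles inside a strongly connected component contained in $F$''. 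Whether a state lies in the natural or the fractional part, and the value of $b_{z}$ at every state, are computable for all states in time $\bigO{nb}$: Tarjan's algorithm, used as in \autoref{lem:alg:sets}, yields the strongly connected components contained in $F$ in time $\bigO{nb}$, after which a single traversal of each of the two functional graphs $s\mapsto\del{s}{z}$ settles the remaining data in time $\bigO{n}$.

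Consequently, deciding whether $\changeIniState{\left(\AutFix{0}{(b-1)}\right)}{\del{q}{a}}$ and $\changeIniState{\left(\AutFix{0}{0}\right)}{\del{q}{a+1}}$ recognize the same language reduces to comparing the parts of $\del{q}{a}$ and $\del{q}{a+1}$, their $\square^{\omega}$-bits, and, in the natural case, the two eventually periodic bit-sequences obtained by chasing the $(b-1)$-chain from $\del{q}{a}$ through the labels $b_{b-1}$ and the $0$-chain from $\del{q}{a+1}$ through the labels $b_{0}$. To make the last comparison constant time, I would precompute, once and for all, for every state $s$ a canonical identifier of the eventually periodic bit-sequence read along the $(b-1)$-chain from $s$ and of the one read along the $0$-chain from $s$, in a common identifier space: this is a ``same infinite trace'' computation on the disjoint union of the two labelled functional graphs, doable in time linear in the number of vertices by extracting, for each cycle, the primitive period of its label word, grouping cycles by primitive block, and propagating identifiers to the trees attached to the cycles in reverse topological order. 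With these identifiers precomputed, each of the $\bigO{nb}$ tests is a constant-time comparison, so the whole algorithm runs in time $\bigO{nb}$ and working space $\bigO{n}$.

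The main obstacle is exactly this last step: replacing the $\bigO{n\log n}$ minimization of \autoref{theo:min-quasi} by a linear-time procedure. What makes it possible is the conjunction of three facts: the alphabet has constant size; the ``fix a component'' construction turns the natural part of $\AutFix{0}{z}$ into a deterministic $\square$-chain, so each state's $\omega$-language is a single eventually periodic subset of $\N$; and the canonical form of an eventually periodic word over a binary alphabet, together with the grouping of states inducing the same word, can be extracted in linear time by period-finding on cycles and reverse-topological identifier propagation rather than by Hopcroft-style partition refinement. The remaining bookkeeping, namely handling the fractional part, the $\square^{\omega}$ words, and the boundary between the two parts, is routine given \autoref{prop:char-RVA} and the lemmas of \autoref{sec:fix-01}.
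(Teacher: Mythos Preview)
Your route differs from the paper's, and in the right place. The paper's proof asserts that because $\squarphabetDim[b][1]=\{(\square)\}$, each language in \eqref{eq:char-RVA} is either $\emptyset$ or $\{\square^{\omega}\}$, and therefore reduces the whole test to the single emptiness equivalence $\del{q}{a}\in\emptyStates^{(b-1)}\iff\del{q}{a+1}\in\emptyStates^{0}$. You instead notice that the alphabet of $\AutFix{0}{z}$ is $\{\square,\realDot\}$, so from natural-part states the language can be any subset of $\{\square^{j}\realDot\square^{\omega}:j\ge 0\}$, and you compare the full eventually periodic bit-sequences $(b_{z}(\del{p}{z^{j}}))_{j\ge 0}$ rather than just their non-emptiness. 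This extra care is not cosmetic: in base $2$, the minimal weak automaton recognizing $0^{*}\realDot 1^{\omega}\cup 0^{*}10\realDot 0^{\omega}$ is $1$-parallel, satisfies $\del{\iniState}{0}=\iniState$, and passes the paper's emptiness equivalence at every accessible state, yet it is not saturated (it rejects $1\realDot 0^{\omega}$, an encoding of $1$). Your sequence comparison correctly rejects it, since at $q=\iniState$, $a=0$ the two bit-sequences are $1,0,0,\ldots$ and $0,1,0,\ldots$. So what your approach buys over the paper's is correctness; what the paper's buys is brevity at the cost of a gap.

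The one place your outline needs tightening is the claim that the canonical-identifier computation is deterministic linear time. The ingredients you name are the right ones, but you should say explicitly how you canonicalize: use Booth's algorithm to pick the lex-minimal rotation of each cycle's primitive label-word, bucket the primitive words first by length and then by radix sort (binary alphabet, total length $\le 2n$), and then, when propagating to the trees in reverse topological order, assign to a vertex $v$ the same identifier as any already-processed vertex $w$ with $(\text{label}(v),\text{id}(\text{succ}(v)))=(\text{label}(w),\text{id}(\text{succ}(w)))$, so that a tree vertex whose trace coincides with a cycle rotation inherits that rotation's identifier. With a hash-cons this is expected linear; for worst-case linear you need the deterministic variant (e.g., a second radix pass over the $(\text{label},\text{succ-id})$ pairs layer by layer). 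None of this is deep, but it is exactly the step that replaces the $\bigO{n\log n}$ of \autoref{theo:min-quasi}, so it deserves to be spelled out.
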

Note that $\Aut$ is assumed minimal.
\begin{proof}
  Note that $\squarphabetDim[b][1]$ contains exactly one letter,
  $\left(\square\right)$, thus both languages of Equation
  \eqref{eq:char-RVA} are either $\emptyset$ or
  $\left(\square\right)^{\omega}$. It follows that Equation
  \eqref{eq:char-RVA} holds if and only if
  $\del{q}{a}\in\emptyStates^{(b-1)}$ is equivalent to
  $\del{q}{a+1}\in\emptyStates^{0}$, where $\emptyStates^{z}$ is the
  set of states $q$ such that $\changeIniState{\AutFix 0z}q$ accepts
  the empty language.
    
  The algorithm is now given. The algorithm checks whether
  $\del{\iniState}{0}=\iniState$. The algorithm computes
  $\AutFix{0}{(b-1)}$ and $\AutFix{0}{0}$. The algorithm applies the
  algorithm of \autoref{lem:alg:sets} to those two automata in order
  to compute the sets $\emptyStates^{(b-1)}$ and
  $\emptyStates^{0}$. The algorithm runs on each state $q\in Q$
  accessible from $\iniState$, and on each
  $a\in\digitSet\setminus \set{b-1}$. For each $q$ and $a$, the
  algorithm checks whether $\del{q}{a}\in\emptyStates^{(b-1)}$ is
  equivalent to $\del{q}{a+1}\in\emptyStates^{0}$.  Finally, if one of
  the checks fail, the algorithm rejects, otherwise it accepts.

  Let us consider the complexity. The automata $\AutFix{0}{(b-1)}$ and
  $\AutFix{0}{0}$ clearly takes space $\bigO{n}$. Applying the
  algorithm of \autoref{lem:alg:sets} takes time $\bigO{n}$ and takes
  space $\bigO{n}$. For a set $q$ and a letter $a$, checking the
  equivalence is done in constant time and space. Thus, the final loop
  runs in time $\bigO{nb}$ and constant space. Finaly, the whole
  algorithm runs in time $\bigO{nb}$ and space $\bigO{n}$.
\end{proof}
\section{Considering negative reals}\label{sec:neg}
In this section, we consider the case of negative numbers. Given
$\natPart{w}\in\digitSet^{*}$, $aw$ encodes, in $b$-complement
representation, the number $\wordToNatural{\natPart{w}}$ if $a=0$,
$-b^{\length{\natPart{w}}}+\wordToNatural{\natPart{w}}$ if $a=(b-1)$,
and is undefined otherwise. Similarly, given
$\fraPart{w}\in\digitSet^{\omega}$,
$\wordToReal{a\natPart{w}\realDot\fraPart{w}}$ encodes, in 
$b$-complement representation, the number encoded by $a\natPart{w}$,
plus $\wordToFractional{\fraPart{w}}$.

When considering $b$-complement representation, \autoref{theo:rat}
must be changed as follows. For $l$ great enough, a real $q$ has two
encoding whose natural part's length is $l$ if and only if it is of
the form $nb^{p}$ with $n,p\in\Z$. If $q=0$, those encodings are
$(b-1)^{l}\realDot(b-1)^{\omega}$ and $0^{l}\realDot0^{\omega}$,
otherwise they are as in Equation \eqref{eq:theo:rat}.

A characterization of automata accepting saturated languages in
$b$-complement representation is now given. This characterization and
its proof is similar to the ones of \eqref{prop:char-RVA}.
\begin{proposition}
  Let $\Aut=\autPar{Q}{\digitDotSetDim}{\delta}{q_0}{F}$ a weak Büchi
  automaton over alphabet $\digitDotSetDim$. It accepts a saturated
  language in $b$-complement if and only if:
  \begin{enumerate}
  \item\label{relat-copy} $\del{\iniState}{\tu a\tu a}=\del{\iniState}{\tu a}$ for all
    $\tu a\in\set{0,(b-1)}^{d}$,
  \item $\del{\iniState}{\tu a}\in\emptyStates$ for all
    $\tu a\in\digitDotSetDim\setminus\left(\set{0,(b-1)}^{d}\right)$,
  \item for each $f\in[d-1]$, for each $q\in Q\setminus \iniState$,
    accessible in $\Aut$ from $q_{0}$, for each $\tu a\in\digitSetDim$
    with $a_{f}<b-1$, 
    \begin{equation*}
      \toInfWord{\changeIniState{\left(\AutFix{f}{(b-1)}\right)}{\del{q}{\tu{a}}}}=
\toInfWord{\changeIniState{\left(\AutFix{f}{0}\right)}{\del{q}{\tu
a'}}},
    \end{equation*} where $\tu
a'=a_{0}\dots{}a_{f-1}(a_{f}+1)a_{f+1}\dots{}a_{d-1}$
  \item\label{relat:0} \begin{equation*}
\toInfWord{\left(\AutFix{f}{(b-1)}\right)}=
\toInfWord{\left(\AutFix{f}{0}\right)},
    \end{equation*}
  \end{enumerate}
\end{proposition}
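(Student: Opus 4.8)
The plan is to prove the equivalence by mirroring, essentially line by line, the proof of \autoref{prop:char-RVA}; the genuinely new ingredients are the three conditions \eqref{relat-copy}, the clause ``$\del{\iniState}{\tu a}\in\emptyStates$ for $\tu a\notin\set{0,(b-1)}^{d}$'', and \eqref{relat:0}, which together replace the single hypothesis $\del{\iniState}{\tu 0}=\iniState$ of \autoref{prop:char-RVA} and take care of everything involving the sign digit. Throughout I use that $\Aut$ is minimal and that, as in \autoref{theo:RVA-par} (via \autoref{theo:d-par}), one may assume $\Aut$ already accepts a $d$-parallel language, so that only saturation in the $b$-complement sense remains to be characterised. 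I also use the $b$-complement analogue of \autoref{theo:rat} recalled above: for a large enough natural-part length $l$ a real has two $b$-complement encodings of that length precisely when it is of the form $nb^{p}$ with $n,p\in\Z$; these encodings are $0^{l}\realDot 0^{\omega}$ and $(b-1)^{l}\realDot(b-1)^{\omega}$ when the real is $0$, and otherwise two words as in Equation \eqref{eq:theo:rat} whose differing positions lie strictly after the sign digit.

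\textbf{Saturated implies the four conditions.} For \eqref{relat-copy}, prepending a letter of $\set{0,(b-1)}^{d}$ to a word leaves each component's value unchanged (whether the sign digit is $0$ or $b-1$), so $\tu a\tu a\pair{\tu w}{S}$ and $\tu a\pair{\tu w}{S'}$ (with $S'$ the shift of $S$) encode the same vector; saturation then makes $\changeIniState{\Aut}{\del{\iniState}{\tu a\tu a}}$ and $\changeIniState{\Aut}{\del{\iniState}{\tu a}}$ recognise the same language, and minimality gives $\del{\iniState}{\tu a\tu a}=\del{\iniState}{\tu a}$ --- the exact analogue of \autoref{lem:init-0}. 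For the second condition, a $d$-parallel word whose first letter is $\realDot$ or lies in $\digitSetDim\setminus\set{0,(b-1)}^{d}$ is not a valid $b$-complement encoding, so saturation forces it to be rejected, i.e.\ $\del{\iniState}{\tu a}$ accepts the empty language, i.e.\ (by \autoref{lem:char-empty}) $\del{\iniState}{\tu a}\in\emptyStates$. For the third condition, one repeats the proof that \eqref{lem:char-RVA-RVA} implies \eqref{lem:char-RVA-char} in \autoref{prop:char-RVA}: given $f$, an accessible $q\ne\iniState$ and $\tu a$ with $a_{f}<b-1$, write $q=\del{\iniState}{\pair{\tu v}{S}}$ with $\pair{\tu v}{S}$ non-empty; if $q\in\emptyStates$ both sides of \eqref{eq:char-RVA} are empty, otherwise the first letter of $\pair{\tu v}{S}$ is a sign digit by the second condition, so changing component $f$ after $\tu a$ from $(b-1)^{\omega}$ to $0^{\omega}$ (together with $a_{f}\mapsto a_{f}+1$) happens strictly after the sign digit and thus, by the $b$-complement \autoref{theo:rat}, preserves the encoded vector; saturation and \autoref{lem:add-fix-par} then yield the required equality of languages exactly as in \autoref{prop:char-RVA}. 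Finally \eqref{relat:0} holds because, by \autoref{lem:add-fix-par}, $\AutFix{f}{(b-1)}$ and $\AutFix{f}{0}$ accept a word $\pair{\tu w}{S}$ iff $\Aut$ accepts the word got from $\pair{\tu w}{S}$ by placing $(b-1)^{\infty}$, resp.\ $0^{\infty}$, in component $f$; these two words agree off component $f$ and have component-$f$ value $0$ in both cases, so they encode the same vector and saturation equates the two languages.

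\textbf{The four conditions imply saturated.} Let $\pair{\tu w}{\set s}$ and $\pair{\tu w'}{\set{s'}}$ be two $b$-complement encodings of one vector $\mathbf r$ with $\Aut$ accepting the first; I must show $\Aut$ accepts the second (rejection of non-encodings follows from the second condition as above). First, iterating \eqref{relat-copy} to $\del{\iniState}{\tu a^{k}}=\del{\iniState}{\tu a}$ for $k\ge1$ and $\tu a\in\set{0,(b-1)}^{d}$, pad the shorter natural part with copies of its own componentwise leading sign digit; since each word already begins with that sign-digit vector this changes neither the state reached after the natural part nor acceptance, and both words now have $\realDot$ at the common position $l=\max(s,s')$. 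Second, for each component $f$ in which the two padded words still differ in their leading digit, that component's value is forced to $0$ and the two component words are $0^{\omega}$ and $(b-1)^{\omega}$; by \autoref{lem:remove-fix-par}, acceptance of the word whose component $f$ is $0^{\omega}$ is equivalent to $\AutFix{f}{0}$ accepting its $\square$-reduction, which by \eqref{relat:0} is equivalent to $\AutFix{f}{(b-1)}$ accepting that reduction, equivalently (\autoref{lem:remove-fix-par} again) to acceptance of the word whose component $f$ is $(b-1)^{\omega}$; so these flips are legitimate and, performed one component at a time, bring the two words to having the same natural-part length and identical sign digits. By the $b$-complement \autoref{theo:rat} they now agree in each component up to the ambiguity of Equation \eqref{eq:theo:rat}, which lies strictly after the sign digit, so the chain ``\eqref{lem:char-RVA-char} $\Rightarrow$ \eqref{lem:char-RVA-dist} $\Rightarrow$ \eqref{lem:char-RVA-nat-fixed}'' of \autoref{prop:char-RVA} applies verbatim --- it invokes the third condition only at states reached by non-empty prefixes that begin with a sign digit, which for minimal $\Aut$ with non-empty language are states $\ne\iniState$ --- and yields that $\Aut$ accepts $\pair{\tu w'}{\set l}$, hence $\pair{\tu w'}{\set{s'}}$.

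\textbf{Main obstacle.} The genuinely new difficulty is the interface between the sign digit and the ordinary digits. One must check that the dual-rational ambiguity of \autoref{theo:rat} is only ever invoked strictly after the sign digit --- so that the third condition, which deliberately omits $q=\iniState$, is enough --- which comes down to verifying that, for a minimal automaton accepting a non-empty $b$-complement language, no non-empty prefix returns to $\iniState$; and symmetrically one must see that the ambiguity with no analogue over $\R^{\ge0}$, namely the two representations $0^{l}\realDot 0^{\omega}$ and $(b-1)^{l}\realDot(b-1)^{\omega}$ of $0$, is entirely absorbed by \eqref{relat-copy} for the leading sign digits and by \eqref{relat:0} for a zero-valued component. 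Once these boundary cases are isolated, the rest of the argument is the transcription of the proof of \autoref{prop:char-RVA} described above.
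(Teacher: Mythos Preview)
Your proposal follows exactly the approach the paper indicates: the paper does not give a detailed proof of this proposition, stating only that ``this characterization and its proof is similar to the ones of \autoref{prop:char-RVA}''. Your sketch carries out precisely that adaptation, correctly isolating the genuinely new ingredients---condition \eqref{relat-copy}, the sign-digit clause, and condition \eqref{relat:0}---and the boundary case of the two encodings of $0$, which is all the paper asks for.
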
 Property \eqref{relat:0} allows to consider the case
of the real 0. Note that Property \eqref{relat-copy} considers $b^{d}$
letters. It implies that this proposition does not lead to a
polynomial time algorithm in the case of $d$-sequential automata. In
the case of $d$-parallel automata, this proposition leads easily to
algorithms, as \autoref{theo:seq-char} led to \autoref{theo:RVA-par}
and to \autoref{theo:dim-1}.
\begin{theorem} Let $\Aut=\autPar{Q}{\digitDotSetDim}{\delta}{q_0}{F}$
  an automaton with $n$ states reading reals in $b$-complement. It is
  decidable in time $\bigO{n\log(n)db^{d}}$ and space $\bigO{nb^{d}}$
  whether $\Aut$ is a RVA.
\end{theorem}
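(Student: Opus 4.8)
The plan is to mirror the proof of \autoref{theo:RVA-par}, but using the characterization in the proposition stated just above (the one with the four conditions, including \eqref{relat-copy} and \eqref{relat:0}) in place of \autoref{prop:char-RVA}. First I would, without loss of generality, minimize $\Aut$ using \autoref{theo:minimal}; strictly this is not needed, since that characterization applies to any weak $d$-parallel automaton, but it keeps the presentation uniform with the earlier algorithms. The algorithm then checks the four conditions of the proposition in turn and accepts exactly when all of them hold, so correctness is immediate from the proposition.

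For the cheap part, I would compute the set $\emptyStates$ via \autoref{lem:alg:sets} together with the characterization of \autoref{lem:char-empty}, and the states accessible from $\iniState$ by a single graph traversal; both cost time $\bigO{nb^{d}}$ and space $\bigO{n}$. Property \eqref{relat-copy}, namely $\del{\iniState}{\tu a\tu a}=\del{\iniState}{\tu a}$ for $\tu a\in\set{0,b-1}^{d}$, is then $2^{d}$ transition lookups, and the second condition, $\del{\iniState}{\tu a}\in\emptyStates$ for the remaining letters $\tu a\in\digitDotSetDim$, costs at most $b^{d}+1$ more lookups; together $\bigO{b^{d}}$ time and constant extra space.

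The substantial work is the third condition and \eqref{relat:0}, which I would handle together in one loop over $f\in[d-1]$. For each $f$, I build $\AutFix{f}{(b-1)}$ and $\AutFix{f}{0}$, which are weak by \autoref{lem:fix-weak}, and feed their disjoint union to \autoref{theo:min-quasi}, obtaining in time $\bigO{n\log(n)b^{d}}$ and space $\bigO{nb^{d}}$ a morphism $\mu$ onto a common minimal quotient. Then the third condition for this $f$ reduces to checking, for each non-initial state $q$ accessible from $\iniState$ and each $\tu a\in\digitSetDim$ with $a_{f}<b-1$, that the state $\del{q}{\tu a}$ of $\AutFix{f}{(b-1)}$ and the state $\del{q}{\tu a'}$ of $\AutFix{f}{0}$ are $\mu$-equivalent, where $\tu a'$ is $\tu a$ with its $f$-th component increased by one; that is $\bigO{nb^{d}}$ constant-time checks. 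Condition \eqref{relat:0} for this $f$ is the single $\mu$-equivalence test between the two initial states. After processing each $f$ the data structure is discarded and its space reused, so we stay within space $\bigO{nb^{d}}$, and the total time is $\bigO{nb^{d}}$ for the preprocessing plus $d$ iterations costing $\bigO{n\log(n)b^{d}}$ each, that is, $\bigO{n\log(n)db^{d}}$. I do not expect a genuine mathematical obstacle: the only new ingredient compared with the non-negative case is Condition \eqref{relat:0}, which is absorbed into the already-built data structure at no asymptotic cost, and the one thing to be careful about is the complexity bookkeeping — in particular, reusing a single instance of the structure of \autoref{theo:min-quasi} for both the third condition and \eqref{relat:0}, and releasing it between successive values of $f$ so the space does not accumulate.
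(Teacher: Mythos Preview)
Your proposal is correct and follows precisely the route the paper indicates: it says only that ``this proposition leads easily to algorithms, as \autoref{theo:seq-char} led to \autoref{theo:RVA-par}'', and your write-up is exactly that transfer, checking the four conditions of the $b$-complement proposition with the same machinery (\autoref{lem:alg:sets}, \autoref{lem:fix-weak}, \autoref{theo:min-quasi}) and the same per-$f$ loop, with condition \eqref{relat:0} absorbed into the existing data structure. The complexity bookkeeping, including discarding the data structure between values of $f$ to keep the space at $\bigO{nb^{d}}$, is handled correctly.
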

\begin{theorem} Let $\Aut=\autPar{Q}{\digitDotSet}{\delta}{q_0}{F}$ a
  minimal weak Büchi automaton with $n$ states reading reals in
  $b$-complement. It is decidable in time $\bigO{nb}$ and space
  $\bigO{n}$ whether $\toInfWord{\Aut}$ accepts a saturated language
  in $b$-complement.
\end{theorem}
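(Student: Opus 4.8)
The plan is to follow the proof of \autoref{theo:dim-1} almost verbatim, replacing \autoref{prop:char-RVA} by the $b$-complement characterization proposition given above in this section; the single new ingredient is its Property \eqref{relat:0}, which accounts for the two $b$-complement encodings of the real $0$.

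The first step is to specialize that proposition to $d=1$. As in the proof of \autoref{theo:dim-1}, $\squarphabetDim[b][1]$ consists of the single letter $(\square)$, so $\AutFix 0z$ is an automaton over the two-letter alphabet $\squarphabetDotDim[b][1]=\set{\square,\realDot}$ with the same state set as $\Aut$, and it is weak by \autoref{lem:fix-weak}. Write $\emptyStates^{z}$ for the set of states $q$ such that $\changeIniState{\AutFix 0z}{q}$ recognizes the empty language. By the same reduction as in \autoref{theo:dim-1}, the four conditions of the proposition become: condition \eqref{relat-copy}: $\del{\iniState}{aa}=\del{\iniState}{a}$ for $a\in\set{0,b-1}$; the second condition: $\del{\iniState}{a}\in\emptyStates$ for every $a\in\digitDotSet\setminus\set{0,b-1}$; the third condition: for every state $q\ne\iniState$ accessible from $\iniState$ and every $a\in\digitSet\setminus\set{b-1}$, $\del qa\in\emptyStates^{(b-1)}$ if and only if $\del q{a+1}\in\emptyStates^{0}$; and condition \eqref{relat:0}: $\toInfWord{\AutFix 0{(b-1)}}=\toInfWord{\AutFix 00}$. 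For the last condition, condition \eqref{relat-copy} collapses the all-$z$ natural part of $\AutFix 0z$ to the single state $\del{\iniState}{z}$ and the second condition forbids a leading $\realDot$, so $\toInfWord{\AutFix 00}$ and $\toInfWord{\AutFix 0{(b-1)}}$ are each of the simple shape $\emptyset$, $(\square)^{\omega}$, or $\square^{*}\realDot\square^{\omega}$, and their equality is detected by the equivalence of $\iniState\in\emptyStates^{(b-1)}$ with $\iniState\in\emptyStates^{0}$ together with the equivalence, between $z=0$ and $z=b-1$, of ``$\del{\iniState}{z\realDot}$ reaches an accepting cycle on $\square$'s'' (the latter being testable from $\emptyStates^{z}$ or directly by reachability).

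The algorithm then proceeds on $\Aut$ as follows: it checks condition \eqref{relat-copy} directly; it computes $\emptyStates$ by \autoref{lem:alg:sets} (justified by \autoref{lem:char-empty}) and checks the second condition on the at most $b$ candidate letters; it constructs $\AutFix 00$ and $\AutFix 0{(b-1)}$ — each of size $\bigO n$, since their alphabet is fixed and their state set is that of $\Aut$ — applies \autoref{lem:alg:sets} to each to obtain $\emptyStates^{0}$ and $\emptyStates^{(b-1)}$, checks condition \eqref{relat:0} as above, and finally loops over all states $q\ne\iniState$ accessible from $\iniState$ and all $a<b-1$, rejecting as soon as $\del qa\in\emptyStates^{(b-1)}$ fails to be equivalent to $\del q{a+1}\in\emptyStates^{0}$. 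If every check passes, it accepts. Correctness is immediate from the specialized proposition, and \autoref{lem:fix-weak} guarantees that \autoref{lem:alg:sets} applies to each $\AutFix 0z$.

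For the complexity, \autoref{lem:alg:sets} runs in time and space $\bigO n$ on a two-letter-alphabet automaton with $\bigO n$ states and is invoked three times; constructing the two automata costs $\bigO n$; the loop for the third condition performs $\bigO{nb}$ constant-time tests; and condition \eqref{relat-copy}, the second condition, and condition \eqref{relat:0} together cost $\bigO b$. Hence the algorithm runs in time $\bigO{nb}$ and space $\bigO n$. I expect the main obstacle to be condition \eqref{relat:0}: it will take some care to show that, once the other three conditions hold, $\toInfWord{\AutFix 00}$ and $\toInfWord{\AutFix 0{(b-1)}}$ really are degenerate enough for their equality to be captured by the constant-time checks above, and that this equivalence is exactly the requirement that the two $b$-complement encodings of the real $0$ be accepted or rejected together.
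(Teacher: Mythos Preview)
Your proposal is correct and follows essentially the same route as the paper: the paper gives no explicit proof here, only the sentence that the $b$-complement proposition ``leads easily to algorithms, as \autoref{theo:seq-char} led to \autoref{theo:RVA-par} and to \autoref{theo:dim-1}'', and you have faithfully unpacked that analogy for $d=1$. Your treatment of the one genuinely new ingredient, Property~\eqref{relat:0}, is the right idea: once conditions~\eqref{relat-copy} and the second condition hold, the run of $\AutFix 0z$ from $\iniState$ is forced through $\del{\iniState}{z}$ (where it loops on $\square$) and then possibly a $\realDot$, so equality of the two languages reduces to a constant number of membership tests in $\emptyStates^{0}$ and $\emptyStates^{(b-1)}$ --- exactly what is needed to stay within the $\bigO{nb}$ bound rather than the $\bigO{n\log n}$ one would get from a generic call to \autoref{theo:min-quasi}.
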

\section{Conclusion} In this paper, we have proven that it is
decidable in quasi-linear time whether a weak Büchi automaton reading
digits and dots accept a language which encode a saturated set of
vector reals.

Two natural questions remain open.


Can this algorithm be adapted for some classes of automata which are
not weak. Even in the case of dimension 1, it seems complicated to
test whether
$\toInfWord{\changeIniState{\Aut}{\del{q}{0}}}=\toInfWord{\Aut}$, when
the automaton is not weak.

Given an automaton $\Aut$ which accept a set
$R\subseteq \left(\mathbb R^{\ge0}\right)^{d}$, is there some
efficient way to compute a saturated automaton $\Aut'$ which also
accept $R$. One could compute a $\fo{\R,\Z;X_{b},+,<}$-formula
defining $R$, and from this formula a saturated Büchi
automaton. However, this method is inneficient, and does not preserve
weakness.

\bibliographystyle{alpha} \bibliography{../fo}
  
\renewenvironment{theindex}{\begin{multicols}{2}\begin{itemize}}{\end{itemize}\end{multicols}}
\printindex

\end{document}